\definecolor{lgray}{gray}{0.9}
\newcommand{\rulename}[1]{\text{\textsc{#1}}}
\DeclareMathAlphabet{\pazocal}{OMS}{zplmf}{m}{n}
\newcommand{\mcal}[1]{\pazocal{#1}}
\newcommand{\rTo}[1]{\xrightarrow{#1}}
\newcommand{\true}{{\sf tt}}
\newlength{\arrow}
\newcounter{sqindex}
\newcommand{\abacus}{{\rm{\it A\kern-.13emb}\kern-.15em\raise1.5ex\hbox{{\it a}}\kern-.3em{\it C}uS}\xspace}
\newcommand{\goat}{{\rm{\it $\mcal{G}$\kern-.13em o$\mcal{A}$t}}\xspace}
 \newcommand{\rom}[1]{ \textup{(\lowercase\expandafter{\romannumeral#1})}}
\newcommand{\mset}[1]{\{{#1}\}}
\newcommand{\mconfig}[1]{\langle{#1}\rangle}
\newcommand{\qt}[1]{``{#1}"}
\newcommand{\sys}[2]{{#1};{#2}}
\newcommand{\fun}{f}
\newcommand{\func}[2]{\fun^{#1}_{#2}}
\newcommand{\gfunc}[3]{{#1}^{#2}_{#3}} 
\newcommand{\station}[6]{{#1}[{#2}, {#3}, {#4}, {#5}, {#6}]}
\newcommand{\zero}{{\bf 0}}
\newcommand{\varX}{X}
\newcommand{\rec}[1]{{{\bf rec}\ \varX.}{#1}}
\newcommand{\filter}[2]{[\func{d}{}]^{#1}_{#2}}
\newcommand{\bfilter}[2]{[\func{\toall}{}]^{#1}_{#2}}
\newcommand{\sfilter}[2]{[\func{\toself}{}]^{#1}_{#2}}
\newcommand{\ufilter}[2]{[\func{\toup}{}]^{#1}_{#2}}
\newcommand{\gbfilter}[3]{[\gfunc{#1}{\toall}{}]^{#2}_{#3}} 
\newcommand{\gsfilter}[3]{[\gfunc{#1}{\toself}{}]^{#2}_{#3}}
\newcommand{\gufilter}[3]{[\gfunc{#1}{\toup}{}]^{#2}_{#3}}
\newcommand{\grfilter}[3]{[\gfunc{#1}{\toright}{}]^{#2}_{#3}}
\newcommand{\toall}{\star}
\newcommand{\toself}{\bullet} 
\newcommand{\toup}{\blacktriangle}
\newcommand{\toright}{\blacktriangleright}
\newcommand{\parop}{\,|\,}
\newcommand{\plusop}{+}
\newcommand{\id}{\mathit{id}}
\newcommand{\ids}{I}
\newcommand{\idop}[1]{(#1)}
\newcommand{\upd}{\mathtt{upd}}
\newcommand{\node}[2]{#1:#2}
\newcommand{\nparop}{\, \|\,}
\newcommand{\state}{s}
\newcommand{\source}[2]{{#1},{#2}}
\newcommand{\sourcestate}{t}
\newcommand{\nei}{n}
\newcommand{\capa}{k}
\newcommand{\act}{a}
\newcommand{\err}{e}
\newcommand{\red}{\xrightarrow{}}
\newcommand{\proj}[3]{[\![ #1 ]\!]^{#2}_{#3}}
\newcommand{\tproj}[1]{[\![ #1 ]\!]}
\newcommand{\deff}{\triangleq}
\newcommand{\pid}{\mathtt{r}}
\newcommand{\net}{{\cal N}}
\newcommand{\link}{\mcal{L}}
\newcommand{\linktr}[2]{{#1}^{#2}}
\newcommand{\gfilter}[4]{[\gfunc{#1}{#2}{}]^{#3}_{#4}}
\newtheorem{example}{Example}[section]
\newtheorem{definition}{Definition}[section]
\newtheorem{lemma}{Lemma}[section]
\newtheorem{theorem}{Theorem}[section]
\newtheorem{proposition}{Proposition}[section]
\begin{document}
\date{}
\title{Operation Control Protocols in Power Distribution Grids\thanks{Yehia Abd Alrahman has been partially funded by the ERC consolidator grant DSynMA under the European Union’s Horizon 2020 research and innovation programme (grant agreement No 772459).}}

\author[1,2]{Yehia Abd Alrahman\thanks{yehia.abdalrahman@leicester.ac.uk}}
\author[1]{Hugo Torres Vieira\thanks{hugotvieira@imtlucca.it}}
\affil[1]{IMT School for Advanced Studies, Lucca, Italy}
\affil[2]{University of Leicester, Leicester, UK}
\maketitle

\begin{abstract}
Future power distribution grids will comprise a large number of components, 
each potentially able to carry out operations autonomously. Clearly, in order to ensure 
safe operation of the grid, individual operations must be coordinated among 
the different components. Since operation safety is a global property, modelling component 
coordination typically involves reasoning about systems at a global level. In this paper,
we propose a language for specifying grid operation control protocols from a 
global point of view. We show how such global specifications can be used to automatically
generate local controllers of individual components, and that the distributed implementation
yielded by such controllers operationally corresponds to the global specification.
We showcase our development by modelling a fault management scenario
in power grids.
\end{abstract}

{\bf Keywords}: Power Grids, Interaction Protocols, Process Calculus

\section{Introduction}
\label{sec:introduction}
Modern power grids are large systems comprising a large number of geographically dispersed components. For example, the transmission lines of the North 
American power grid link all electricity generation and distribution in the continent~\cite{amin}. Because such grids are highly interconnected, any failure 
even in one location might be magnified as it propagates through the grid's infrastructure to cause instantaneous impacts on a wide area, causing what 
is called \emph{Cascading effects}~\cite{cascade}. This means that power grids might exhibit a global failure almost instantaneously as a result of a local 
one. It is therefore crucial that future power grids are equipped with mechanisms that
enable self-healing, and support automatic power restoration, voltage control, and power flow management. 

Current power grids suffer from the fact that 
control operations and power flow management are done by relying on a central control through the supervisory control and data acquisition (SCADA) 
system~\cite{scada}. Although the operational principles of interconnected power grids have been established ever since the 1960s, the coordination of 
control operations across the infrastructure of a power grid has received less attention~\cite{amin}. This is due to the fact that the operational principles 
were established before the emergence of powerful computers and communication networks. Computation is now largely used in all levels of the power 
grid, but coordination happens on a slower rate. Actually, only specific coordination operations are carried out by computer control while the rest is still 
based on telephone calls between system operators at the utility control centres even during emergencies. Furthermore, protection systems are limited 
to manage specific infrastructure components only.

Clearly, the current centralised control and protection systems cannot cope with the level of dynamicity and demands of modern power grids. Future power 
grids should be both flexible and extensible. Flexibility in the sense of their ability to reconfigure and adapt in response to failures. Extensible in the 
sense that new components can be added or removed from the grid without compromising its overall operation. 
Such type of grids would ease the integration of distributed generation (DG) (e.g., using renewable energy) and energy storage. A promising idea is to allow 
distributed control instead of the centralised one, as proposed by the European SmartGrids Technology Platform~\cite{euogrid} and the IntelliGrid 
project~\cite{intelli}.
%

The idea is to view the power grid as a collection of independent and autonomous substations, collaborating to achieve a desired global goal. It is then 
necessary to equip each substation with an independent controller that is able to communicate and cooperate with others, forming a large distributed 
computing system. Each controller must be connected to sensors associated with its own substation so that it can assess its own status and report them 
to its neighbouring controllers via communication paths. The communication paths of the controllers should follow the electrical connection paths. In this 
way, each substation can have a local view of the grid, represented by the connections to its neighbours. 

Despite the many advances of tools~\cite{aura,mattool,mas}, theoretical foundation to fully design and manage power systems is still lacking~\cite{formal}. 
In this paper, we propose a formal model of distributed power systems (Sect.~\ref{sec:definition}), and we show how automatic synthesis of substation 
controllers can be obtained (Sect.~\ref{sec:corress}). More precisely, we provide a compact high-level language that can be used to specify the operation 
control protocols governing the actual behaviour of the whole power grid from a global point of view. Also, we show how our global model can be used to 
automatically synthesise individual controllers that yield a distributed implementation. To showcase our development we model a fault management 
scenario
(Sect.~\ref{sec:scenario}).

The main advantage of our approach is that it reduces the engineering complexity of power grids by means of  decentralising decision making among 
substations. Thus, power grid management is carried out locally, avoiding overloading the control centre both at the level of the communication 
infrastructure and of operation control. Furthermore, the global model is more amenable to verify (global) system-level properties. Thanks to an 
operational correspondence result between the global model and the automatically generated distributed model, we can also ensure that any property 
that holds for the global specification also holds for the distributed implementation. 

In what follows, we informally present the language that supports
the specification of operation control protocols from a \emph{global}
perspective, i.e., considering the set of interactions as a whole. 
For the sake of a more intuitive reading, we introduce and motivate 
our design choices by pointing to the fault management scenario which will be fully explained in Sect.~\ref{sec:scenario}. In Sect.~\ref{sec:definition} and 
Sect.~\ref{sec:corress}, we present the syntax and the semantics of the 
global and the distributed language respectively. In Sect.~\ref{sec:corress}, we also show how to automatically synthesise the operationally precise behaviour of individual 
substations given the global specification of the protocol. In Sect.~\ref{sec:scenario}, we show how natural and intuitive it is to program a sizeable and 
interesting case study from the realm of power grids using the global syntax and we then show how the synthesis works. Finally, in 
Sect.~\ref{sec:related}, we comment on research directions and present related work.
\subsection{Operation Control Protocols, Informally}
\label{sec:preview}
We consider a scenario involving a power distribution grid where a 
fault occurs in one of the 
transmission power lines, and a recovery protocol that locates and
isolates the fault by means of interactions among the nodes in the grid. 

A central notion in our proposal is that nodes involved in such a protocol yield
control by means of synchronisation actions. For example, 
consider a node that is willing to $\mathsf{Locate}$ the fault and another
one that can react and continue the task of locating the fault.
The two nodes may synchronise on action $\mathsf{Locate}$
and the enabling node yields the control to the reacting one.
We therefore consider that \emph{active} nodes enable synchronisations
and, as a consequence of a synchronisation, transfer the \emph{active} role
 to the reacting node. So, for example, after some node $1$ synchronises 
 with some node $2$ on $\mathsf{Locate}$, node $2$ may enable the following interaction, 
 e.g., a synchronisation with some node $3$ also on $\mathsf{Locate}$.
 
We may therefore specify protocols as a (structured) set of interactions
without identifying the actual nodes involved a priori, since these are determined operationally
due to the transference of the active role in synchronisations. 
We write $\gfilter{\mathsf{Locate}}{\_}{\_}{\_}P$ to specify a protocol stating that a synchronisation on $\mathsf{Locate}$ is to take place first,
after which the protocol proceeds as specified by $P$ (for now we abstract
from the remaining elements using $\_$). 
Also, we write $\idop{1}P'$ to specify
that node $1$ is active on protocol $P'$. So, by 
$\idop{1}\gfilter{\mathsf{Locate}}{\_}{\_}{\_}P$
we represent that node $1$ may enable the synchronisation on $\mathsf{Locate}$.
Furthermore, we may write 
$\idop{1}\gfilter{\mathsf{Locate}}{\_}{\_}{\_}P \rTo{} 
\gfilter{\mathsf{Locate}}{\_}{\_}{\_}\idop{2}P$ to represent such a synchronisation step 
($\rTo{}$) between nodes $1$ and $2$, 
where node $1$ yields the control and node $2$ is activated to carry out
the continuation protocol $P$. This allows to capture the transference of the active role
in the different stages of the protocol. 

Interaction in our model is driven by the network communication topology
that accounts for radial power supply configurations (i.e, 
tree-like structures where root nodes provide power to the respective subtrees),
and for a notion of proximity (so as to capture, e.g., backup links and address
network reconfigurations). 
We therefore consider that nodes can interact if they are in a provide/receive
power relation or in a neighbouring relation. To this end, synchronisation specifications 
include a direction to determine the target of the synchronisation.
For example, by $\gfilter{\mathsf{Locate}}{\toall}{\_}{\_}$ 
we represent that $\mathsf{Locate}$ targets all (child) nodes ($\toall$) that receive
power from the enabling node, used in the scenario 
by the root(s) to search for the
fault in the power supply (sub)tree. Also, by
$\gfilter{\mathsf{Recover}}{\toup}{\_}{\_}$ we
specify that $\mathsf{Recover}$ targets the power provider ($\toup$)
of the enabling node (i.e., the parent node), used in the scenario to signal when the fault
has been found. Finally, by $\grfilter{\mathsf{Power}}{\_}{\_}$ we represent
that $\mathsf{Power}$ targets a neighbouring node ($\toright$), used in the scenario
for capturing power restoration and associated network reconfiguration. 


Combining two of the above example synchronisations by means of 
recursion and summation ($\plusop$), we may then write a simplified fault 
management protocol 
\begin{equation}\label{ex:basicerp}
\rulename{Simple}\triangleq\rec{(\gbfilter{\mathsf{Locate}}{c_1}{c_1 \vee c_2}\varX \plusop \gufilter{\mathsf{Recover}}{c_2}{\true}\zero)}
\end{equation}
which specifies an alternative between $\mathsf{Locate}$ and 
$\mathsf{Recover}$ synchronisations, where in the former case the protocol 
starts over and in the latter case terminates. Also, to support a more fine grained 
description of protocols, the 
synchronisation actions specify conditions (omitted previously with $\_$) that must hold 
for both enabling and reacting nodes in order for a synchronisation to take place. For 
example, only nodes that are at the root of a (sub)tree that has a fault may enable
a synchronisation on $\mathsf{Locate}$, which is captured in condition $c_1$ (left unspecified here). Also, only nodes that satisfy such condition ($c_1$) or that are without power supply
(which is captured in condition $c_2$) can react to $\mathsf{Locate}$. For 
$\mathsf{Recover}$, the specification says that only nodes that are without power 
supply ($c_2$) can enable the synchronisation, while any node can react to it 
(captured by condition true $\true$). The general idea is that 
the $\mathsf{Locate}$ synchronisations propagate throughout the
power supply tree, one level per synchronisation, up to the point the 
fault is located and synchronisation $\mathsf{Recover}$ leads to termination.

The combination of the summation and of synchronisation 
conditions thus allows for a fine-grained specification of the operation control protocol.
Furthermore, the conditions specified for synchronisations are checked against the state of nodes,
so state information is accounted for in our model. To represent the 
dynamics of systems, such state may evolve throughout the stages of the operation 
protocol. We therefore consider that synchronisation actions may have side-effects
on the state of the nodes that synchronise. This allows to avoid introducing 
specialised primitives and simplifies reasoning on protocols, given the atomicity of the 
synchronisation and side-effect in one step. 




The general principles described above, identified in the context of a fault management 
scenario, guided the design of the language presented in the next section. 
Although this is not the case for a single fault, we remark
that our language addresses scenarios where several nodes may be active
simultaneously in possibly different stages of the protocol. 
A detailed account of the fault management scenario is presented in 
Sect.~\ref{sec:scenario}, including a
 protocol designed to account for configurations with several faults.

\section{A Model for Operation Control Protocols}\label{sec:definition} 

The syntax of the language is given in Table~\ref{tab:syntax}, where
we assume a set of node identifiers ($\id, \ldots$),
synchronisation action labels ($\fun, \ldots$), and logical conditions ($c,i,o,\ldots$).
Protocols ($P,Q, \ldots$) combine static specifications
and the \emph{active} node construct. 
The latter is denoted by $\idop{\id}P$,
representing that node $\id$ is active to carry out the protocol $P$.
Static specifications
include termination $\zero$, fork $P \parop Q$ which says
that both $P$ and $Q$ are to be carried out, and infinite behaviour 
defined in terms of the recursion $\rec{P}$ and recursion variable
$\varX$, with the usual meaning. Finally, static specifications include 
synchronisation summations ($S, \ldots$), where $S_1 \plusop S_2$ says that
either $S_1$ is to be carried out or $S_2$ (exclusively), and where
$\filter{o}{i}P$ represents a synchronisation action: a node 
active on $\filter{o}{i}P$ that satisfies condition $o$ may synchronise
on $\fun$ with the node(s) identified by the direction $d$ for which
condition $i$ holds, leading to the activation of the latter node(s)
on protocol $P$. Intuitively, the node active on $\filter{o}{i}P$
enables the synchronisation, which results in the reaction of the 
targeted nodes that are activated to carry out the continuation protocol $P$.

\begin{table}[tbp]
 \caption{Global Language Syntax}
\centering 
\begin{tabular}{ll}
\medskip
\textrm{(Protocol)}  & 
$P ::=  \zero \quad |\quad  P\parop P \quad |\quad \rec{P} \quad |\quad \varX \quad |\quad S\quad |\quad\idop{\id}P$ \\

\medskip

\textrm{(Summation)}   & $S ::=  \filter{o}{i}P\quad |\quad S\plusop S$ \\

\medskip

\textrm{(Direction)}   & 
$d ::=   \toall\quad |\quad \toup \quad |\quad\toright\quad |\quad\toself $ 
\end{tabular}
\label{tab:syntax}
\end{table}

A direction $d$ specifies the target(s) of a synchronisation action, that 
may be of four kinds: $\toall$ targets all (children) nodes to which the enabling node
provides power to; $\toup$ targets
the power provider (i.e., the parent) of the enabler; $\toright$ targets a neighbour of
the enabler; and $\toself$ targets the enabler itself, used
to capture local computation steps.
We remark on the $\toall$ direction given its particular nature: 
since one node can supply power to several others, synchronisations
with $\toall$ direction may actually involve several reacting nodes, up 
to the respective condition. We interpret $\toall$ synchronisations as
broadcasts, in the sense that we take $\toall$ to target all (direct) child 
nodes that satisfy the reacting condition, which hence comprises the 
empty set (in case the node has no children or none of them satisfy the condition).
The interpretation of binary interaction differs, as synchronisation is only possible
if the identified target node satisfies the condition. 
\begin{example}
\label{ex:constructs}
Consider the following protocol, assuming definitions for $\rulename{Recovery}$, 
$\rulename{Isolation}$ and $\rulename{Restoration}$

\medskip{\centerline{$
\idop{\id}(\gbfilter{\mathsf{Locate}}{o_1}{i_1}\rulename{Recovery} \plusop \gsfilter{\mathsf{End}}{o_2}{i_2}
(\rulename{Isolation}\parop \rulename{Restoration} ))
$}}\medskip

\noindent
which specifies node $\id$ is active to synchronise on $\mathsf{Locate}$ or $\mathsf{End}$,
exclusively. 
\end{example}

Example~\ref{ex:constructs} already hints on the two ways of introducing
concurrency in our model. On the one hand, broadcast can lead to the 
activation of several nodes: in the example, each one of the nodes reacting
to $\mathsf{Locate}$ will carry out 
the $\rulename{Recovery}$ protocol. On the other hand, the fork construct 
allows for a single node to carry out two subprotocols, possibly activating
different nodes in the continuation: in the example, 
a node active to carry out $(\rulename{Isolation}\parop \rulename{Restoration} )$ 
may synchronise with different nodes in each one of the branches.

%
%
%

In order to define the semantics of the language, we introduce structural congruence
that, in particular, captures the relation between the active node construct $\idop{\id}P$ 
and protocol specifications (including the active node construct itself and the fork construct).
Structural congruence is the least 
congruence relation on protocols that satisfies the rules given in Table~\ref{tab:struct}. 
The first set of rules captures expected principles, namely that fork
and summation are associative and commutative, and that fork has identity element $\zero$
(we remark that the syntax excludes $\zero$ as a branch in summations).
Rule  $\idop{\id}(P \parop Q)\equiv  \idop{\id}P \parop \idop{\id}Q$ captures the 
interpretation of the fork construct: it is equivalent to specify that a node $\id$ is active 
on a fork, and to specify that a fork has $\id$ active on both branches. Rule 
$\idop{\id_1}\idop{\id_2}P \equiv  \idop{\id_2}\idop{\id_1}P $ says that active
nodes can be permuted and rule $\idop{\id}\zero\equiv  \zero$ says that a node
active to carry out termination is equivalent to the termination itself. Structural congruence 
together with reduction define the operational semantics of the model. Intuitively, 
structural congruence rewriting allows active nodes to ``float'' in the term 
towards the synchronisation actions. 

\begin{example}
Considering the active node distribution in a fork,
we have that

\medskip
$
\gbfilter{\mathsf{Locate}}{o_1}{i_1}\rulename{Recovery} \plusop \gsfilter{\mathsf{End}}{o_2}{i_2}
\idop{\id}(\rulename{Isolation}\parop \rulename{Restoration} )
$

\medskip\noindent
is structural congruent to

\medskip
$
\gbfilter{\mathsf{Locate}}{o_1}{i_1}\rulename{Recovery} \plusop \gsfilter{\mathsf{End}}{o_2}{i_2}
(\idop{\id}\rulename{Isolation}\parop \idop{\id}\rulename{Restoration} )
$
\end{example}

%
\begin{table}[t]
 \caption{Structural Congruence}
\centering 
\begin{tabular}{ccccc}
$P \parop \zero \equiv P$
&\hspace{0.2cm} &
$P_1 \parop (P_2 \parop P_3) \equiv (P_1 \parop P_2) \parop P_3$
& \hspace{0.3cm} &
$P_1 \parop P_2 \equiv P_2 \parop P_1$ 
\\[0.2cm]
 $\rec{P}\equiv  P[\rec{P}/\varX]$
 &  & 
 $S_1 \plusop (S_2 \plusop S_3) \equiv (S_1 \plusop S_2) \plusop S_3$
 &  & 
$S_1 \plusop S_2 \equiv S_2 \plusop S_1$
\\[0.2cm]
 $\idop{\id}(P \parop Q)\equiv  \idop{\id}P \parop \idop{\id}Q$ 
&&
$\idop{\id_1}\idop{\id_2}P \equiv  \idop{\id_2}\idop{\id_1}P $ 
&&
 $\idop{\id}\zero\equiv  \zero$ 
\end{tabular}
\label{tab:struct}
\end{table}
The definition of reduction depends on the network topology and on the fact that
nodes satisfy certain logical conditions. We consider state information for each 
node so as to capture both ``local'' information about the topology (such as the 
identities of the power provider and of the set of neighbours) and other information 
relevant for condition assessment (such as the status of the power supply). 
The network state, denoted by $\Delta$, is a mapping from node identifiers to states, 
where a state, denoted by $\state$, is a register
$\station{\id}{\source{\id'}{\sourcestate}}{\nei}{\capa}{\act}{\err}$
containing the following information:
$\id$ is the node identifier; $\id'$ identifies the power provider; $\sourcestate$
captures the status of the input power connection; $\nei$ is the set of identifiers of neighbouring 
nodes; $\capa$ is the power supply capacity of the node; $\act$ is the number of 
active power supply links (i.e., the number of nodes that receive power from this one); 
and $\err$ is the number of power supply links that are in a faulty state.

As mentioned in Section~\ref{sec:preview}, we check conditions against states for the
purpose of allowing synchronisations.
Given a state $\state$ we denote by $\state \models c$ that state $s$ satisfies condition $c$,
where we leave the underlying logic unspecified.
For example, we may say that $\state \models (\capa > 0)$ to check
that $\state$ has capacity greater than $0$.

Also mentioned in Section~\ref{sec:preview} is the notion of side-effects, in the sense 
that synchronisation actions may result in state changes so as to model system evolution.
By $\upd(\id,\id',\func{d}{},\Delta)$ we denote the operation that yields the network
state obtained by updating $\Delta$ considering node $\id$ synchronises on
$\fun$ with $\id'$, hence the update regards the side-effects of $\fun$
in the involved nodes. Namely, given $\Delta =( \Delta', \id \mapsto \state, \id' \mapsto \state')$ 
we have that $\upd(\id,\id',\func{d}{},\Delta)$ is defined as
$(\Delta', \id \mapsto \func{d}{}!(\state, \id'), \id' \mapsto \func{d}{}?(\state',\id))$,
where $ \func{d}{}!(\state, \id')$ modifies state $\state$ according to the side-effects of 
enabling $\func{d}{}$ and considering $\id'$ is the reactive node (likewise for the reacting 
update, distinguished by $?$). We consider side-effects only for binary
synchronisations ($\toright$ and $\toup$ directions), 
but state changes could also be considered for other directions in similar lines.

The definition of reduction relies on an auxiliary operation, denoted $d(\Delta,\id)$, 
that yields the recipient(s) of 
a synchronisation action, given the direction $d$, the network state $\Delta$, and the 
enabler of the action $\id$. The operation, defined as follows, thus yields 
the power provider of the node in case the direction is $\toup$, (any) one of the neighbours 
in case the direction is $\toright$, all the nodes that have as parent the enabler in case the 
direction is $\toall$, and is undefined for direction $\toself$.

\smallskip{\centerline{$
\begin{array}{llll}
\toup(\Delta, \id) & \deff & \id' \quad (\mbox{if}\; \Delta(\id) = 
\station{\id}{\source{\id'}{\sourcestate}}{\nei}{\capa}{\act}{\err})
\\
\toright(\Delta, \id) & \deff & \id' \quad (\mbox{if}\; \Delta(\id) = 
\station{\id}{\source{\id''}{\sourcestate}}{\nei}{\capa}{\act}{\err} \;\mbox{and}\; 
\id' \in \nei)
\\
\toall(\Delta,\id) & \deff & \{ \id' \ | \ \toup(\Delta,\id') = \id \}
\\
\toself(\Delta,\id) &\deff& \mbox{\it undefined}
\end{array}
$}}\smallskip

The reduction relation is given in terms of configurations 
consisting
of a protocol $P$ and a mapping $\Delta$, for which we use $\sys{\Delta}{P}$ to denote the combination. 
By $\sys{\Delta}{P} \red \sys{\Delta'}{P'}$ we represent that configuration $\sys{\Delta}{P}$ evolves
in one step 
to configuration $\sys{\Delta'}{P'}$, potentially involving state changes ($\Delta$ and $\Delta'$ may differ) and
(necessarily) involving a step in the protocol from $P$ to $P'$. 
\\ \indent
\begin{table}[t]
\caption{Reduction Rules}
\centering 
\begin{tabular}{c}
$
\infer[(\rulename{Bin})]{
\sys{\Delta}{\idop{\id}(\filter{o}{i}P\plusop S)}\red
\sys{\Delta'}{\filter{o}{i}(\idop{{\id'}}P)\plusop S}
}{
d\in\{\toup,\toright\}\quad  \Delta(\id)\models o\quad 
d(\Delta,\id)=\id' \quad \Delta(\id')\models i \qquad\Delta'=\upd(\id,\id',\func{d}{},\Delta)
}$
\\[0.3cm]
$
\infer[(\rulename{Brd})]{
\sys{\Delta}{\idop{\id}(\bfilter{o}{i}P \plusop S)}\red
\sys{\Delta}{\bfilter{o}{i}(\idop{{\tilde{\ids}}}P) \plusop S}
}{
\Delta(\id)\models o
\quad \toall(\Delta,\id)= \ids' \quad 
\ids = \{\id' \ | \ \id' \in \ids' \wedge \Delta(\id') \models i \}
}$
\\[0.3cm]
$
\infer[(\rulename{Loc})]{
\sys{\Delta}{\idop{\id}(\sfilter{o}{i}P\plusop S)}\red
\sys{\Delta}{\sfilter{o}{i}(\idop{\id}P)\plusop S}
}{
\Delta(\id)\models o
\quad \Delta(\id)\models i 
}$
\\[0.3cm]
$
\infer[(\rulename{Synch})]{
\sys{\Delta}{\filter{o}{i}P}\red
\sys{\Delta'}{\filter{o}{i}P'}
}{
\sys{\Delta}{P}\red
\sys{\Delta'}{P'}
}$\qquad

$
\infer[(\rulename{Id})]{
\sys{\Delta}{\idop{\id}P}\red
\sys{\Delta'}{\idop{\id}P'}
}{
\sys{\Delta}{P}\red
\sys{\Delta'}{P'}
}$\qquad

\\[0.3cm]

$
\infer[(\rulename{Sum})]{
\sys{\Delta}{P_1\plusop P_2}\red
\sys{\Delta'}{P'_1\plusop P_2}
}{
\sys{\Delta}{P_1}\red
\sys{\Delta'}{P'_1}
}$\qquad

$
\infer[(\rulename{Par})]{
\sys{\Delta}{P_1\parop P_2}\red
\sys{\Delta'}{P'_1\parop P_2}
}{
\sys{\Delta}{P_1}\red
\sys{\Delta'}{P'_1}
}$

\\[0.3cm]
%

$
\infer[(\rulename{Struct})]{
\sys{\Delta}{P}\red
\sys{\Delta'}{Q}
}{
P\equiv P'\qquad \sys{\Delta}{P'}\red
\sys{\Delta'}{Q'}\qquad Q'\equiv Q
}$

\end{tabular}
\label{tab:red}
\end{table}
Reduction
is defined as the least relation that satisfies the rules shown in Table~\ref{tab:red},
briefly described next.
Rule $\rulename{Bin}$ captures the case of binary interaction, hence when 
the direction ($d$) of the synchronisation action targets either the parent ($\toup$) or  
a neighbour ($\toright$). Protocol $\idop{\id}(\filter{o}{i}P\plusop S)$ says that node
$\id$ is active on a synchronisation summation that includes $\filter{o}{i}P$, so 
$\id$ can enable a synchronisation on $\fun$ provided that the state of $\id$ satisfies 
condition $o$, as specified in premise $\Delta(\id) \models o$. Furthermore,
the reacting node $\id'$, specified in the permise $d(\Delta,\id)=\id'$,
is required to satisfy condition $i$. In such circumstances, the configuration
can evolve and the resulting state is obtained by updating the states of
the involved nodes considering the side-effects of the synchronisation,
and where the resulting protocol $\filter{o}{i}(\idop{{\id'}}P)\plusop S$ specifies that  
 $\id'$ is active on the continuation protocol $P$. Notice that the synchronisation 
 action itself is preserved (we return to this point when addressing the language
 closure rules).
\\ \indent
Rule $\rulename{Brd}$ captures the case of broadcast interaction ($\toall$), following lines
similar to $\rulename{Bin}$. Apart from the absence of state update, the main difference 
is that now a set of potential reacting nodes
is identified ($\ids'$ denotes a set of node identifiers), 
out of which all those satisfying condition $i$ are singled out ($\ids$).
The latter are activated in the continuation protocol, to represent which we use 
$\idop{\tilde{\ids}}$ to
abbreviate $\idop{\id_1}\ldots\idop{\id_m}$ considering $\ids = \id_1,\ldots, \id_m$. We remark
that the set of actual reacting nodes may be empty (e.g., if none of the potential ones
satisfies condition $i$), in which case $(\tilde{\emptyset})P$ is defined as $P$. The 
fact that the reduction step nevertheless takes place, even without actual reacting nodes, 
motivates the choice of the broadcast terminology,
and differs from the binary interaction which is blocked while the targeted 
reacting node does not satisfy the condition.
\\ \indent
Rule $\rulename{Loc}$ captures the case of local computation steps ($\toself$). For the sake
of uniformity we keep (both) output and input conditions that must be met by the
state of the active node. Notice that the node that carries out the $\fun$ step retains control, 
i.e., the same $\id$ is active before and after the synchronisation on $\fun$. Like for broadcast,
we consider local steps do not involve any state update.

Rules for protocol language closure yield the interpretation that nodes can be active
at any stage of the protocol, hence reduction may take place at any level, including
after a synchronisation action (rule $\rulename{Synch}$) and 
within a summation (rule $\rulename{Sum}$). By preserving 
the structure of the protocol, including synchronisation actions that have been carried 
out, we allow for 
participants to be active on (exactly) the same stage of the protocol simultaneously and
at different moments in time.
Rules $\rulename{Id}$ and $\rulename{Par}$ follow the same principle that reduction 
can take place at any point in the protocol term, while accounting for state changes.
Rule $\rulename{Struct}$ closes reduction under structural congruence, so as to
allow the reduction rules to focus on the case of an active node 
that immediately scopes over a synchronisation action summation.

Since we are interested in developing protocols that may be used in different
networks, we consider such development is to be carried out using what we 
call static protocols, i.e., protocols that do
not include the active node construct. Then, to represent a concrete operating 
system, active nodes may be added at ``top-level'' to the static specification 
(e.g., $\idop{\id}\rulename{Recovery}$ where $\rulename{Recovery}$ does 
not include any active nodes), together with the network state.
Also, for the purpose of simplifying protocol design, we consider 
that action labels are unique (up to recursion unfolding). 
As usual, we exclude protocols where recursion appears unguarded by at least
one synchronisation action (e.g., $\rec{\varX}$).

Such notions allow us to streamline the distributed implementation presented 
next. From now on, we only consider well-formed protocols that follow the above
guidelines, i.e., originate from specifications where the active node construct only appears top-level,
all action labels are distinct, and recursion is guarded.

\section{Distributed Implementation}\label{sec:corress}
In this section, we present the automatic translation of global language 
specifications considering a target model where the global 
protocol descriptions are carried out in a distributed way. Namely, we
synthesise the controllers that operate locally in each node from
the global specification, and ensure the correctness of the translation
by means of an operational correspondence result. The development
presented here can therefore be seen as a proof of concept that global
descriptions may be automatically compiled to provably correct 
distributed implementations.

We start by presenting the target model, which consists of a network
of nodes where each node has a state $\state$ (previously introduced) 
and a behaviour given in terms of \emph{definitions}, \emph{reactions}
and \emph{choices}. Intuitively, definitions allow nodes to synchronise
on actions, after which proceeding as specified in reactions. The latter
are defined as alternative behaviours specified in choices.
The syntax of behaviours and networks
is shown in Table~\ref{tab:dsyntax}, reusing syntactic elements
given in the global language (namely 
action labels $\fun$, directions $d$, and conditions $c$).

\begin{table}[t]
 \caption{The Syntax of the Distributed Language}
\centering 
\begin{tabular}{llll}

\smallskip

\textrm{(Definition)} & $D$ & $::=$  &  $\mconfig{c}\fun^{d}?.{R} \quad |\quad R \quad |\quad D\parop D $ \\

\smallskip

\textrm{(Reaction)} & $R$ & $::=$ & $ C \quad |\quad \zero\quad|\quad R \parop R $ \\

\smallskip

\textrm{(Choice)} & $C$ & $::=$ & $\mconfig{c}\fun^{d}! \quad |\quad C \plusop C $ \\

\smallskip

\textrm{(Network)}  & $\net$ & $::=$ &$ \node{\state}{D}\quad |\quad \net \nparop \net$ \\
\end{tabular}
\label{tab:dsyntax}
\end{table}

A definition $D$ may either be a pair of simultaneously active (sub)definitions
 $D_1\parop D_2$, a reaction $R$, or the (persistent) input 
$ \mconfig{c}\fun^{d}?.{R}$. The latter allows a node to react to a synchronisation
on $\fun$ (according to direction $d$), provided that the node satisfies condition $c$,
leading to the activation of reaction $R$. A reaction $R$ can either be a pair of 
simultaneously active (sub)reactions $R_1\parop R_2$ (that can be specified as 
continuation of an input), the inaction $\zero$ or a choice $C$. The latter 
is either a pair of (sub)choices $C_1 \plusop C_2$ or the output 
$\mconfig{c}\fun^{d}!$ that
allows a node to enable a synchronisation on $\fun$ (targeting the nodes
specified by direction $d$), provided that the node satisfies condition $c$.

A network $\net$ is either a pair of (sub)networks 
$\net_1 \nparop \net_2$ or a node $\node{\state}{D}$ which comprises
a behaviour given by a definition ($D$) and a state ($\state$). We recall that
a state is a register of the form $\station{\id}{\source{\id'}{\sourcestate}}{\nei}{\capa}{\act}{\err}$.

The operational semantics of networks is defined by a labelled transition system (LTS), 
which relies on the operational semantics of {definitions}, also defined by an LTS.
We denote by $D_1 \rTo{\alpha} D_2$ that definition $D_1$ exhibits action $\alpha$
and evolves to $D_2$. The actions ranged over by $\alpha$ are 
$\mconfig{c}\fun^{d}?$, $\mconfig{c}\fun^{d}!$, and $\mconfig{c}\fun$.
Action $\mconfig{c}\fun^{d}?$ represents the ability to react to a synchronisation on $\fun$, 
provided that the node satisfies condition $c$, and according to direction $d$. 
Similarly, $\mconfig{c}\fun^{d}!$ represents the ability to enable a synchronisation 
on $\fun$, also considering the condition $c$ and the targeting direction $d$.
A local computation step is captured by $\mconfig{c}\fun$, which also specifies
the label of the action $\fun$ and a condition $c$. The rules that define the LTS 
of {definitions}, briefly explained next, are shown in Table~\ref{tab:behsem}.



\begin{table}[tbp]
\caption{Semantics of Definitions}
\centering 
\begin{tabular}{c}

$
 \mconfig{c}\fun^{d}?.R \rTo{\mconfig{c}\fun^{d}?}  R \parop  \mconfig{c}\fun^{d}?.R \quad\rulename{Inp}
$\qquad\qquad

$
\mconfig{c}\fun^{d}! \rTo{\mconfig{c}\fun^{d}!}  \zero \quad \rulename{Out}
$

\\[0.3cm]

$
\infer[\rulename{Sum}]{
C_1\plusop C_2 \rTo{\alpha} C_1' 
}{C_1\rTo{\alpha} C_1'}$\quad

$
\infer[\rulename{Int}]{
D_1 \parop D_2 \rTo{\alpha} D'_1 \parop D_2
}{D_1\rTo{\alpha} D_1'}$

$\
\infer[\rulename{Self}]{
D_1 \parop D_2 \rTo{\mconfig{c_1\land c_2}\fun} D_1' \parop D_2'
}{D_1\rTo{\mconfig{c_1}\fun^{\toself}!} D_1'
\quad D_2\rTo{\mconfig{c_2}\fun^{\toself}?} D_2'}$
\\[0.3cm]


\end{tabular}
\label{tab:behsem}
\end{table}

Rule $\rulename{Inp}$ states that an input can  
exhibit the corresponding reactive transition, comprising synchronisation
action label $\fun$, condition $c$, and direction $d$. The input results in
the activation of the respective reaction $R$, while the input itself is preserved,
which means that all inputs are persistently available. Likewise, rule 
$\rulename{Out}$ states that an output can exhibit the corresponding
synchronisation enabling transition, after which it terminates.
Rules $\rulename{Sum}$ and $\rulename{Int}$ are standard, 
specifying alternative and interleaving behaviour, respectively.
Rule $\rulename{Self}$ captures local computation steps, 
which are actually the result of a synchronisation between an
output (reaction) and an input (definition) which specify 
direction $\toself$. We remark that both conditions are registered
in the action label of the conclusion (by means of a conjunction),
so the node must satisfy them in order for the computation 
step to be carried out. 
For the sake of brevity, we omit the symmetrical rules of \rulename{Sum}, \rulename{Int}, and \rulename{Self}.
%
%
%
%
%
%
%
%

The operational semantics that defines the LTS of networks is shown in Table~\ref{tab:netsem},
where we denote by $\net_1 \rTo{\lambda} \net_2$ that network $\net_1$ exhibits 
label $\lambda$ and evolves to network $\net_2$. The transition labels ranged over by $\lambda$ 
are $\linktr{\link}{\fun}$ and $\tau$, capturing network interactions and local computation steps. 
The label $ \linktr{\link}{\fun}$ comprises the action label $\fun$ and the communication link $\link$
which represents either binary 
($\id\rightarrow \id$ and $\id\leftarrow\id$) or broadcast ($\id!\toall$ and $\id?\toall$) interaction.
\\
\indent
The binary link $\id_1\rightarrow \id_2$ specifies that node $\id_1$ is willing to enable a synchronisation with node $\id_2$ while $\id_1\leftarrow \id_2$ specifies that node $\id_1$ is willing to react to a synchronisation from node $\id_2$. Furthermore, the broadcast link $\id!\toall$ specifies that node $\id$ is willing to enable a synchronisation with all of its direct children, while $\id?\toall$ specifies that a node is willing to react to 
a broadcast from node $\id$. We use $\id(\state)$ and $i(\state)$ to denote the identities of the node itself and 
of the parent (i.e., if $\state = \station{\id_1}{\source{\id_2}{\sourcestate}}{\nei}{\capa}{\act}{\err}$
then $\id(\state) = \id_1$ and $i(\state) = \id_2$). 
\\
\indent
We consider short-range broadcast in the sense that only direct children of the node $\id$ can be the 
target of the synchronisation on $\linktr{\id!\toall}{\fun}$ (i.e., only nodes that satisfy condition $i(s)=\id$). Moreover, a node can accept a synchronisation from its parent only: (1) if its local definitions are able to react 
to the synchronisation action (i.e., $D\rTo{\mconfig{c}\fun^{\toall}?} D'$ is defined); also, (2) if the current state 
of the node satisfies the condition of the defined reaction (i.e., $D\rTo{\mconfig{c}\fun^{\toall}?} D'$ 
and $\state\models c$). We define a predicate
$\mathit{discard}(\node{\state}{D},\linktr{\id?\toall}{\fun})$ that takes as parameters a node 
$\node{\state}{D}$ and a network label $\linktr{\id?\toall}{\fun}$. This predicate is used to identify 
the case when nodes may discard broadcasts, i.e., when any of the above conditions 
is not satisfied.  

%


\begin{table}[tbp]
	\caption{Semantics of Networks}
\centering 
\begin{tabular}{c}
$
\infer[\rulename{Loc}]{
\node{\state}{D}\rTo{\tau} \node{\state}{D'}
}{D\rTo{\mconfig{c}\fun} D'\quad
\state\models c
}$\qquad
$
\infer[\rulename{oBinU}]{
\node{\state}{D}\rTo{\linktr{\id(\state)\rightarrow i(\state)}{\fun}} \node{\state'}{D'}
}{D\rTo{\mconfig{c}\fun^{\toup}!} D'\quad
\state\models c\quad \state'=\fun^\toup!(\state,i(\state))
}$
\\[0.3cm]
$
\infer[\rulename{oBinR}]{
\node{\state}{D}\rTo{\linktr{\id(\state)\rightarrow \id}{\fun}} \node{\state'}{D'}
}{D\rTo{\mconfig{c}\fun^{\toright}!} D'\quad
\state\models c\quad \state'=\fun^\toright!(\state,\id)\quad \id\in \nei(\state)
}$
\\[0.3cm]
$
\infer[\rulename{iBin}]{
\node{\state}{D}\rTo{\linktr{\id(\state)\leftarrow \id}{\fun}} \node{\state'}{D'}
}{d\in\mset{\toup,\toright}\quad D\rTo{\mconfig{c}\fun^{d}?} D'\quad \state\models c\quad
\state'=\fun^{d}?(\state,\id)
}$
\\[0.3cm]

$
\infer[\rulename{oBrd}]{
\node{\state}{D}\rTo{\linktr{\id(\state)!\toall}{\fun}} \node{\state}{D'}
}{
D\rTo{\mconfig{c}\fun^{\toall}!} D'\; \state\models c
}$
\;
$
\infer[\rulename{iBrd}]{
\node{\state}{ D} \rTo{\linktr{i(\state)?\toall}{\fun}} \node{\state}{D'}
}{D\rTo{\mconfig{c}\fun^{\toall}?} D'\; \state\models c\;
}$
\;

$
\infer[\rulename{dBrd}]{
\node{\state} { D}\rTo{\linktr{\id?\toall}{\fun}} \node{\state}{D}
}
{ \mathit{discard}(\node{\state}{D},\linktr{\id?\toall}{\fun})}$
\\[0.3cm]

$
\infer[\rulename{Com}]{
\net_1 \nparop \net_2\rTo{\gamma(\lambda_1,\lambda_2)}\net'_1 \nparop \net'_2
}{
\net_1\rTo{\lambda_1}\net'_1 \quad \net_2\rTo{\lambda_2}\net'_2
}$
\quad
$
\infer[\rulename{Par}]{
\net_1 \nparop \net_2\rTo{\lambda}\net'_1 \nparop \net_2
}{
\net_1\rTo{\lambda}\net'_1 \quad \lambda\not\in\{\linktr{\id!\toall}{\fun},\ \linktr{\id?\toall}{\fun}\}
}$
\\[0.3cm]
\end{tabular}
\label{tab:netsem}
\end{table}

We briefly describe the rules given in Table~\ref{tab:netsem}, which address
individual nodes and networks. The former, roughly, lift the LTS of 
definitions to the level of nodes taking into account conditions and side-effects
of synchronisation actions. 

Rule \rulename{Loc} states that a node can evolve silently with a $\tau$ transition when its definitions exhibit a local computation step $\mconfig{c}\fun$, provided that the state of the node satisfies the condition of the computation step $\state\models c$. 
Rule \rulename{oBinU} is used to synchronise with the parent node, rule \rulename{oBinR} is used to 
synchronise with a neighbour and rule \rulename{iBin} is used to react to a synchronisation from either a 
child or a neighbour node. More precisely, rules $\rulename{oBinU}$/$\rulename{oBinR}$ and
$\rulename{iBinU}$ express that nodes can respectively exhibit enabling/reactive
transitions provided that local definitions in $D$ exhibit the corresponding 
transitions, with synchronisation action label $\fun$, direction $\toup$ or $\toright$ and condition $c$. 
Also, the condition $c$ is checked against the state $\state$, and the latter
is updated according to the side-effects of the synchronisation (for both
enabling and reacting nodes, as described previously for the global language semantics).
We remark that the rules for binary interaction register in the communication
link the identities of the interacting parties. 

Rule \rulename{oBrd} states that a node can enable a broadcast synchronisation on action $\fun$ if local definitions in $D$ can exhibit the corresponding enabling transition $D\rTo{\mconfig{c}\fun^{\toall}!} D'$, and condition $c$ is satisfied by the local state $\state\models c$. Similarly, rule \rulename{iBrd} states that a node can react to a broadcast synchronisation from its parent only when local definitions in $D$
can exhibit the corresponding transition
$D\rTo{\mconfig{c}\fun^{\toall}?} D'$ and the condition is satisfied by the local state $\state\models c$. 
Otherwise, rule \rulename{dBrd} may be applied, capturing the case when the node discards the 
synchronisation and remains unaltered. 

Rule \rulename{Par} (and its omitted symmetric) governs the interleaving of networks when binary and local actions are observed, i.e, $\lambda\not\in\{\linktr{\id!\toall}{\fun},\ \linktr{\id?\toall}{\fun}\}$. Rule \rulename{Com} governs the synchronisation of networks when either binary or broadcast actions are observed. The function $\gamma(\lambda_1,\lambda_2)$ identifies the resulting label, and is defined as follows. 

\[
\gamma(\lambda_1,\lambda_2)=\left\{
\begin{array}{ll}
\tau & \qquad \text{if}\; \lambda_1=\linktr{\id_1\rightarrow\id_2}{\fun}\ \&\ \lambda_2=\linktr{\id_2\leftarrow\id_1}{\fun}\\
\tau & \qquad \text{if}\; \lambda_1=\linktr{\id_1\leftarrow\id_2}{\fun}\ \&\ \lambda_2=\linktr{\id_2\rightarrow\id_1}{\fun}\\
\lambda_1 & \qquad \text{if}\; \lambda_1=\linktr{\id!\toall}{\fun}\ \&\ \lambda_2=\linktr{\id?\toall}{\fun}\\
\lambda_2 & \qquad \text{if}\; \lambda_1=\linktr{\id?\toall}{\fun}\ \&\ \lambda_2=\linktr{\id!\toall}{\fun}\\
\lambda & \qquad \text{if}\; \lambda_1=\lambda_2=\lambda=\linktr{\id?\toall}{\fun}\\
\bot & \qquad \mbox{otherwise}
\end{array}
\right.
\]

The function returns $\tau$ when synchronisation on a binary action is possible and the enabling label when a synchronisation on a broadcast action is possible, as reported in the first four cases respectively. The fifth case states that both networks can react simultaneously to the same broadcast, otherwise the function is undefined as reported in the last case. We remark that the semantics of broadcasts is presented in a standard way
(cf.~\cite{cbs,forte}).

\subsection{Local Controller Synthesis}
\begin{table}[t]
\caption{Projection}\label{tab:proj}
\[
\begin{array}{lcllll}
\proj{\filter{o}{i}P}{?}{\sigma} & \deff &  \mconfig{i}\fun^{d}?.{\proj{P}{!}{\sigma}} \parop \proj{P}{?}{\sigma} 
& &&\rulename{iSynch}\\

\proj{\filter{o}{i}P}{!}{\sigma} & \deff & \mconfig{o}\fun^{d}! & &&\rulename{oSynch}\\ 

\proj{\filter{o}{i}P}{\id}{\sigma} & \deff & \proj{P}{\id}{\sigma} & &&\rulename{idSynch}
\\
\\
\proj{\idop{\id}P}{\id}{\sigma} & \deff & \proj{P}{\id}{\sigma} \parop \proj{P}{!}{\sigma} & &&\rulename{idNode}\\


\proj{\idop{\id}P}{\pid}{\sigma} & \deff & \proj{P}{\pid}{\sigma} 
 & (\pid \neq \id)\qquad\qquad & &\rulename{pNode}\\ 

\\
\proj{\zero}{\pid}{\sigma} & \deff & \zero & &&\rulename{pNil}\\
\proj{\varX}{\pid}{\sigma}  & \deff & \zero & (\pid \neq \,!) &&\rulename{pVar}
\\
\proj{\varX}{!}{\sigma}  & \deff & \proj{P}{!}{\sigma} & (\sigma(\varX) = P) &&\rulename{oVar}
\\
\proj{\rec{P}}{\pid}{\sigma} & \deff & \proj{P}{\pid}{\sigma[\varX \mapsto P]} & &&\rulename{pRec}\\


\proj{P \parop Q}{\pid}{\sigma} & \deff & \proj{P}{\pid}{\sigma} \parop \proj{Q}{\pid}{\sigma}& &&\rulename{pPar}\\
 
\proj{S_1\plusop S_2}{!}{\sigma} & \deff & \proj{S_1}{!}{\sigma} \plusop \proj{S_2}{!}{\sigma} 
& &&\rulename{oSum}\\

\proj{S_1\plusop S_2}{\pid}{\sigma} & \deff & \proj{S_1}{\pid}{\sigma} \parop \proj{S_2}{\pid}{\sigma} 
& (\pid \neq \,!)  & &\rulename{pSum}
\\
\\
\tproj{\sys{\Delta}{P}} & \deff & \Pi_{\forall \id \in \mathit{dom}(\Delta)} 
(\node{\Delta(\id)}{
\proj{P}{?}{\emptyset} \parop \proj{P}{\id}{\emptyset}}) & &&\rulename{Proj}

\end{array}
\]
\end{table}

We now turn to the translation of global specifications into the target distributed model.
Intuitively, for each synchronisation action specified in the source protocol, there will be
a corresponding reactive definition in every node. This way, we ensure that all nodes are
equipped to react to any synchronisation. Also, in every active node in the 
global specification there will be a corresponding enabling behaviour. 
Since the global model prescribes the transference of the active role in
synchronisations, the distributed implementation will also specify that reactions of
definitions lead to the activation of enabling behaviour. 
\\
\indent
The projection function, defined by the cases reported in Table~\ref{tab:proj}, realises the 
above principles, in particular in the three types of projection (ranged over by $\pid$).
Reactive projection, denoted by $?$, is used to generate the persistent inputs that capture reactive behaviour;
active projection, denoted by $\id$ (for some $\id$), is used to identify and generate the enabling behaviours 
of node $\id$;
and enabling projection, denoted by $!$, is used to generate the outputs that capture enabling behaviour.
We then denote by $\proj{P}{\pid}{\sigma}$ the projection of the protocol $P$ according to 
parameters $\pid$ and $\sigma$. The latter is a mapping from recursion variables to protocols, 
used to memorise recursive protocols and abstract from their unfolding. 
\\
\indent
We briefly explain the cases in Table~\ref{tab:proj}. 
Case $\rulename{iSynch}$ shows the reactive projection of the synchronisation action,
yielding a persistent input with the respective condition $i$, label $\fun$, and direction $d$.
The continuation of the input is obtained by the enabling projection ($!$) of the continuation
of the synchronisation action ($P$). Hence, the reaction leads to the enabling of the continuation
which captures the transference of the active role in synchronisations. The result of
the projection also specifies the (simultaneously active $\parop$) reactive projection of the 
continuation protocol, so as to generate the corresponding persistent inputs.
Case $\rulename{oSynch}$ shows the enabling projection of the synchronisation action,
yielding the output considering the respective condition $o$, label $\fun$, and direction $d$.
Notice that the fact that outputs do not specify continuations is aligned with the idea that 
a node yields the active role after enabling a synchronisation. 

Case $\rulename{idSynch}$ shows the active projection of the synchronisation action,
yielding the active projection of the continuation. The idea is that active projection 
inspects the structure of the protocol, and introduces enabling behaviour whenever
an active node construct specifying the respective $\id$ is found (i.e., $\idop{\id}P$). 
This is made precise 
in case $\rulename{idNode}$, where the projection yields both the enabling projection 
of the protocol $P$ together with its the active projection, so as to address configurations
in which the same node is active in different stages of the protocol. Case $\rulename{pNode}$
instead shows that the other types of projection of the active node construct result in the 
respective projection of the continuation.

Case $\rulename{pNil}$ says that the 
terminated protocol is projected (in all types of projection) to inaction ($\zero$). 
Case $\rulename{pVar}$ says that the active/reactive projections of the recursion variable 
also yield inaction, hence do not require reasoning on the unfolding. 
Instead, case $\rulename{oVar}$ shows the enabling projection of the recursion variable,
yielding the enabling projection of the variable mapping, which allows to account for the
unfolding. This is made precise in case $\rulename{pRec}$ where the mapping is updated
with the association of the variable maps to the recursion body.

We remark that well-formed protocols do not specify active node constructs in the body
of a recursion, since they originate from protocols where all active node constructs are top-level, 
hence the active projection of any recursive protocol necessarily 
yields inaction (nevertheless captured in the general case). 

Case $\rulename{pPar}$ says that (all types of) projection of the fork protocol
yields the (simultaneously active) respective projections of the branches of the fork.
Cases $\rulename{oSum}$ and $\rulename{pSum}$ address the summation protocol:
on the one hand, the enabling projection yields the choice between the projections
of the branches; on the other hand, reactive/active projection yield the
simultaneously active projections of the branches. Notice that choices may only
specify outputs, hence only enabling projection may yield alternative (output) behaviour.
Reactive projection yields a collection of persistent inputs (one 
per synchronisation action) which are simultaneously active.
Active projection generates the enabling behaviour of active nodes, so if such 
active nodes are found in (continuations of) the branches of the summation, 
their enabling behaviour is taken as simultaneously active.

The projection of a configuration, denoted $\tproj{\sys{\Delta}{P}}$ and 
defined in the $\rulename{Proj}$ case, specifies a parallel composition of all
nodes of the network (i.e., all those comprised in the network state). Each 
node is obtained by considering the state yielded by the respective network 
state mapping, and considering the behaviour is yielded by a combination of 
the reactive and the active projections of the protocol $P$. Notice that the
active projection is carried out considering the node identifier, hence the 
result potentially differs between distinct nodes, while the reactive projection
is exactly the same for all nodes. Intuitively, consider the reactive projection
as the static collection of reactive definitions, 
and the active projection as the runtime (immediately 
available) enabling behaviour.
\begin{example} The reactive projection of the \rulename{Simple} protocol in Sect.~\ref{sec:preview}(\ref{ex:basicerp}) is 
\[
\proj{\rulename{Simple}}{?}{\emptyset} = \mconfig{c_1\lor c_2}\mathsf{Locate}^{\star}?.\left(\mconfig{c_1}\mathsf{Locate}^{\star}!\plusop \mconfig{c_2}\mathsf{Recover}^{\toup}!\right) \parop \mconfig{\true}\mathsf{Recover}^{\toup}?.0
\]
\end{example}

\begin{table}[t]
 \caption{Structural Congruence - Definitions and Networks}
\centering 
\begin{tabular}{ccccc}
\\
$D \parop \zero \equiv D$
&\hspace{0.2cm} &
$D_1 \parop (D_2 \parop D_3) \equiv (D_1 \parop D_2) \parop D_3$
& \hspace{0.3cm} &
$D_1 \parop D_2 \equiv D_2 \parop D_1$ 
\\[0.2cm]
$\mconfig{c}\fun^{d}?.{R} \parop \mconfig{c}\fun^{d}?.{R} \equiv \mconfig{c}\fun^{d}?.{R}$
& &
 $D_1 \plusop (D_2 \plusop D_3) \equiv (D_1 \plusop D_2) \plusop D_3$
 &  & 
$D_1 \plusop D_2 \equiv D_2 \plusop D_1$
\\\\
$D_1 \equiv D_2 \Rightarrow  \node{\state}{D_1} \equiv  \node{\state}{D_2}$
& &
$\net_1 \nparop (\net_2 \nparop \net_3) \equiv (\net_1 \nparop \net_2) \nparop \net_3$
& &
$\net_1 \nparop \net_2 \equiv \net_2 \nparop \net_1$ 
\end{tabular}
\label{tab:structnets}
\end{table}

For the purpose of our operational correspondence result, we consider
structural congruence of networks and of behaviours defined by the rules shown in Table~\ref{tab:structnets}. The rules capture expected
principles (namely, that operators $\nparop$, $\parop$, and $\plusop$ are 
associative and commutative, and that $\parop$ has identity element $\zero$)
and an absorbing principle for persistent inputs (i.e., $ \mconfig{i}\fun^{d}?.R \parop 
 \mconfig{i}\fun^{d}?.R \equiv  \mconfig{i}\fun^{d}?.R$), which allows to reason
 about persistent inputs as if they are unique, necessary when considering the
 reacting projection of a recursive protocol and its unfolding. We may now 
 state our operational correspondence result, where we denote by
 $\net_1 \rTo{\lambda}{\!\!_\equiv\;} \net_2$ that there exists $\net'$ such that
 $\net_1 \rTo{\lambda}{} \net'$ and $\net' \equiv \net_2$, and where we use
$\hat{\lambda}$ to range over $\linktr{\id!\toall}{\fun}$ and  
$\tau$ labels.

\begin{theorem}[Operational Correspondence]
\label{the:opcorr}
\begin{enumerate}
\item[]
\item If $\sys{\Delta}{P} \rTo{}{} \sys{\Delta'}{Q}$ then 
$\tproj{\sys{\Delta}{P}} \rTo{\hat{\lambda}}{\!\!_\equiv\;} \tproj{\sys{\Delta'}{Q}}$.
\item If $\tproj{\sys{\Delta}{P}} \rTo{\hat{\lambda}}{} \net$ then
$\sys{\Delta}{P} \rTo{}{} \sys{\Delta'}{Q}$ and $\net \equiv \tproj{\sys{\Delta'}{Q}}$.
\end{enumerate}
\end{theorem}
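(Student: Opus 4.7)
The plan is to prove both directions by induction on the relevant derivations, relying on a handful of supporting lemmas that relate structural congruence, projection, and evaluation under contexts. First I would establish: (i) $\equiv$ is preserved by projection, i.e., $P \equiv Q$ implies $\tproj{\sys{\Delta}{P}} \equiv \tproj{\sys{\Delta}{Q}}$, treating the recursion-unfolding rule with particular care since it relies on the absorbing principle $\mconfig{c}\fun^{d}?.R \parop \mconfig{c}\fun^{d}?.R \equiv \mconfig{c}\fun^{d}?.R$ that justifies identifying the reactive projection of $\rec{P}$ with that of $P[\rec{P}/\varX]$; (ii) $\equiv$ is preserved by $\rTo{\lambda}$ at the network level; and (iii) a contextual substitution lemma stating that projection commutes with the static contexts $\cdot \parop P$, $\filter{o}{i}\cdot$, and $\idop{\id}\cdot$, so that reductions observed inside a subprotocol lift to the whole term.

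For part (1), I would proceed by induction on the derivation of $\sys{\Delta}{P} \rTo{} \sys{\Delta'}{Q}$. In each of the three base cases the reduction operates on some $\idop{\id}(\filter{o}{i}P'\plusop S)$, and cases $\rulename{idNode}$, $\rulename{idSynch}$, and $\rulename{oSum}$ of the projection expose exactly the output $\mconfig{o}\fun^{d}!$ at node $\id$, while case $\rulename{iSynch}$ contributes the persistent input $\mconfig{i}\fun^{d}?.\proj{P'}{!}{\emptyset}$ at every node. For $\rulename{Bin}$, the output synchronises with the input at the target $\id'$ through $\rulename{oBinU}$/$\rulename{oBinR}$ paired with $\rulename{iBin}$ via $\rulename{Com}$, producing a $\tau$ transition; the enabling behaviour $\proj{P'}{!}{\emptyset}$ left at $\id'$ is exactly what the projection of $\filter{o}{i}(\idop{\id'}P')\plusop S$ installs. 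For $\rulename{Brd}$, $\rulename{oBrd}$ emits the label $\linktr{\id!\toall}{\fun}$, and each node either reacts via $\rulename{iBrd}$ when its state satisfies $i$ or discards via $\rulename{dBrd}$ otherwise, so that composed via $\rulename{Com}$ the activated children reproduce $\idop{\tilde{\ids}}$. For $\rulename{Loc}$, the output and input both target $\toself$ and combine via $\rulename{Self}$ yielding a local $\tau$ step under $\rulename{Loc}$. The inductive cases $\rulename{Synch}$, $\rulename{Id}$, $\rulename{Sum}$, and $\rulename{Par}$ follow from the induction hypothesis and lemma (iii), and $\rulename{Struct}$ is closed off by lemmas (i) and (ii).

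For part (2), I would proceed by inversion on the derivation of $\tproj{\sys{\Delta}{P}} \rTo{\hat{\lambda}} \net$, using induction on $P$ to locate the syntactic origin of the active output. The labels in $\hat{\lambda}$ are exactly the ones that cannot be further composed, so a $\tau$ step decomposes either into a $\rulename{Com}$ between matched binary links, or into a $\rulename{Loc}$ whose internal action was produced by $\rulename{Self}$. Tracing the output backwards through $\rulename{oSum}$, $\rulename{oSynch}$, $\rulename{idSynch}$, and $\rulename{idNode}$, and invoking the well-formedness assumption that action labels are unique, I identify the unique subterm $\idop{\id}(\filter{o}{i}P' \plusop S)$ that generated the enabling behaviour; the matching reactive projection must come from the same synchronisation action by the same uniqueness argument. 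Since the condition checks and the $\upd$ operation are defined identically in both semantics, this reconstructs the global $\rulename{Bin}$, $\rulename{Brd}$, or $\rulename{Loc}$ step, and the residual network coincides with the projection of the continuation up to $\equiv$.

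The main obstacle is the broadcast case in both directions, and in particular reconciling the set $\ids$ computed by $\rulename{Brd}$ (the children satisfying $i$) with the per-node choice between $\rulename{iBrd}$ and $\rulename{dBrd}$ used under $\rulename{Com}$ to aggregate the network reaction. In the completeness direction this is aggravated by the fact that $\rulename{Com}$ can be applied in many orderings, so a normalisation argument leveraging lemma (ii) and the symmetric versions of $\rulename{Par}$ and $\rulename{Com}$ is required to line up reacting children correctly and to justify the convention that $(\tilde{\emptyset})P \deff P$. A secondary subtlety is recursion: the enabling projection unfolds through $\sigma$ whereas the reactive projection accumulates persistent inputs that collapse only modulo the absorbing rule, so the $\equiv$ in the conclusion is essential, not cosmetic.
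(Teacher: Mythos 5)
Your plan matches the paper's own proof in both structure and substance: the same split into soundness (induction on the reduction derivation, with base cases \rulename{Bin}/\rulename{Brd}/\rulename{Loc} matched against the projected outputs and persistent inputs) and completeness (inversion on the network transition, tracing the output back to a unique synchronisation action via label uniqueness), supported by the same auxiliary lemmas the paper develops — preservation of projection under structural congruence and recursion unfolding via the absorbing axiom, closure of the LTS under $\equiv$ and under parallel contexts, and invariance of the $!$/$?$-projections under reduction. The subtleties you flag (the per-node $\rulename{iBrd}$/$\rulename{dBrd}$ aggregation for broadcast and the essential role of $\equiv$ for recursion) are exactly the ones the paper's proof handles.
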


Theorem~\ref{the:opcorr} states a one-to-one correspondence in the behaviours
of global specifications and their translations. Notice that the possible transition
labels ($\hat{\lambda}$) exclusively refer to synchronisations, either broadcast 
or binary, and local actions. 

The operational correspondence result is key to allow for the development of protocols
to be carried out in the global language, as any reasoning for the global model 
can be conveyed to the (automatically generated)  implementation.

\section{Fault Management in Power Distribution Grids}\label{sec:scenario}
\begin{figure}[t]
\caption{Power Distribution Grid}
\begin{center}
\includegraphics[scale=.22]{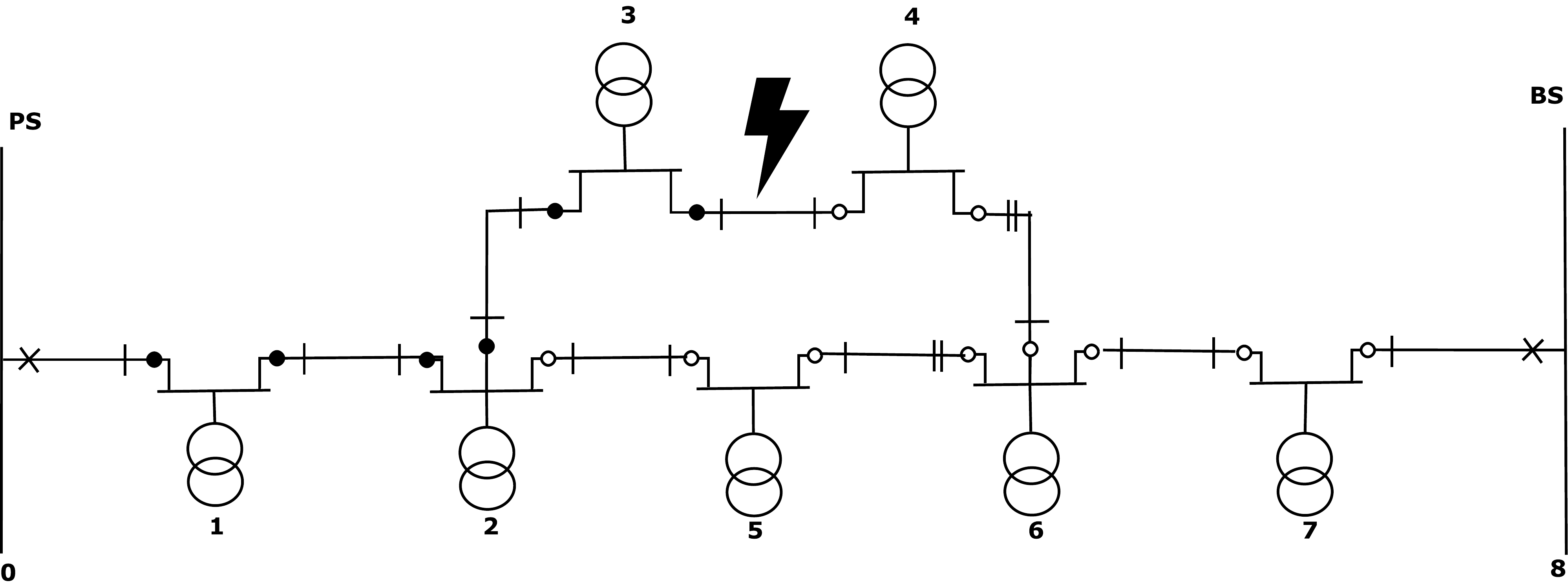} 
\end{center}
\label{fig:pdn}
\end{figure}
In this section, we model a non trivial fault management protocol that handles error detection, localisation, and isolation in power grids. Furthermore, we model autonomous network reconfiguration and power restoration. 
 We specify the behaviour of our protocol from a global point of view (i.e., using the global language) which is more natural and less error-prone
with respect to considering the local viewpoints.  

We consider a cross section of a power grid's network as reported in Fig.~\ref{fig:pdn}. The cross section shows a radial power grid's network with a primary power substation \textbf{PS},  a backup power substation \textbf{BS}, and seven secondary power substations, numbered from $\bf1$ to $\bf 7$. The type of this network is called radial because every substation has only one incoming power input and possibly multiple power outputs. In some sense, the  substation can be viewed as a power router.

Primary and backup substations have circuit breakers $\times$ that open up when a fault occurs in their domain, e.g., a transmission power line breaks. Each secondary substation has fault indicators (fault $\bullet$ and no fault $\circ$), line switches (closed $|$ and open $\|$), and an embedded controller that implements the substation's behaviour and manages interactions with others. Fig.~\ref{fig:pdn} illustrates a configuration where the secondary substations $\bf 1$-$\bf 5$ are energised by the primary substation $\bf PS$, while secondary substations $\bf 6$ and $\bf 7$ are energised by the backup substation $\bf BS$. Secondary substations cannot operate the switches or exchange information without authorisation from the primary substation which supplies the power. 
\\
\indent
Let us consider that a fault occurs in the domain of the primary substation $\bf PS$, e.g., the transmission power line between substations $\bf 3$ and $\bf 4$ breaks. The primary substation can sense the existence of a fault because its circuit breaker $\times$ opens up, but it cannot determine the location of the fault. Hence, the substation $\bf PS$ initiates the fault recovery protocol by synchronising with its directly connected secondary substations and delegates them to activate the local fault recovery protocol. The delegation between secondary substations propagates in the direction given by their fault indicators. 
\\
\indent
A secondary substation first validates the error signal by measuring its voltage level. If the voltage is zero, the secondary substation activates the fault recovery protocol, otherwise the signal is discarded. 
Once a secondary substation activates the fault recovery protocol, it inspects its own fault indicators and determines whether the fault is on its input or output power lines. If the fault is on its output, it delegates the substations connected to its output power lines to collaborate to locate the fault. If the fault is on its input power line, the substation (in our case,  substation $\bf 4$) takes control and initiates both isolation and power restoration. The former consists of isolating the faulty line and restoring the power to the network's segment located before the fault, while the latter consists of restoring the power to the network's segment located after the fault.
\\
\indent
For isolation, substation $\bf 4$ opens its switches to the faulty line segment and collaborates with $\bf 3$ to open its affected line switches as well. Notice that now the domain of the primary substation $\bf PS$ is segmented into two islands: $\{\bf PS,1,2,3,5\}$ and $\{\bf 4\}$. The control is transferred back to the primary substation in a step by step fashion and the power is restored to the first island. For power restoration, substation $\bf 4$ asks for power supply from one of its neighbours that is capable of supplying an additional substation (in our case, substation $\bf 6$). Once substation $\bf 6$ supplies power to $\bf 4$, the latter changes its power source to $\bf 6$ and now substation $\bf 4$ belongs to the domain of the backup substation $\bf BS$.
\\
\indent
We now show how to provide a simple and intuitive global specification for our scenario using the linguistic primitives of the global language. In what follows, we use the following terminology: the state of a source link $\sourcestate$ can be $0$ (to indicate a faulty link) or $1$ otherwise. We will use $z$ in place of the source $\id$ when a substation is not connected to a power supply. The initial state of each substation follows from Fig.~\ref{fig:pdn}. For instance, substations $\bf 3$, $\bf 4$, and $\bf 6$  have the following initial states $\station{\bf 3}{\source{2}{1}}{\{2,4\}}{1}{1}{1}$, $\station{\bf 4}{\source{3}{0}}{\{3,6\}}{1}{0}{0}$, and $\station{\bf 6}{\source{7}{1}}{\{4,5,7\}}{2}{0}{0}$, respectively.
The recovery protocol is reported below. 

\medskip{\centerline{$
\begin{array}{rcl}
\rulename{Recovery} & \triangleq & \rec{(\gbfilter{\mathsf{Locate}}{o_1}{i_1}\varX \plusop \gsfilter{\mathsf{End}}{o_2}{i_2}\rulename{Islanding}) }\\[1ex]
\rulename{Islanding} & \triangleq & \rulename{IsolationStart}\parop \rulename{Restoration} 
\end{array}
$}}\medskip

The protocol states that either $\mathsf{Locate}$ is broadcasted to the children of the enabling substation, after which the protocol starts over, or $\mathsf{End}$ is carried out and the node retains control. In case $\mathsf{End}$ is carried out, the enabling substation proceeds by activating the \rulename{Islanding} protocol. The latter specifies a fork between \rulename{IsolationStart} and the \rulename{Restoration} protocols. 

A substation enabled on $\rulename{Recovery}$ can broadcast $\mathsf{Locate}$ only when it has at least one faulty output, i.e.,  $o_1=(\err>0)$. Furthermore, receiving substations can synchronise on $\mathsf{Locate}$ only if they have fault on their output or their input, i.e., $i_1=(\err>0)\lor (t=0)$. On the other hand, $\mathsf{End}$ can be carried out when the enabling substation has a fault on its input, i.e., $o_2=(t=0)$ and {$i_2=\true$}. Both actions have no side-effects on states.

The \rulename{Isolation} and the \rulename{Restoration} protocols are reported below:

\medskip{\centerline{$
\begin{array}{rcl}
\rulename{IsolationStart} & \triangleq & \gufilter{\mathsf{Recover}}{o_3}{i_3}\zero 
\plusop 
\gufilter{\mathsf{RecoverDone}}{o_3}{i_4} \rulename{Isolation}
\\[1ex]
\rulename{Isolation} & \triangleq &
\rec{(\gufilter{\mathsf{Isolate}}{o_5}{i_3}\zero \plusop \gufilter{\mathsf{IsolateDone}}{o_5}{i_4}\varX \plusop \gsfilter{\mathsf{Stop}}{o_6}{i_2}\zero)})
\\[1ex]
\rulename{Restoration} & \triangleq & \grfilter{\mathsf{Power}}{o_7}{i_7}\zero \\
\end{array}
$}}\medskip

The \rulename{IsolationStart} protocol states that a synchronisation between the enabling substation and its parent on the $\mathsf{Recover}$ or $\mathsf{RecoverDone}$ actions may happen only if the enabling substation has a fault on its input, i.e., $o_3=(t=0)$. The parent reacts on $\mathsf{Recover}$ if the number of faults on its output is greater than one, i.e., $i_3=(\err>1)$. In this case the parent does not proceed, since there are still more faulty links to be handled. Instead, the $\mathsf{RecoverDone}$ synchronisation captures the case for the handling of the last faulty link, i.e., $i_4=(\err=1)$, in which case the parent takes control and proceeds to $\rulename{Isolation}$. In both cases, the enabling station disconnects itself from the faulty line (setting parent to $z$) and the parent decrements its output faults ($\err$), its active outputs ($\act$), and also its capacity ($\capa$). Furthermore, both enabling and parent node remove each other from the list of neighbours, thus isolating the faulty link.
\\ \indent
When the parent is enabled on $\rulename{Isolation}$ then three synchronisation branches on the $\mathsf{Isolate}$, the $\mathsf{IsolateDone}$ and the $\mathsf{Stop}$ actions can be carried out. An enabling substation can synchronise with its parent on an $\mathsf{Isolate}$ or $\mathsf{IsolateDone}$ actions if it is not the primary station, i.e., $o_5=(i\neq\infty)$. As a side-effect on the state of the parent, the number of faults is decremented ($\err$). Notice that the interpretation of the two branches on $\mathsf{Isolate}$ and $\mathsf{IsolateDone}$ is similar to the one on $\mathsf{Recover}$ and $\mathsf{RecoverDone}$. Only the primary station ($o_6=(i=\infty)$) can execute the $\mathsf{Stop}$ action, ending the \rulename{Isolation} protocol. As mentioned before, local actions have no side-effects. 
\\ \indent
Finally, the \rulename{Restoration} protocol states that a disconnected substation, i.e., $o_7=(i=z)$, can synchronise with one of its neighbours ($\toright$) on the $\mathsf{Power}$ action as a request for a power supply. Only a neighbour that has enough power and is fully functional ($i_7=(\capa>\act\land \err=0)$) can engage in the synchronisation. By doing so, the neighbour increments its active outputs ($\act$) and the enabling substation marks its neighbour as its power source. 
The protocol then terminates after the network reconfiguration. Note that although the \rulename{IsolationStart} and the \rulename{Restoration} protocols are specified in parallel, a specific order is actually induced by the synchronisation conditions. Thanks to the output condition $o_7=(i=z)$ of the $\mathsf{Power}$ action, we are sure that the power supply can be reestablished only after the faulty link is disconnected, which is a side-effect of either $\mathsf{Recover}$ or $\mathsf{RecoverDone}$ actions. Thus, the synchronisation on $\mathsf{Power}$ can only happen after the synchronisation on either 
$\mathsf{Recover}$ or $\mathsf{RecoverDone}$.
\\ \indent
The static protocol \rulename{Recovery} abstracts from the concrete network configuration. To represent a concrete grid network, active substations must be added at \qt{top-level} to the \rulename{Recovery} protocol, together with the network state, i.e., $\sys{\Delta}{\idop{\bf PS}\rulename{Recovery}}$ where $\Delta$ is a mapping from a substation identifiers to states. Notice that the primary station, $\bf PS$, is initially active because, according to our scenario, it is the only station that has rights to initiate protocols. 
\\ \indent
We remark that the protocol is designed so as to handle configurations with multiple faults, where several nodes may be active simultaneously on, e.g., 
$\mathsf{Locate}$ albeit belonging to different parts (subtrees) of the network.
\section{Concluding Remarks}
\label{sec:related}
We propose a model of interaction for power distribution grids. More precisely, we show how to specify operation control protocols governing the behaviour of a power gird as a whole from a global perspective. We formalise how global model specifications can be used to automatically synthesise individual controllers of the grid substations, yielding an operationally correct distributed implementation. We also show how to use our global language to model a non-trivial fault management scenario from the realm of power girds. 
Notice that the design principles of our model target power distribution grids specifically,
namely considering that interaction is confined to closely follow the network
topology. However, the principles showcased by our development can be
used when considering other topologies. Conceivably, our principles can also
be conveyed to other settings where operation protocols involve
yielding control as a consequence of synchronisations.
We conclude this paper by relating to existing literature, focusing on formal models in particular, and by mentioning some directions for future work.

Global protocol specifications can be found in the session type literature, spawning 
from the work of Honda et al.~\cite{HYC08}. Such specifications are typically
used to verify programs or guide their development. More recently, we find
proposals that enrich the global protocol specifications so as to allow programming
to be carried out directly in the global language (e.g., choreographies~\cite{CM13}).
We insert our development in this context, since we also provide a global specification
language where operation control protocols can be programmed. The distinguishing
features of our approach naturally stem from the targeted setting. In particular, we consider 
that protocol specifications are role-agnostic and that the interacting parties are established 
operationally in the following sense: emitting/enabling parties are determined from active 
nodes and receiving/reacting parties are determined by the network topology, and
the active role is transferred in synchronisations. 
Previous session-type based approaches typically consider protocols include role 
annotations, which are impersonated by parties at runtime when agreeing to collaborate 
on a protocol. Some works also consider role-agnostic protocol specifications~\cite{CV10},
that roles have a more flexible interpretation, for instance that the association between party
and role is dynamic~\cite{BCVV12} and that several parties may simultaneously impersonate
a single role~\cite{DY11}.
None of such works, as far as we know, considers that
the receiver is determined by the combination of who is the sender and the network 
topology, and that control is yielded in communication.

For future work, it is definitely of interest to develop verification techniques for system level safety properties,
and it seems natural to specify and check such properties considering the global model. 
To reason on protocol correctness and certify operation on all possible configurations, such verification techniques 
should focus on the static specification and abstract from the concrete network configuration, relying in some form of abstract interpretation~\cite{cousot77,cousot79}. 
Thanks to the operational correspondence result that relates the global and distributed models, we can ensure that any 
property that holds for the global specification also holds for the distributed one. The reactive style of the distributed
model suggests an implementation based on the actor model~\cite{actors} (in, e.g., Erlang~\cite{erlang}). Conceivably,
such reactive descriptions can be useful in other realms, in particular when
addressing interaction models based on immediate reactions to external stimulus.
\appendix
\appendix 
\renewcommand{\thesection}{A}

\section{Operational Correspondence}
In this section, we sketch the proof of operational correspondence, separating the proof of soundness 
(Lemma~\ref{lem:sound}) and completeness (Lemma~\ref{lem:completeness}) in two subsections 
(Section~\ref{sec:soundness} and Section~\ref{sec:completeness}, respectively).

\subsection{Operational Correspondence - Soundness}
\label{sec:soundness}

We introduce auxiliary lemmas to structure the proof. 
The following lemma ensures that structurally equivalent networks have equivalent behaviours.
\begin{lemma}[Definition and Network LTS Closure Under Structural Congruence] \label{lem:congstruct}
\begin{enumerate}
\item[]{}
\item
If $D_1 \rTo{{\alpha}}{} D_1'$ and $D_1 \equiv D_2$ then $D_2 \rTo{{\alpha}}{\!\!_\equiv\;} D_1'$.
\item
If $\net_1 \rTo{{\lambda}}{} \net_1'$ and $\net_1 \equiv \net_2$ then $\net_2 \rTo{{\lambda}}{\!\!_\equiv\;} \net_1'$.
\end{enumerate}
\end{lemma}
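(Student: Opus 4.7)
Both parts would be proved by induction on the derivation of $\equiv$, which is the least congruence generated by the axioms in Table~\ref{tab:structnets}. The reflexive, symmetric and transitive closures dispatch trivially (by reflexivity of $\equiv$, by applying the induction hypothesis in both directions, and by composition, respectively). The congruence-closure cases ($\parop$, $\plusop$, $\nparop$, $\node{\state}{\cdot}$) lift the induction hypothesis applied to the subterm by reapplying the matching operational rule. The real content lies in the base axioms, where I would invert the assumed transition $D_1 \rTo{\alpha} D_1'$ (resp.\ $\net_1 \rTo{\lambda} \net_1'$) to read off which operational rule produced it, and then rebuild a matching transition on the other side of the axiom.

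\textbf{Part (1).} The axioms split into: the neutrality of $\zero$ for $\parop$, associativity/commutativity of $\parop$ and $\plusop$, and absorption of persistent inputs. For neutrality and for associativity/commutativity of $\parop$, inversion of $D_1 \rTo{\alpha} D_1'$ yields either \rulename{Int} (possibly its symmetric), in which case the single sub-transition can be reattached after the rearrangement, or \rulename{Self}, in which case the $\mconfig{c_1}\fun^{\toself}!$ and $\mconfig{c_2}\fun^{\toself}?$ premises can be regrouped to the other side of the axiom. The summation axioms dispatch identically via \rulename{Sum}. For absorption $\mconfig{c}\fun^{d}?.R \parop \mconfig{c}\fun^{d}?.R \equiv \mconfig{c}\fun^{d}?.R$ the right-to-left direction uses \rulename{Int} over \rulename{Inp} to produce $(R \parop \mconfig{c}\fun^{d}?.R) \parop \mconfig{c}\fun^{d}?.R$, which is $\equiv R \parop \mconfig{c}\fun^{d}?.R$ by associativity plus one further application of absorption; the reverse direction is immediate from \rulename{Inp}.

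\textbf{Part (2).} The new axioms are associativity/commutativity of $\nparop$ and the lifting $D_1 \equiv D_2 \Rightarrow \node{\state}{D_1} \equiv \node{\state}{D_2}$. The $\nparop$ axioms are handled by inversion against \rulename{Par} (direct, using commutativity of $\nparop$) and against \rulename{Com}, where I would exploit the fact that $\gamma(\lambda_1,\lambda_2)$ is defined symmetrically in all five non-$\bot$ clauses, so the role of the two operands can be swapped, and then associativity is used to move the synchronising pair into a common $\nparop$-subterm before reapplying \rulename{Com}. The node-closure case is where Part (1) is invoked: each of \rulename{Loc}, \rulename{oBinU}, \rulename{oBinR}, \rulename{iBin}, \rulename{oBrd}, \rulename{iBrd} has a single premise of the form $D \rTo{\alpha} D'$, which Part (1) transports to any $D \equiv D''$, so the node rule can be fired verbatim with the same state update and the same label; the discard rule \rulename{dBrd} is purely negative on $D$ and is trivially preserved because structurally congruent definitions have the same set of enabled actions (a corollary of the same induction).

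\textbf{Main obstacle.} The delicate case is absorption of persistent inputs: unlike the other axioms it is not shape-preserving (the two sides have different numbers of $\parop$-components), so the simulation introduces an extra copy of $R$ and of $\mconfig{c}\fun^{d}?.R$ that must be reabsorbed in the residual. Getting the bookkeeping right requires closing the simulation up to $\equiv$ rather than syntactic equality, which is exactly what the notation $\rTo{{\alpha}}{\!\!_\equiv\;}$ accommodates. A secondary subtlety is that structural congruence must be preserved by the \emph{set} of broadcast discards enabled at a node; this is established by a small side argument inside the same induction, and is the reason the statement is formulated using $\equiv$-closure on the right-hand side of the transition rather than plain equality.
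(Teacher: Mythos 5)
Your proposal is correct and follows essentially the same route as the paper: the paper's proof is exactly an induction on the derivation of $\equiv$ (with part~2 relying on part~1 at the node-closure case), left ``in expected lines.'' Your additional identification of the persistent-input absorption axiom as the case forcing closure up to $\equiv$ is accurate and fills in the detail the paper omits.
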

\begin{proof}
The proof follows by induction on the length of the derivation $D_1 \equiv D_2$ and $\net_1 \equiv \net_2$ in expected lines,
where $\mathit{2}$ relies on $\mathit{1}$.
\end{proof}


The following results are used in the proof that structural congruence is preserved by projection, in particular regarding recursion unfolding.
We start by addressing properties of the $!$-projection,
namely, Lemma~\ref{lem:AC} shows that it is preserved under any environment and process substitution 
since only the initial actions are relevant.

\begin{lemma}[Preservation of $!$-Projection]
\label{lem:AC}
Let $P$ be a protocol where recursion is guarded. We have that 
\begin{enumerate}
\item $\proj{P}{!}{\sigma} \equiv \proj{P}{!}{\sigma'}$ for any $\sigma,\sigma'$. 
\item $\proj{P[Q/\varX]}{!}{\sigma} \equiv \proj{P}{!}{\sigma}$.
\end{enumerate}
\end{lemma}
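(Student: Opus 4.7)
The plan is to prove both parts by structural induction on $P$, exploiting that $!$-projection halts at synchronisation prefixes: by rule $\rulename{oSynch}$, $\proj{\filter{o}{i}P'}{!}{\sigma}$ yields the output $\mconfig{o}\fun^{d}!$ regardless of either the continuation $P'$ or the environment $\sigma$. Combined with guardedness---every recursion variable occurrence lies within the continuation of a synchronisation---this means the traversal performed by the projection never needs to resolve a variable via $\sigma$, so $\sigma$ is effectively inert. A useful intermediate observation is that the projection terminates precisely because recursion is guarded (otherwise the $\rulename{pRec}$/$\rulename{oVar}$ rules could loop, e.g.\ on $\rec{\varX}$).

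For part~(1), the base cases $P=\zero$ and $P=\filter{o}{i}P'$ project to constants. The compositional cases $P_1 \parop P_2$, $S_1 \plusop S_2$, and $\idop{\id}P'$ follow by applying the IH to the sub-protocols via rules $\rulename{pPar}$, $\rulename{oSum}$, and $\rulename{pNode}$ (the last since $! \neq \id$). For $P=\rec{P'}$, rule $\rulename{pRec}$ gives $\proj{\rec{P'}}{!}{\sigma} = \proj{P'}{!}{\sigma[\varX \mapsto P']}$, and applying IH to $P'$ bridges the two environments, yielding $\proj{\rec{P'}}{!}{\sigma} \equiv \proj{\rec{P'}}{!}{\sigma'}$. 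The bare-variable case $P=\varX$ does not arise for well-formed (closed, guarded) protocols, since a free variable cannot be the outer constructor of the projected protocol without violating guardedness.

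For part~(2) we induct analogously. The crucial prefix case absorbs the substitution: $\proj{\filter{o}{i}(P'[Q/\varX])}{!}{\sigma} = \mconfig{o}\fun^{d}! = \proj{\filter{o}{i}P'}{!}{\sigma}$, because the continuation is never inspected. The compositional cases apply IH directly. For $P=\rec{P''}$, the standard bound-variable convention makes $(\rec{P''})[Q/\varX] = \rec{P''}$ when the outer $\varX$ is shadowed by the inner binder; otherwise the equality reduces (via $\rulename{pRec}$) to comparing $\proj{P''[Q/\varX]}{!}{\sigma[\varX \mapsto P''[Q/\varX]]}$ with $\proj{P''}{!}{\sigma[\varX \mapsto P'']}$, where we chain IH of part~(2) with part~(1) to reconcile the differing environment updates.

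The main obstacle is this interaction between substitution and recursion in part~(2): the environment on one side carries $P''[Q/\varX]$ and on the other $P''$, and only part~(1) can reconcile these, which is why the two parts are naturally proved jointly. A cleaner alternative is to first establish a ``surface lemma'' stating that $\proj{P}{!}{\sigma}$ for guarded $P$ is determined only by the initial synchronisation prefixes reachable in $P$ without crossing any prefix; both parts then follow as immediate consequences, and the variable/recursion book-keeping disappears entirely.
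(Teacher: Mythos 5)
Your proof is correct and follows essentially the same route as the paper's: a structural induction on $P$ whose driving observation is that $!$-projection stops at synchronisation prefixes (rule $\rulename{oSynch}$ ignores both the continuation and $\sigma$), so guardedness renders the environment and the substitution inert. The paper states exactly this in a two-line sketch; your version merely spells out the recursion and variable cases in more detail.
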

\begin{proof}
By induction on the structure of $P$ following expected lines. Notice that $!$-projection addresses only immediate synchronisation
actions, hence neither the mapping $\sigma$ nor the substitution affect the projection given that recursion is guarded in $P$.
\end{proof}

Lemma~\ref{lem:AC} is used directly in the proof of recursion unfolding for the $!$-projection and also to prove properties 
of the $?$-projection.
Lemma~\ref{lem:D} is also auxiliary to the case of $?$-projection, showing the correspondence between environment and 
process substitution used in $!$-projection.

\begin{lemma}[Soundness of Mapping for $!$-Projection]
\label{lem:D}
Let $Q$ be a protocol where recursion in guarded. We have that $\proj{P}{!}{\sigma[\varX \mapsto Q]} \equiv \proj{P[Q/\varX]}{!}{\sigma}$.
\end{lemma}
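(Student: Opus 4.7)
The plan is to proceed by structural induction on $P$, exploiting the fact that $!$-projection is ``shallow'': by rule $\rulename{oSynch}$ it discards continuations at every synchronisation action, so under the guarded-recursion hypothesis on $Q$ it is invariant under both environment replacement and substitution (the two clauses of Lemma~\ref{lem:AC}). The strategy is then to show that for each protocol constructor either the two projections coincide syntactically after a defining step, or they differ only by a move to which Lemma~\ref{lem:AC} applies.

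The straightforward cases are $P = \zero$, $P = \filter{o}{i}P'$ (rule $\rulename{oSynch}$ yields $\mconfig{o}\fun^{d}!$ on both sides, independently of the continuation), $P = P_1 \parop P_2$, $P = S_1 \plusop S_2$ and $P = \idop{\id}P'$, where the defining rules distribute the $!$-projection compositionally over subterms (the active-node case uses rule $\rulename{pNode}$, exploiting $! \neq \id$), so that the induction hypothesis applies subterm-wise. For the recursion case $P = \rec{P'}$, taking $\varX$ to be the bound recursion variable (so that $(\rec{P'})[Q/\varX] = \rec{P'}$), rule $\rulename{pRec}$ unfolds the left-hand side to $\proj{P'}{!}{\sigma[\varX \mapsto Q][\varX \mapsto P']}$ and the right-hand side to $\proj{P'}{!}{\sigma[\varX \mapsto P']}$; since the later binding shadows the earlier, the two agree syntactically.

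The key case is the variable $P = \varX$. Rule $\rulename{oVar}$ gives $\proj{\varX}{!}{\sigma[\varX \mapsto Q]} = \proj{Q}{!}{\sigma[\varX \mapsto Q]}$, while $\varX[Q/\varX] = Q$ yields $\proj{\varX[Q/\varX]}{!}{\sigma} = \proj{Q}{!}{\sigma}$; Lemma~\ref{lem:AC}(1) applied to $Q$ (which is guarded by hypothesis) equates the two. The main obstacle is therefore not computational but one of bookkeeping: I must be careful that Lemma~\ref{lem:AC} is invoked on the replacement $Q$ (not on $P$, which may itself be an unguarded variable occurrence and hence ineligible), and that a single-recursion-variable convention, or equivalently $\alpha$-conversion on $\rec{P'}$, sidesteps any name-capture concern so that the bound variable shadowing in the $\rec{P'}$ case is genuine rather than accidental.
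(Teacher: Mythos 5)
Your proof is correct and takes essentially the same route as the paper: structural induction on $P$ where every constructor case is either immediate or compositional, with the variable case $P = \varX$ resolved by reducing both sides to $!$-projections of $Q$ under different environments and equating them via Lemma~\ref{lem:AC}(1), exactly as the paper does. Your additional remarks on where Lemma~\ref{lem:AC} must be applied (to the guarded $Q$, not to $P$) and on the single-recursion-variable convention for the $\rec{P'}$ case are consistent with the paper's syntax and merely make explicit what the paper leaves as ``expected lines.''
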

\begin{proof}
By induction on the structure of $P$ following expected lines. Notice that when $P$ is $\varX$ then we obtain
$\proj{Q}{!}{\sigma[\varX \mapsto Q]}$ and $\proj{Q}{!}{\sigma}$ which may be equated considering Lemma~\ref{lem:AC}(\emph{1}).
\end{proof}

We now address properties that directly regard the $?$-projection, starting by Lemma~\ref{lem:B} that shows that processes
in the environment are interchangeable as long as their $!$-projection is equivalent.

\begin{lemma}[Preservation of $?$-Projection] 
\label{lem:B}
Let $Q_1$ and $Q_2$ be protocols where recursion is guarded such that $\proj{Q_1}{!}{\sigma} \equiv \proj{Q_2}{!}{\sigma}$.
We have that $\proj{P}{?}{\sigma[\varX \mapsto Q_1]} \equiv \proj{P}{?}{\sigma[\varX \mapsto Q_2]}$.
\end{lemma}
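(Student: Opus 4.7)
The plan is to proceed by induction on the structure of $P$, exploiting Lemma~\ref{lem:AC}(1) to dispatch the cases in which the $?$-projection appeals to the $!$-projection of a subterm.

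The base cases are immediate: when $P$ is $\zero$ or any recursion variable, rules \rulename{pNil} and \rulename{pVar} yield $\zero$ regardless of the environment, so both projections coincide syntactically. The compositional cases in which projection distributes homomorphically over the protocol structure---namely $P = P_1 \parop P_2$, $P = \idop{\id}P'$, and $P = S_1 \plusop S_2$ (rules \rulename{pPar}, \rulename{pNode}, \rulename{pSum})---follow by applying the induction hypothesis to each component and reassembling using $\parop$ or $\plusop$, both of which are congruences for $\equiv$.

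The substantive case is $P = \filter{o}{i}P'$ (rule \rulename{iSynch}), where the projection yields
\[
\mconfig{i}\fun^{d}?.{\proj{P'}{!}{\sigma[\varX \mapsto Q_j]}} \parop \proj{P'}{?}{\sigma[\varX \mapsto Q_j]}
\]
for $j \in \{1,2\}$. The right-hand component is handled by the induction hypothesis. For the continuation inside the persistent input, the key observation is that the $!$-projection of a guarded protocol is insensitive to the environment: Lemma~\ref{lem:AC}(1) applied to $P'$ immediately gives $\proj{P'}{!}{\sigma[\varX \mapsto Q_1]} \equiv \proj{P'}{!}{\sigma[\varX \mapsto Q_2]}$, so the two persistent inputs are structurally congruent (here using that \rulename{iSynch} produces the same label, direction, and condition on both sides). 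The remaining case $P = \rec{P'}$ (rule \rulename{pRec}) unfolds to a $?$-projection of $P'$ under an extended mapping; assuming by $\alpha$-renaming that the recursion's bound variable is distinct from $\varX$, which preserves guardedness, the induction hypothesis applies to $P'$ provided we re-establish the $Q_1$/$Q_2$ equivalence under the extended mapping. This last step follows from two applications of Lemma~\ref{lem:AC}(1) and transitivity of $\equiv$.

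The main obstacle is bookkeeping rather than conceptual: we must track which projection is sensitive to the environment (the $!$-projection, and only superficially due to guardedness) versus which is not (the $?$-projection of recursion variables), and funnel the equivalence hypothesis through the \rulename{iSynch} case via Lemma~\ref{lem:AC}(1) each time a synchronisation action is encountered. No coinductive argument is needed because structural induction on $P$ terminates, and the environment is consulted only in the $!$-projection, which Lemma~\ref{lem:AC}(1) allows us to discharge without recursing on its contents.
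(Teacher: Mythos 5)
Your overall strategy (structural induction on $P$, with the synchronisation-action case as the crux) matches the paper's, but your justification of that crux case has a genuine gap. In the case $P = \filter{o}{i}P'$ you apply Lemma~\ref{lem:AC}(1) to $P'$ to conclude $\proj{P'}{!}{\sigma[\varX \mapsto Q_1]} \equiv \proj{P'}{!}{\sigma[\varX \mapsto Q_2]}$ "immediately". But Lemma~\ref{lem:AC}(1) requires recursion to be guarded in the protocol it is applied to, and nothing in the statement of the present lemma guarantees this for $P$, let alone for the continuation $P'$. Worse, the continuation of a synchronisation action is exactly where an unguarded occurrence of $\varX$ is supposed to appear: in $\rec{(\gbfilter{\mathsf{Locate}}{c_1}{c_1\vee c_2}\varX \plusop \ldots)}$ the continuation is the bare variable $\varX$. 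For such $P'$ the congruence is not immediate and does not follow from Lemma~\ref{lem:AC}(1) applied to $P'$: by rule $\rulename{oVar}$ one gets $\proj{\varX}{!}{\sigma[\varX \mapsto Q_j]} = \proj{Q_j}{!}{\sigma[\varX \mapsto Q_j]}$, so the two sides are $!$-projections of the two \emph{different} protocols $Q_1$ and $Q_2$. Equating them requires Lemma~\ref{lem:AC}(1) applied to $Q_1$ and $Q_2$ (which \emph{are} guarded, by assumption) to strip the environments, followed by the hypothesis $\proj{Q_1}{!}{\sigma} \equiv \proj{Q_2}{!}{\sigma}$. This is precisely the step the paper's proof singles out. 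The tell-tale symptom is that your treatment of the $\rulename{iSynch}$ case never invokes the hypothesis of the lemma: if Lemma~\ref{lem:AC}(1) really applied to every continuation, the hypothesis would be superfluous and the lemma would hold for arbitrary unrelated $Q_1, Q_2$, which it does not.

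The repair is local: in the $\rulename{iSynch}$ case, establish $\proj{P'}{!}{\sigma[\varX \mapsto Q_1]} \equiv \proj{P'}{!}{\sigma[\varX \mapsto Q_2]}$ by a small auxiliary induction on $P'$ whose only non-trivial case is $P' = \varX$, handled via $\rulename{oVar}$, two applications of Lemma~\ref{lem:AC}(1) to $Q_1$ and $Q_2$, and the hypothesis. Your remaining cases ($\zero$ and $\varX$ under $?$-projection, the homomorphic cases, and $\rulename{pRec}$ with the hypothesis re-established under the extended mapping) are sound and in line with the paper.
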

\begin{proof}
By induction on the structure of $P$ following expected lines. Notice that when $P$ is $\varX$ then we obtain
$\proj{Q_1}{!}{\sigma[\varX \mapsto Q_1]}$ and $\proj{Q_2}{!}{\sigma[\varX \mapsto Q_2]}$
which may be equated considering Lemma~\ref{lem:AC}(\emph{1}) and the hypothesis.
\end{proof}

The following result (Lemma~\ref{lem:E}) is key to the proof of the unfolding, as it shows that two copies of a $?$-projection 
are equivalente to one.

\begin{lemma}[Replicability of $?$-Projection] 
\label{lem:E}
We have that $\proj{P}{?}{\sigma} \parop \proj{P}{?}{\sigma} \equiv \proj{P}{?}{\sigma}$.
\end{lemma}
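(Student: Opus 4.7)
The plan is to proceed by structural induction on $P$, leveraging the absorbing rule for persistent inputs ($\mconfig{c}\fun^{d}?.{R} \parop \mconfig{c}\fun^{d}?.{R} \equiv \mconfig{c}\fun^{d}?.{R}$) together with associativity, commutativity, and identity ($\zero$) of $\parop$ from Table~\ref{tab:structnets}. The shape of the $?$-projection rules in Table~\ref{tab:proj} essentially guarantees that $\proj{P}{?}{\sigma}$ is (up to $\equiv$) a parallel composition of persistent inputs, and the lemma then amounts to saying that each such input is absorbed with its duplicate.

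The base cases are immediate. For $P = \zero$ and $P = \varX$, rules $\rulename{pNil}$ and $\rulename{pVar}$ yield $\proj{P}{?}{\sigma} = \zero$, and $\zero \parop \zero \equiv \zero$ by the identity rule. The structural cases that simply distribute the projection — namely $\idop{\id}P'$ via $\rulename{pNode}$, $P_1 \parop P_2$ via $\rulename{pPar}$, and $S_1 \plusop S_2$ via $\rulename{pSum}$ — follow by rearranging the two copies with AC of $\parop$ to pair up homologous sub-projections, and then applying the induction hypothesis to each pair. The recursion case $P = \rec{P'}$ unfolds under $\rulename{pRec}$ to $\proj{P'}{?}{\sigma[\varX \mapsto P']}$; since $P'$ is a structurally smaller protocol (and guardedness ensures well-foundedness for the $?$-projection, as $\varX$ itself projects to $\zero$), the induction hypothesis applies directly with the extended mapping.

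The central case is the synchronisation action $P = \filter{o}{i}P'$, where $\rulename{iSynch}$ gives
\[
\proj{\filter{o}{i}P'}{?}{\sigma} \;\equiv\; \mconfig{i}\fun^{d}?.{\proj{P'}{!}{\sigma}} \parop \proj{P'}{?}{\sigma}.
\]
Taking two copies and using AC of $\parop$ to group the two identical persistent inputs together and the two $?$-projections of $P'$ together, I apply the absorbing rule to collapse the pair of persistent inputs into a single one, and the induction hypothesis on $P'$ to collapse the pair $\proj{P'}{?}{\sigma} \parop \proj{P'}{?}{\sigma}$ into $\proj{P'}{?}{\sigma}$. Reassembling yields exactly $\proj{\filter{o}{i}P'}{?}{\sigma}$.

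The main (and only real) subtlety is in the synchronisation case: one must verify that the two persistent inputs produced by the two copies of the projection are syntactically identical so that the absorbing rule applies. This is transparent because the projection is a deterministic function of $P$ and $\sigma$, so both copies produce the same condition $i$, label $\fun$, direction $d$, and continuation $\proj{P'}{!}{\sigma}$. No further appeal to Lemmas~\ref{lem:AC}--\ref{lem:B} is needed here (they were required only to handle recursion unfolding in the earlier projection lemmas), since the two environments coincide exactly. Overall the proof is a routine structural induction with no essential obstacle beyond careful bookkeeping of the AC rearrangements.
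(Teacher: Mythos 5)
Your proof is correct and follows essentially the same route as the paper's: a structural induction on $P$ where every case except the synchronisation action goes through by the induction hypothesis and the AC/identity laws for $\parop$, and the synchronisation case additionally invokes the absorbing axiom $\mconfig{c}\fun^{d}?.{R} \parop \mconfig{c}\fun^{d}?.{R} \equiv \mconfig{c}\fun^{d}?.{R}$. The paper gives only this one-line sketch; your write-up fills in the same argument in more detail, including the correct observation that the lemma must be read as quantified over $\sigma$ so that the recursion case can apply the induction hypothesis under the extended mapping.
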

\begin{proof}
By induction on the structure of $P$. The proof follows by induction hypothesis in expected lines for all cases except for
the synchronisation action which also relies on axiom $\mconfig{c}\fun^{d}?.{R} \parop \mconfig{c}\fun^{d}?.{R} \equiv \mconfig{c}\fun^{d}?.{R}$.
\end{proof}

The main auxiliary properties for the case of recursion unfolding regarding $?$-projection are given by the next result.
Lemma~\ref{lem:FH}(\emph{1}) equates the substitution with the environment, considering the involved process is in context.
Lemma~\ref{lem:FH}(\emph{2}) equates the substitution with the environment, considering a context with some (input) residua.

\begin{lemma}[Soundness of Mapping for $?$-Projection]
\label{lem:FH}
Let $Q$ be a protocol where recursion is guarded.
\begin{enumerate}
\item 
We have that $\proj{P}{?}{\sigma[\varX \mapsto Q]} \parop \proj{Q}{?}{\sigma} \equiv  \proj{P[Q/\varX]}{?}{\sigma} \parop \proj{Q}{?}{\sigma} $.
\item
We have that $ \proj{P[Q/\varX]}{?}{\sigma} \equiv  \proj{P}{?}{\sigma[\varX \mapsto Q]} \parop \Pi_{i \in I} \mconfig{c_i}\fun_{i}^{d_i}?.{R_i}$.
\end{enumerate}
\end{lemma}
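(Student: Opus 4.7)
The plan is to prove both items by induction on the structure of $P$, relying on the preservation and replicability properties already established (Lemmas~\ref{lem:AC}--\ref{lem:E}). For item~(1), the key base case is $P = \varX$: the left-hand side reduces to $\zero \parop \proj{Q}{?}{\sigma}$ since $\proj{\varX}{?}{\sigma'} = \zero$ for every $\sigma'$, while the right-hand side becomes $\proj{Q}{?}{\sigma} \parop \proj{Q}{?}{\sigma}$ using $\varX[Q/\varX] = Q$; the two coincide by the replicability lemma (Lemma~\ref{lem:E}). For item~(2) in this same base case, we take the family $\Pi_{i \in I}\mconfig{c_i}\fun_{i}^{d_i}?.R_i$ to be $\proj{Q}{?}{\sigma}$, which the projection rules guarantee is indeed a parallel composition of persistent inputs. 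The remaining base cases ($\zero$, or any other recursion variable distinct from $\varX$ if one is introduced during $\alpha$-renaming) are immediate, with the residual family taken empty.

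The structural inductive cases for $\parop$, $\plusop$ and the active-node construct $\idop{\id}P'$ are direct: $?$-projection distributes over these constructs, so we apply the inductive hypothesis to each sub-component and combine, collapsing the second copy of $\proj{Q}{?}{\sigma}$ via Lemma~\ref{lem:E} in item~(1), or concatenating the residual families of persistent inputs in item~(2). The recursion case $P = \rec{P'}$ is trivial: $\varX$ is bound inside $\rec{P'}$, so $(\rec{P'})[Q/\varX] = \rec{P'}$ and the inner environment update $\sigma[\varX \mapsto P']$ overrides the outer $\sigma[\varX \mapsto Q]$, making both sides literally equal (with residual family empty for item~(2)).

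The main work lies in the synchronisation-action case $P = \filter{o}{i}P'$, where the $?$-projection unfolds as $\mconfig{i}\fun^{d}?.\proj{P'}{!}{\sigma'} \parop \proj{P'}{?}{\sigma'}$ for the ambient environment $\sigma'$. The input continuations involve the $!$-projection, and the discrepancy between evaluating it under $\sigma[\varX \mapsto Q]$ and under $\sigma$ after the substitution $[Q/\varX]$ is exactly what Lemma~\ref{lem:D} is designed to bridge, giving $\proj{P'}{!}{\sigma[\varX \mapsto Q]} \equiv \proj{P'[Q/\varX]}{!}{\sigma}$. The residual $?$-projections of $P'$ are then equated by the inductive hypothesis, and the overall congruence is lifted through the persistent-input prefix by the congruence property of $\equiv$. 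The principal obstacle throughout is keeping the bookkeeping between syntactic substitution and environment updates coherent; Lemmas~\ref{lem:AC}, \ref{lem:D} and \ref{lem:B} are precisely the tools crafted for this bookkeeping, so the proof reduces to a disciplined case analysis rather than new technical work.
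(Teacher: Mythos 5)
Your induction skeleton, the base case for $\varX$ via Lemma~\ref{lem:E}, the synchronisation-action case via Lemma~\ref{lem:D}, and the treatment of the structural cases all coincide with the paper's proof. The gap is in the recursion case, which you dismiss as trivial on the grounds that the bound variable shadows the substituted one. That disposes only of the sub-case where the recursion binds exactly the variable being substituted; it does not cover the sub-case where $P = \rec{P'}$ binds a variable $\varX$ distinct from the substituted variable $\varX'$, in which the substitution $[Q/\varX']$ genuinely penetrates the binder and the two sides are \emph{not} literally equal: the left-hand side unfolds to $\proj{P'}{?}{\sigma[\varX' \mapsto Q][\varX \mapsto P']}$ while the right-hand side unfolds to $\proj{P'[Q/\varX']}{?}{\sigma[\varX \mapsto P'[Q/\varX']]}$, and the two environments record different bodies for $\varX$ (namely $P'$ versus $P'[Q/\varX']$). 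This is precisely the case the paper spends most of its effort on: after applying the induction hypothesis one still has to exchange $\sigma[\varX \mapsto P']$ for $\sigma[\varX \mapsto P'[Q/\varX']]$ inside a $?$-projection, which requires Lemma~\ref{lem:AC}(\emph{2}) (guardedness gives $\proj{P'[Q/\varX']}{!}{\sigma} \equiv \proj{P'}{!}{\sigma}$) followed by Lemma~\ref{lem:B} (environments mapping $\varX$ to protocols with equivalent $!$-projections yield equivalent $?$-projections). You cite Lemma~\ref{lem:B} only as generic ``bookkeeping'' and never deploy it in any concrete case, which is the symptom of the missing step: without it the recursion case does not close.

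This matters beyond pedantry about $\alpha$-conventions, because the lemma's intended client is the unfolding result (Lemma~\ref{lem:recdel}), where the substituted term $Q$ is itself $\rec{P}$ and the environment/substitution mismatch inside nested binders is exactly the situation that arises. Please restate the recursion case distinguishing the two sub-cases (bound variable equal to, versus distinct from, the substituted one) and carry out the distinct-variable sub-case with Lemma~\ref{lem:AC}(\emph{2}) and Lemma~\ref{lem:B} as above; the rest of your argument can stand as written.
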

\begin{proof}
By induction on the structure of $P$. We sketch the proof of \emph{1.}, the proof of \emph{2.} follows similar lines.
\begin{description}
\item[Case $P$ is $\zero$] We have that 
$\zero \parop \proj{Q}{?}{\sigma} \equiv  \zero \parop \proj{Q}{?}{\sigma} $.

\item[Case $P$ is $\varX$] Since $\proj{\varX}{?}{\sigma[\varX \mapsto Q]}$ by definition is $\zero$, we have that 
$\zero \parop \proj{Q}{?}{\sigma} \equiv  \proj{Q}{?}{\sigma} \parop \proj{Q}{?}{\sigma} $ and the proof follows
by considering Lemma~\ref{lem:E}.

\item[Case $P$ is $\idop{\id}P'$ or $P_1 \parop P_2$ or $S_1 \plusop S_2$] The proof follows from the induction hypothesis in expected lines.

\item[Case $P$ is $\filter{o}{i}P'$] Since $\proj{\filter{o}{i}P'}{?}{\sigma[\varX \mapsto Q]}$ by definition is 
$\mconfig{i}\fun^{d}?.{\proj{P'}{!}{\sigma[\varX \mapsto Q]}} \parop \proj{P'}{?}{\sigma[\varX \mapsto Q]}$
we have that (\emph{i}) $\proj{\filter{o}{i}P'}{?}{\sigma[\varX \mapsto Q]} \parop \proj{Q}{?}{\sigma} \equiv
\mconfig{i}\fun^{d}?.{\proj{P'}{!}{\sigma[\varX \mapsto Q]}} \parop \proj{P'}{?}{\sigma[\varX \mapsto Q]} \parop \proj{Q}{?}{\sigma}$.
Considering Lemma~\ref{lem:D} we have that (\emph{ii}) $\mconfig{i}\fun^{d}?.{\proj{P'}{!}{\sigma[\varX \mapsto Q]}}
\equiv \mconfig{i}\fun^{d}?.{\proj{P'[Q/\varX]}{!}{\sigma}}$.
By induction hypothesis we conclude that (\emph{iii}) $ \proj{P'}{?}{\sigma[\varX \mapsto Q]} \parop  \proj{Q}{?}{\sigma} \equiv 
\proj{P'[Q/\varX]}{?}{\sigma} \parop \proj{Q}{?}{\sigma} $. Considering both (\emph{ii}) and (\emph{iii}) we conclude
that (\emph{iv})
$$\mconfig{i}\fun^{d}?.{\proj{P'}{!}{\sigma[\varX \mapsto Q]}} \parop \proj{P'}{?}{\sigma[\varX \mapsto Q]} \parop \proj{Q}{?}{\sigma}
\equiv
\mconfig{i}\fun^{d}?.{\proj{P'[Q/\varX]}{!}{\sigma}} \parop \proj{P'[Q/\varX]}{?}{\sigma} \parop \proj{Q}{?}{\sigma}$$
By definition we have that
$\proj{\filter{o}{i}P'[Q/\varX]}{?}{\sigma} \parop \proj{Q}{?}{\sigma}
\equiv \mconfig{i}\fun^{d}?.{\proj{P'[Q/\varX]}{!}{\sigma}} \parop \proj{P'[Q/\varX]}{?}{\sigma} \parop \proj{Q}{?}{\sigma}$
which considering (\emph{i}) and (\emph{iv}) completes the proof.

\item[Case $P$ is $\rec{P'}$] By definition we have that 
$\proj{\rec{P'}}{?}{\sigma[\varX' \mapsto Q]} \parop \proj{Q}{?}{\sigma} \equiv
\proj{P'}{?}{\sigma[\varX' \mapsto Q][\varX \mapsto P']} \parop \proj{Q}{?}{\sigma}$. By induction hypothesis
we have that $\proj{P'}{?}{\sigma[\varX' \mapsto Q][\varX \mapsto P']} \parop \proj{Q}{?}{\sigma}
\equiv \proj{P'[Q/\varX']}{?}{\sigma[\varX \mapsto P']} \parop \proj{Q}{?}{\sigma}$. 
Considering Lemma~\ref{lem:AC}(\emph{2}) since recursion is guarded in $P'$ we have that 
$\proj{P'[Q/X']}{!}{\sigma} \equiv \proj{P'}{!}{\sigma}$, from which, considering Lemma~\ref{lem:B}, we conclude
$\proj{P'[Q/\varX']}{?}{\sigma[\varX \mapsto P']} \parop \proj{Q}{?}{\sigma}\equiv
\proj{P'[Q/\varX']}{?}{\sigma[\varX \mapsto (P'[Q / \varX'])]} \parop \proj{Q}{?}{\sigma}$.
The latter concludes the proof since by definition 
$\proj{\rec{P'}[Q/\varX]}{?}{\sigma} \parop \proj{Q}{?}{\sigma}
\equiv \proj{P'[Q/\varX']}{?}{\sigma[\varX \mapsto (P'[Q / \varX'])]} \parop \proj{Q}{?}{\sigma}$.
\end{description}
\end{proof}


%

The following result is used in the proof of the $\id$-projection.

\begin{lemma}[Static Protocol $\id$-Projection]
\label{lem:idabsent}
Let $P$ be an $\idop{id}$-absent protocol. We have that $\proj{P}{\id}{\sigma} \equiv \zero$.
\end{lemma}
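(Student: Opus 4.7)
The plan is to proceed by structural induction on the protocol $P$, exploiting the fact that the $\id$-projection, as inspection of Table~\ref{tab:proj} reveals, only ever produces non-$\zero$ output via the clause $\rulename{idNode}$, which is triggered exactly when the projection parameter matches the identifier decorating an active-node construct. If $P$ contains no occurrence of $\idop{\id}$, this clause is never fired, and every other clause either returns $\zero$ outright or recurses on strict subterms while keeping the projection parameter fixed at $\id$; composing such results with $\parop$ collapses to $\zero$ modulo structural congruence using $P \parop \zero \equiv P$ from Table~\ref{tab:structnets}.

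Concretely, I would dispatch the base cases first: $\proj{\zero}{\id}{\sigma} = \zero$ by $\rulename{pNil}$, and $\proj{\varX}{\id}{\sigma} = \zero$ by $\rulename{pVar}$ (since $\id \neq \,!$). For the inductive cases, the $\idop{\id}$-absence of $P$ implies $\idop{\id}$-absence of each syntactic subterm, so the induction hypothesis applies to each:
\begin{itemize}
\item $P = P_1 \parop P_2$: by $\rulename{pPar}$, the projection is $\proj{P_1}{\id}{\sigma} \parop \proj{P_2}{\id}{\sigma}$, and by IH both summands are $\equiv \zero$.
\item $P = S_1 \plusop S_2$: by $\rulename{pSum}$ (note $\id \neq \,!$), the projection is $\proj{S_1}{\id}{\sigma} \parop \proj{S_2}{\id}{\sigma}$, finished by IH.
\item $P = \filter{o}{i}P'$: by $\rulename{idSynch}$, the projection is $\proj{P'}{\id}{\sigma}$, finished by IH on $P'$.
\item $P = \idop{\id'}P'$ with necessarily $\id' \neq \id$ (by the $\idop{\id}$-absence hypothesis): by $\rulename{pNode}$, the projection is $\proj{P'}{\id}{\sigma}$, finished by IH.
\item $P = \rec{P'}$: by $\rulename{pRec}$, the projection is $\proj{P'}{\id}{\sigma[\varX \mapsto P']}$; since the IH is quantified over all environments $\sigma$ and $P'$ remains $\idop{\id}$-absent, it applies directly.
\end{itemize}
In each inductive case, combining the IH with $\zero \parop \zero \equiv \zero$ yields the goal.

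The main subtlety is the recursion case. Unlike the $!$-projection (which genuinely depends on $\sigma$ through $\rulename{oVar}$), the $\id$-projection never consults the mapping $\sigma$, since the variable clause $\rulename{pVar}$ applies uniformly for $\pid \neq \,!$. Consequently, we do not need any analogue of Lemmas~\ref{lem:AC}--\ref{lem:FH}; the induction is purely structural and does not require reasoning about unfolding. The case $P = \idop{\id}P'$ is excluded by hypothesis, which is precisely what makes the statement go through. No further obstacles arise.
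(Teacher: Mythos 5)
Your proof is correct and follows exactly the route the paper intends: the paper's own proof is the one-line ``by induction on the structure of $P$ following expected lines,'' and your case analysis (including the observation that $\rulename{idNode}$ never fires and that the $\id$-projection is insensitive to $\sigma$, so no unfolding lemmas are needed) is precisely what those expected lines amount to.
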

\begin{proof}
By induction on the structure of $P$ following expected lines.
\end{proof}

We may now prove that projection is preserved under recursion unfolding. This lemma will be used in the proof of Lemma \ref{lem:structP}, where we want to show that the projection function is invariant to structural congruent protocols. 

\begin{lemma}[Preservation of Projection under Unfolding]\label{lem:recdel}
Let $P$ be an $\idop{id}$-absent protocol where recursion is guarded. We have that
$\proj{\rec{P}}{r}{\emptyset} \equiv \proj{P[\rec{P}/\varX]}{r}{\emptyset}$ for any $r$.
\end{lemma}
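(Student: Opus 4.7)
My plan is to first apply rule $\rulename{pRec}$ on the left-hand side to reduce the statement to showing $\proj{P}{r}{[\varX \mapsto P]} \equiv \proj{P[\rec{P}/\varX]}{r}{\emptyset}$, and then proceed by case analysis on the projection mode $r \in \{!, ?, \id\}$.

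The two easy cases are $r = !$ and $r = \id$. For $r = !$, Lemma~\ref{lem:AC}(1) allows me to discard the environment $[\varX \mapsto P]$, obtaining $\proj{P}{!}{\emptyset}$, and Lemma~\ref{lem:AC}(2) applied in reverse with $Q = \rec{P}$ (noting that guardedness carries over from $P$ to $\rec{P}$) reintroduces the substitution, yielding $\proj{P[\rec{P}/\varX]}{!}{\emptyset}$. For $r = \id$, I would observe that $\rec{P}$ inherits $\idop{id}$-absence from $P$, and so does $P[\rec{P}/\varX]$, since substituting an $\idop{id}$-absent term for a variable in an $\idop{id}$-absent term preserves the property; Lemma~\ref{lem:idabsent} then collapses both sides to $\zero$.

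The main obstacle is $r = ?$, since unfolding duplicates the persistent inputs of $\rec{P}$ at each $\varX$-leaf, and these residuals must be shown to be absorbed by the inputs already accounted for through the environment-based projection. The key ingredients are the two parts of Lemma~\ref{lem:FH} combined with the replicability principle (Lemma~\ref{lem:E}). Using the $!$-case together with Lemma~\ref{lem:B}, I can rewrite $\proj{P}{?}{[\varX \mapsto P]}$ as $\proj{P}{?}{[\varX \mapsto \rec{P}]}$, so $\proj{\rec{P}}{?}{\emptyset} \equiv \proj{P}{?}{[\varX \mapsto \rec{P}]}$. Applying Lemma~\ref{lem:FH}(1) with $Q = \rec{P}$ and $\sigma = \emptyset$ yields
\[
\proj{P}{?}{[\varX \mapsto \rec{P}]} \parop \proj{\rec{P}}{?}{\emptyset} \;\equiv\; \proj{P[\rec{P}/\varX]}{?}{\emptyset} \parop \proj{\rec{P}}{?}{\emptyset},
\]
from which, after collapsing the left-hand side via Lemma~\ref{lem:E}, I get the intermediate fact $(\star)$: $\proj{\rec{P}}{?}{\emptyset} \equiv \proj{P[\rec{P}/\varX]}{?}{\emptyset} \parop \proj{\rec{P}}{?}{\emptyset}$.

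To close the argument, I would apply Lemma~\ref{lem:FH}(2) to expand $\proj{P[\rec{P}/\varX]}{?}{\emptyset}$ as $\proj{P}{?}{[\varX \mapsto \rec{P}]} \parop \Pi_{i \in I} \mconfig{c_i}\fun_i^{d_i}?.R_i$, substitute this expansion into $(\star)$, and use the earlier identification of $\proj{P}{?}{[\varX \mapsto \rec{P}]}$ with $\proj{\rec{P}}{?}{\emptyset}$ together with Lemma~\ref{lem:E} to merge the two resulting copies. The right-hand side then simplifies to $\proj{\rec{P}}{?}{\emptyset} \parop \Pi_{i \in I} \mconfig{c_i}\fun_i^{d_i}?.R_i$, which by Lemma~\ref{lem:FH}(2) is exactly $\proj{P[\rec{P}/\varX]}{?}{\emptyset}$, as required. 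The delicate point throughout is that the residual inputs introduced by the substitution must coincide, up to structural congruence, with the persistent inputs already present in the environment-based projection, so that replicability makes them disappear rather than leaving a genuine mismatch.
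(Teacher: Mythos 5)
Your proposal is correct and follows essentially the same route as the paper's proof: the $!$ case via Lemma~\ref{lem:AC}, the $\id$ case via Lemma~\ref{lem:idabsent}, and the $?$ case by combining Lemma~\ref{lem:FH}(1), Lemma~\ref{lem:B}, and Lemma~\ref{lem:E} to obtain the absorption identity $(\star)$, then closing with Lemma~\ref{lem:FH}(2) and a second application of Lemma~\ref{lem:E}. The only differences are cosmetic orderings of the lemma applications (e.g., converting the environment with Lemma~\ref{lem:B} before rather than after invoking Lemma~\ref{lem:FH}(1)).
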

\begin{proof} We prove the three cases separately.

\begin{description}
\item[Case $r = \;!$] 
By definition we have $\proj{\rec{P}}{!}{\emptyset} \deff \proj{P}{!}{[\varX \mapsto P]}$.
Since recursion is guarded in $P$, from Lemma~\ref{lem:AC}(\emph{2}) we conclude $\proj{P}{!}{[\varX \mapsto P]} \equiv 
\proj{P[\rec{P}/\varX}{!}{[\varX \mapsto P]}$, and from Lemma~\ref{lem:AC}(\emph{1}) we have 
$\proj{P[\rec{P}/\varX}{!}{[\varX \mapsto P]} \equiv \proj{P[\rec{P}/\varX}{!}{\emptyset}$.

\item[Case $r = \id$] 
Since $P$ is an $\idop{id}$-absent protocol we have that $P[\rec{P}/\varX]$ is an $\idop{id}$-absent protocol, 
hence the result follows immediately from Lemma~\ref{lem:idabsent} from which we conclude
$\proj{\rec{P}}{\id}{\emptyset} \equiv \zero$
and $\proj{P[\rec{P}/\varX]}{\id}{\emptyset} \equiv \zero$.

\item[Case $r = \;?$]
From Lemma~\ref{lem:FH}(\emph{1}) we have that 
$$\proj{P}{?}{[\varX \mapsto \rec{P}]} \parop \proj{\rec{P}}{?}{\emptyset} \equiv
\proj{P[\rec{P}/\varX]}{?}{\emptyset} \parop \proj{\rec{P}}{?}{\emptyset}$$
By definition we have that $\proj{\rec{P}}{!}{\emptyset} \equiv \proj{P}{!}{[\varX \mapsto P]}$ 
and since recursion is guarded in $P$ from Lemma~\ref{lem:AC}(\emph{1}) we 
have that $\proj{P}{!}{[\varX \mapsto P]} \equiv \proj{P}{!}{\emptyset}$ hence
$\proj{\rec{P}}{!}{\emptyset} \equiv \proj{P}{!}{\emptyset}$.
From this fact, considering Lemma~\ref{lem:B} we conclude
$$\proj{P}{?}{[\varX \mapsto P]} \parop \proj{\rec{P}}{?}{\emptyset} \equiv
\proj{P[\rec{P}/\varX]}{?}{\emptyset} \parop \proj{\rec{P}}{?}{\emptyset}$$
By definition we have that $\proj{\rec{P}}{?}{\emptyset} \equiv \proj{P}{?}{[\varX \mapsto P]}$, hence
$$\proj{\rec{P}}{?}{\emptyset} \parop \proj{\rec{P}}{?}{\emptyset} \equiv
\proj{P[\rec{P}/\varX]}{?}{\emptyset} \parop \proj{\rec{P}}{?}{\emptyset}$$
From Lemma~\ref{lem:E} we have that $\proj{\rec{P}}{?}{\emptyset} \parop \proj{\rec{P}}{?}{\emptyset}\equiv \proj{\rec{P}}{?}{\emptyset}$, hence 
(\emph{i})
$$ \proj{\rec{P}}{?}{\emptyset} \equiv
\proj{P[\rec{P}/\varX]}{?}{\emptyset} \parop \proj{\rec{P}}{?}{\emptyset}$$ 

From Lemma~\ref{lem:FH}(\emph{2}) we have that
$$ \proj{P[\rec{P}/\varX]}{?}{\sigma} \equiv  \proj{P}{?}{\sigma[\varX \mapsto \rec{P}]} \parop \Pi_{l \in L} \mconfig{c_l}\fun_{l}^{d_l}?.{R_l}$$
As before, from $\proj{\rec{P}}{!}{\emptyset} \equiv \proj{P}{!}{\emptyset}$ and considering Lemma~\ref{lem:B} we conclude
$$ \proj{P[\rec{P}/\varX]}{?}{\sigma} \equiv  \proj{P}{?}{\sigma[\varX \mapsto P]} \parop \Pi_{l \in L} \mconfig{c_l}\fun_{l}^{d_l}?.{R_l}$$
and, again as before, since by definition $\proj{\rec{P}}{?}{\emptyset} \equiv \proj{P}{?}{[\varX \mapsto P]}$ we conclude (\emph{ii})
$$ \proj{P[\rec{P}/\varX]}{?}{\sigma} \equiv  \proj{\rec{P}}{?}{\emptyset}\parop \Pi_{l \in L} \mconfig{c_l}\fun_{l}^{d_l}?.{R_l} $$ 
which together with (\emph{i}) 
$$ \proj{\rec{P}}{?}{\emptyset} \equiv
\proj{P[\rec{P}/\varX]}{?}{\emptyset} \parop \proj{\rec{P}}{?}{\emptyset}$$ 
allows us to conclude
$$ \proj{\rec{P}}{?}{\emptyset} \equiv
 \proj{\rec{P}}{?}{\emptyset}\parop \Pi_{l \in L} \mconfig{c_l}\fun_{l}^{d_l}?.{R_l} \parop \proj{\rec{P}}{?}{\emptyset}$$
From this fact and considering Lemma~\ref{lem:E} we conclude
$$ \proj{\rec{P}}{?}{\emptyset} \equiv
 \proj{\rec{P}}{?}{\emptyset}\parop \Pi_{l \in L} \mconfig{c_l}\fun_{l}^{d_l}?.{R_l} $$
which together with (\emph{ii})
 $$ \proj{P[\rec{P}/\varX]}{?}{\sigma} \equiv  \proj{\rec{P}}{?}{\emptyset}\parop \Pi_{l \in L} \mconfig{c_l}\fun_{l}^{d_l}?.{R_l}$$ 
 completes the proof.
\end{description}
\end{proof}

Lemma \ref{lem:structP} states that structurally congruent protocols have structurally congruent projections, namely equivalent reactive projection $\qt{?}$, active projection $\qt{\id}$ and output projection $\qt{!}$. This lemma will be used in the proof of the soundness lemma, Lemma \ref{lem:sound}.
\begin{lemma}[Preservation of Projection Under Structural Congruence]\label{lem:structP}
If $P \equiv Q$ then $\proj{P}{r}{\emptyset} \equiv \proj{Q}{r}{\emptyset}$ for any $r$.
\end{lemma}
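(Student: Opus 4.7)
The plan is to proceed by induction on the derivation of $P \equiv Q$. Since the congruence closure rule for the $\rec{\cdot}$ context causes projection to introduce a non-trivial environment, I would silently strengthen the statement to: \emph{if $P \equiv Q$ then $\proj{P}{r}{\sigma} \equiv \proj{Q}{r}{\sigma}$ for any $r$ and any $\sigma$}; the lemma as stated is the instance at $\sigma = \emptyset$. Reflexivity, symmetry and transitivity lift through projection trivially, so the real content is (a) each axiom of Table~\ref{tab:struct} considered as a base case, and (b) congruence closure under each term constructor considered as an inductive case.

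For the axioms I would perform case analysis on $r \in \{\,?,\,!,\,\id\,\}$ and unfold the projection clauses of Table~\ref{tab:proj}. The associativity, commutativity and identity rules for $\parop$ and $\plusop$ reduce to the matching axioms at the level of definitions (Table~\ref{tab:structnets}) after pushing projection pointwise through the operator. The rule $\idop{\id}(P \parop Q) \equiv \idop{\id}P \parop \idop{\id}Q$ has only one non-immediate sub-case, namely $r = \id$: the left-hand side projects to $(\proj{P}{\id}{\sigma} \parop \proj{Q}{\id}{\sigma}) \parop (\proj{P}{!}{\sigma} \parop \proj{Q}{!}{\sigma})$ and the right-hand side to $(\proj{P}{\id}{\sigma} \parop \proj{P}{!}{\sigma}) \parop (\proj{Q}{\id}{\sigma} \parop \proj{Q}{!}{\sigma})$, which are related by AC of $\parop$. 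The rules $\idop{\id_1}\idop{\id_2}P \equiv \idop{\id_2}\idop{\id_1}P$ and $\idop{\id}\zero \equiv \zero$ collapse by direct computation once the relevant identifier equalities are split off. The recursion unfolding axiom $\rec{P} \equiv P[\rec{P}/\varX]$ is exactly the content of Lemma~\ref{lem:recdel}.

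For congruence closure under each term constructor ($P \parop \cdot$, $S \plusop \cdot$, $\filter{o}{i}\cdot$, $\idop{\id}\cdot$, $\rec{\cdot}$), projection commutes with the constructor in a transparent way, so the strengthened IH on the subterms suffices in all cases but the last. For $\rec{\cdot}$: given $P \equiv Q$ we must show $\proj{P}{r}{\sigma[\varX \mapsto P]} \equiv \proj{Q}{r}{\sigma[\varX \mapsto Q]}$. The strengthened IH directly yields $\proj{P}{r}{\sigma[\varX \mapsto Q]} \equiv \proj{Q}{r}{\sigma[\varX \mapsto Q]}$, so it remains to prove $\proj{P}{r}{\sigma[\varX \mapsto P]} \equiv \proj{P}{r}{\sigma[\varX \mapsto Q]}$. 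For $r = \;!$ this is Lemma~\ref{lem:AC}(1); for $r = \id$ the variable clause of projection discards the environment and the statement is immediate; for $r = \;?$ this is Lemma~\ref{lem:B}, whose premise $\proj{P}{!}{\sigma} \equiv \proj{Q}{!}{\sigma}$ comes from the strengthened IH at $r = \;!$.

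The main obstacle is precisely this closure under $\rec{\cdot}$: projection of a recursion threads the body through its own mapping, so a naive induction at $\sigma = \emptyset$ does not transport across the two sides of the congruence. Strengthening the IH to arbitrary $\sigma$ and using Lemmas~\ref{lem:AC} and~\ref{lem:B} as a bridge to swap $[\varX \mapsto P]$ for $[\varX \mapsto Q]$ in the environment is what makes the argument go through cleanly; everything else is definitional rearrangement governed by structural congruence of networks and definitions.
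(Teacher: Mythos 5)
Your proof follows the same overall strategy as the paper's: induction on the derivation of $P \equiv Q$, case analysis on $r$ with unfolding of the clauses of Table~\ref{tab:proj} for each axiom of Table~\ref{tab:struct}, and an appeal to Lemma~\ref{lem:recdel} for the unfolding axiom $\rec{P}\equiv P[\rec{P}/\varX]$. Where you genuinely go beyond the paper is in the congruence-closure cases: the paper's proof only works through the axioms at $\sigma=\emptyset$ and leaves closure under term constructors implicit, whereas you correctly identify closure under $\rec{\cdot}$ as the one case a naive induction at $\sigma=\emptyset$ cannot handle, strengthen the induction hypothesis to arbitrary $\sigma$, and bridge the two environments $[\varX\mapsto P]$ and $[\varX\mapsto Q]$ via Lemma~\ref{lem:AC}(1) for $r=\,!$ and Lemma~\ref{lem:B} (whose premise is supplied by the strengthened IH at $r=\,!$) for $r=\,?$. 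This is a more complete argument than the one in the paper. One loose end: once you strengthen the statement to arbitrary $\sigma$, the base case for the unfolding axiom must also be established at arbitrary $\sigma$, but Lemma~\ref{lem:recdel} is stated only for $\sigma=\emptyset$; you should note that it generalises (its ingredients, Lemmas~\ref{lem:AC}, \ref{lem:B}, \ref{lem:E} and \ref{lem:FH}, are already stated for arbitrary mappings) or restrict the unfolding axiom's occurrence via well-formedness. Similarly, your remark that the $\id$-projection ``discards the environment'' is only immediate for well-formed protocols, where recursion bodies are $\idop{\id}$-absent (Lemma~\ref{lem:idabsent}); in general the $\id$-projection of $\idop{\id}P'$ re-enters the $!$-projection and hence consults $\sigma$, though Lemma~\ref{lem:AC}(1) again closes that gap. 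Neither point undermines the argument.
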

\begin{proof}
The proof proceeds by induction on the length of the derivation $P\equiv Q$.
\begin{description}
\item[Case $P$ is $Q\parop \zero$] We need to prove that  $\proj{Q\parop\zero}{r}{\emptyset} \equiv \proj{Q}{r}{\emptyset}$ for any $r$. We have three cases depending on $r$.

\begin{description}
\item[Case $r=?$:] By definition of the projection function, we have that 
$$\proj{Q\parop\zero}{?}{\emptyset}=\proj{Q}{?}{\emptyset}\parop\proj{\zero}{?}{\emptyset}$$

By the definition again, we know that $\proj{\zero}{?}{\emptyset}=\zero$ and now we have that $\proj{Q\parop\zero}{?}{\emptyset} \equiv \proj{Q}{?}{\emptyset}\parop\zero\equiv\proj{Q}{?}{\emptyset}$ as required.
\item[Case $r=!$:] By definition of the projection function, we have that 
$$\proj{Q\parop\zero}{!}{\emptyset}=\proj{Q}{!}{\emptyset}\parop\proj{\zero}{!}{\emptyset}$$

By the definition again, we know that $\proj{\zero}{!}{\emptyset}=\zero$ and now we have that $\proj{Q\parop\zero}{!}{\emptyset} \equiv \proj{Q}{!}{\emptyset}\parop\zero\equiv\proj{Q}{!}{\emptyset}$ as required.
\item[Case $r=\id$:] By definition of the projection function, we have that 
$$\proj{Q\parop\zero}{id}{\emptyset}=\proj{Q}{id}{\emptyset}\parop\proj{\zero}{id}{\emptyset}$$

By the definition again, we know that $\proj{\zero}{id}{\emptyset}=\zero$ and now we have that $\proj{Q\parop\zero}{id}{\emptyset} \equiv \proj{Q}{id}{\emptyset}\parop\zero\equiv\proj{Q}{id}{\emptyset}$ as required.
\end{description}
\item[Case $P$ is $Q_1\parop (Q_2\parop Q_3)$] We need to prove that  $\proj{Q_1\parop (Q_2\parop Q_3)}{r}{\emptyset} \equiv \proj{(Q_1\parop Q_2)\parop Q_3}{r}{\emptyset}$ for any $r$. We have three cases depending on $r$.

\begin{description}
\item[Case $r=?$:] By definition of the projection function, we have that 
$$\proj{Q_1\parop (Q_2\parop Q_3)}{?}{\emptyset}=\proj{Q_1}{?}{\emptyset}\parop\proj{Q_2\parop Q_3}{?}{\emptyset}=\proj{Q_1}{?}{\emptyset}\parop\proj{Q_2}{?}{\emptyset}\parop\proj{Q_3}{?}{\emptyset}$$ and

$$\proj{(Q_1\parop Q_2)\parop Q_3}{?}{\emptyset}=\proj{Q_1\parop Q_2}{?}{\emptyset}\parop\proj{Q_3}{?}{\emptyset}=\proj{Q_1}{?}{\emptyset}\parop\proj{Q_2}{?}{\emptyset}\parop\proj{Q_3}{?}{\emptyset}\quad\text{as required.}$$

\item[Case $r\in\mset{!,\id}$:] Similar to the case of \qt{?}.
\end{description}
\item[Cases where $P$ is $Q_1\plusop (Q_2\plusop Q_3)$, $Q_1\plusop Q_2$, $Q_1\parop Q_2$, $\idop{id}\zero$:] follow directly by the definition. 

\item[Case $P$ is $\idop{id}(Q_1\parop Q_2)$] We need to prove that  $\proj{\idop{id}(Q_1\parop Q_2)}{r}{\emptyset} \equiv \proj{\idop{id}Q_1}{r}{\emptyset}\parop\proj{\idop{id}Q_2}{r}{\emptyset}$ for any $r$. We have three cases depending on $r$.

\begin{description}
\item[Case $r=?$:] By definition of the projection function, we have that 
$$\proj{\idop{id}(Q_1\parop Q_2)}{?}{\emptyset}=\proj{Q_1\parop Q_2}{?}{\emptyset}=\proj{Q_1}{?}{\emptyset}\parop\proj{Q_2}{?}{\emptyset}$$
$$\proj{\idop{id}Q_1}{?}{\emptyset}\parop\proj{\idop{id}Q_2}{?}{\emptyset}=\proj{Q_1}{?}{\emptyset}\parop\proj{Q_2}{?}{\emptyset}\quad\text{as required.}$$

\item[Case $r=!$:] Similar to the case of \qt{?}.
\item[Case $r=\id$:] We have two cases: $r\neq\id$ or $r=\id$. The former case is similar to $\mset{?,!}$ while the latter case can be proved as follows;

$$\proj{\idop{id}(Q_1\parop Q_2)}{\id}{\emptyset}=\proj{Q_1\parop Q_2}{\id}{\emptyset}\parop\proj{Q_1\parop Q_2}{!}{\emptyset}=\proj{Q_1}{\id}{\emptyset}\parop\proj{Q_2}{\id}{\emptyset}\parop\proj{Q_1}{!}{\emptyset}\parop\proj{Q_2}{!}{\emptyset}$$
$$\proj{\idop{id}Q_1}{\id}{\emptyset}\parop\proj{\idop{\id}Q_2}{id}{\emptyset}=\proj{Q_1}{\id}{\emptyset}\parop\proj{Q_1}{!}{\emptyset}\parop \proj{Q_2}{\id}{\emptyset}\parop\proj{Q_2}{!}{\emptyset}$$
And we have that $\proj{Q_1}{\id}{\emptyset}\parop\proj{Q_2}{\id}{\emptyset}\parop\proj{Q_1}{!}{\emptyset}\parop\proj{Q_2}{!}{\emptyset}\equiv\proj{Q_1}{\id}{\emptyset}\parop\proj{Q_1}{!}{\emptyset}\parop \proj{Q_2}{\id}{\emptyset}\parop\proj{Q_2}{!}{\emptyset}$ as required.
\end{description}
\item[Case $P$ is $\idop{id_1}\idop{id_2}Q$] We need to prove that  $\proj{\idop{id_1}\idop{id_2}Q}{r}{\emptyset} \equiv \proj{\idop{id_2}\idop{id_1}Q}{r}{\emptyset}$ for any $r$. We have three cases depending on $r$.

\begin{description}
\item[Cases $r\in\mset{?,!}$:] By defintion $\proj{\idop{\id_1}\idop{\id_2}Q}{?}{\emptyset} \equiv \proj{\idop{\id_2}\idop{\id_1}Q}{?}{\emptyset}=\proj{\idop{\id_1}\idop{\id_2}Q}{!}{\emptyset} \equiv \proj{\idop{\id_2}\idop{\id_1}Q}{!}{\emptyset}=\proj{Q}{?}{\emptyset}=\proj{Q}{!}{\emptyset}$

\item[Case $r=\id$:] We have three cases:
\begin{description}
\item[1) $\id\neq\id_1\land\id\neq\id_2$:] By defintion we have that $\proj{\idop{\id_1}\idop{\id_2}Q}{\id}{\emptyset} \equiv \proj{\idop{\id_2}\idop{\id_1}Q}{\id}{\emptyset}=\proj{Q}{\id}{\emptyset}$.

\item [2) $\id=\id_1$:] By definition we have that: 
$$\proj{\idop{\id_1}\idop{\id_2}Q}{\id_1}{\emptyset}=\proj{\idop{\id_2}Q}{\id_1}{\emptyset}\parop\proj{\idop{\id_2}Q}{!}{\emptyset}=\proj{Q}{\id_1}{\emptyset}\parop\proj{Q}{!}{\emptyset}$$
$$\proj{\idop{\id_2}\idop{\id_1}Q}{\id_1}{\emptyset}=\proj{\idop{\id_1}Q}{\id_1}{\emptyset}=\proj{Q}{\id_1}{\emptyset}\parop\proj{Q}{!}{\emptyset}\quad\text{as required.}$$
\item [3) $\id=\id_2$:] Similar to case (2).

\end{description}
\end{description}

\item[Case $P$ is $\rec{Q}$] We need to prove that  $\proj{\rec{Q}}{r}{\emptyset} \equiv \proj{Q[\rec{Q}/X]}{r}{\emptyset}$ for any $r$. Directly by Lemma \ref{lem:recdel}.

\end{description}
\end{proof}

Lemma \ref{lem:outprojred} states basically that the projection function is invariant to reduction in the case of output and reactive projections. We will use this result mainly in the proof of the main theorem. 
\begin{lemma}[Preservation of $!$/$?$-Projections Under Reduction]\label{lem:outprojred}
If $\sys{\Delta}{P} \rTo{}{} \sys{\Delta'}{Q}$ then $\proj{P}{r}{\empty} \equiv \proj{Q}{r}{\empty}$ where $r\in\mset{!,?}$.
\end{lemma}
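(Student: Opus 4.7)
The plan is to proceed by induction on the derivation of $\sys{\Delta}{P} \rTo{}{} \sys{\Delta'}{Q}$, treating the three axiom rules \rulename{Bin}, \rulename{Brd}, \rulename{Loc} as base cases, the four closure rules \rulename{Synch}, \rulename{Id}, \rulename{Sum}, \rulename{Par} via the induction hypothesis, and the \rulename{Struct} case via Lemma~\ref{lem:structP}. The crucial observation that drives every case is that the $!$- and $?$-projections \emph{ignore} the active node construct: by the $\rulename{pNode}$ clause, when $\pid \in \{?, !\}$ we have $\pid \neq \id$ and hence $\proj{\idop{\id}P}{\pid}{\sigma} \equiv \proj{P}{\pid}{\sigma}$. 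Thus the only way reduction could alter these two projections is if it also changed the underlying static skeleton of the protocol, but the three base-case rules only \emph{relocate} active node markers (moving $\idop{\id}$ inside the continuation of the selected synchronisation branch, possibly changing the identifier(s) or replicating into $\idop{\tilde{\ids}}$).

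For the base cases I would unfold the definitions directly. Taking \rulename{Bin} as representative, both sides of the reduction have the form $\filter{o}{i}P' \plusop S$ modulo the position/identity of the $\idop{\cdot}$ wrapper. For $r = !$, case $\rulename{oSum}$ gives $\proj{\filter{o}{i}P \plusop S}{!}{\emptyset} \equiv \mconfig{o}\fun^d! \plusop \proj{S}{!}{\emptyset}$ on both sides (by $\rulename{oSynch}$). For $r = ?$, case $\rulename{pSum}$ yields $\mconfig{i}\fun^d?.{\proj{P}{!}{\emptyset}} \parop \proj{P}{?}{\emptyset} \parop \proj{S}{?}{\emptyset}$ on the left, and the right side is identical since $\proj{\idop{\id'}P}{!}{\emptyset} \equiv \proj{P}{!}{\emptyset}$ and $\proj{\idop{\id'}P}{?}{\emptyset} \equiv \proj{P}{?}{\emptyset}$. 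The \rulename{Loc} case is essentially the same. For \rulename{Brd}, the same argument applies, with the slight wrinkle that the continuation is now $\idop{\tilde{\ids}}P$ (a sequence of active markers, possibly empty); a trivial sub-induction on the length of $\ids$ using $\rulename{pNode}$ collapses this back to $\proj{P}{r}{\emptyset}$ for $r \in \{!, ?\}$.

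For the inductive closure rules I would use that projection distributes nicely. Under \rulename{Synch}, the surrounding $\filter{o}{i}[\cdot]$ produces $\mconfig{o}\fun^d!$ for $!$-projection regardless of the continuation (so equality is immediate), while for $?$-projection it produces $\mconfig{i}\fun^d?.{\proj{[\cdot]}{!}{\emptyset}} \parop \proj{[\cdot]}{?}{\emptyset}$, whose two occurrences of projections fall under the induction hypothesis (applied once for $r=!$ and once for $r=?$). Under \rulename{Id}, the top-level $\idop{\id}$ is erased by both projections, so the result is immediate from IH. Under \rulename{Par} and \rulename{Sum}, the $?$-projection distributes as $\parop$ and the $!$-projection distributes as $\parop$/$\plusop$ respectively, so again congruence of $\equiv$ with respect to these operators, combined with IH, finishes the case. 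The \rulename{Struct} case is discharged immediately by Lemma~\ref{lem:structP}.

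The potentially tricky point is the \rulename{Synch} case for $r=?$, because its premise only provides $\proj{P}{?}{\emptyset} \equiv \proj{P'}{?}{\emptyset}$ by IH, but the conclusion requires equating $\mconfig{i}\fun^d?.\proj{P}{!}{\emptyset}$ with $\mconfig{i}\fun^d?.\proj{P'}{!}{\emptyset}$, i.e.\ also preservation of the $!$-projection. This is exactly why the lemma is stated for $r \in \{!, ?\}$ simultaneously: performing the induction on both projections in parallel makes the IH deliver both equalities at once, so the case goes through. No other case requires mixing the two projections, so aside from this bookkeeping the argument is routine.
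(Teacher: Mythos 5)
Your proposal is correct and follows essentially the same route as the paper's proof: induction on the reduction derivation, unfolding the projection clauses in the base cases to see that $!$- and $?$-projection erase active-node markers, handling the closure rules by the induction hypothesis (with the same observation that the \rulename{Synch} case for $r=?$ needs the IH for $r=!$ because the persistent input's continuation is the $!$-projection), and discharging \rulename{Struct} via Lemma~\ref{lem:structP}. Your explicit treatment of the $\idop{\tilde{\ids}}$ sequence in the \rulename{Brd} case is a small detail the paper leaves implicit, but the argument is the same.
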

\begin{proof}
By induction on the length of the derivation $\sys{\Delta}{P} \rTo{}{} \sys{\Delta'}{Q}$.

\begin{description}
\item[Case 1:] Rule \rulename{Bin} is applied: We have that $\sys{\Delta}{\idop{\id}(\filter{o}{i}P\plusop S)}\red\\
\sys{\Delta'}{\filter{o}{i}(\idop{{\id'}}P)\plusop S}$ and we need to prove that $\proj{\idop{\id}(\filter{o}{i}P\plusop S)}{r}{\emptyset}\equiv\proj{\filter{o}{i}(\idop{{\id'}}P)\plusop S}{r}{\emptyset}$. We have different cases:

\begin{description}
\item[Case $r=?$:] By definition of the projection function, we have that 
$$\proj{\idop{\id}(\filter{o}{i}P\plusop S)}{?}{\emptyset}=\mconfig{i}\func{d}{}?.\proj{P}{!}{\emptyset}\parop\proj{P}{?}{\emptyset}\parop \proj{S}{?}{\emptyset}$$
$$\proj{\filter{o}{i}(\idop{{\id'}}P)\plusop S}{?}{\emptyset}=\mconfig{i}\func{d}{}?.\proj{\idop{id'}P}{!}{\emptyset}\parop\proj{\idop{id'}P}{?}{\emptyset}\parop \proj{S}{?}{\emptyset}$$
By definition again, we know that $\proj{\idop{id'}P}{!}{\emptyset}=\proj{P}{!}{\emptyset}$ and $\proj{\idop{id'}P}{?}{\emptyset}=\proj{P}{?}{\emptyset}$ and we have that $\proj{\idop{\id}(\filter{o}{i}P\plusop S)}{?}{\emptyset}\equiv\proj{\filter{o}{i}(\idop{{\id'}}P)\plusop S}{?}{\emptyset}$ as required.
\item[Case $r=!$:] By definition of the projection function, we have that 

$$\proj{\idop{\id}(\filter{o}{i}P\plusop S)}{!}{\emptyset}=\mconfig{o}\func{d}{}! \plusop \proj{S}{!}{\emptyset}$$
$$\proj{\filter{o}{i}(\idop{{\id'}}P)\plusop S}{!}{\emptyset}=\proj{\filter{o}{i}(\idop{{\id'}}P)}{!}{\emptyset}\plusop \proj{S}{!}{\emptyset}$$

By definition again, we know that $\proj{\filter{o}{i}(\idop{{\id'}}P)}{!}{\emptyset}=\mconfig{o}\func{d}{}!$ and we have that $\proj{\idop{\id}(\filter{o}{i}P\plusop S)}{!}{\emptyset}\equiv\proj{\filter{o}{i}(\idop{{\id'}}P)\plusop S}{!}{\emptyset}$ as required.

\end{description}

\item[Case 2] Rules \rulename{Brd} and \rulename{Loc} can be proved in a similar manner.

\item[Case 3] Rule {\rulename{Synch}}: We need to prove that if $\sys{\Delta}{\filter{o}{i}P}\red
\sys{\Delta'}{\filter{o}{i}P'}$ then $\proj{\filter{o}{i}P}{r}{\emptyset}\equiv
\proj{{\filter{o}{i}P'}}{r}{\emptyset}$. We have different cases:

\begin{description}
\item[Case $r=?$:] By definition of the projection function, we have that 
$$\proj{\filter{o}{i}P}{?}{\emptyset}=\mconfig{i}\func{d}{}?.\proj{P}{!}{\emptyset}\parop\proj{P}{?}{\emptyset}$$
$$\proj{\filter{o}{i}P')}{?}{\emptyset}=\mconfig{i}\func{d}{}?.\proj{P'}{!}{\emptyset}\parop\proj{P'}{?}{\emptyset}$$
By the induction hypothesis, we have that $\proj{P'}{!}{\emptyset}=\proj{P}{!}{\emptyset}$ and $\proj{P'}{?}{\emptyset}=\proj{P}{?}{\emptyset}$ and we have that $\proj{\filter{o}{i}P}{?}{\emptyset}\equiv
\proj{{\filter{o}{i}P'}}{?}{\emptyset}$ as required.
\item[Case $r=!$:] By definition of the projection function, we have that 

$$\proj{\filter{o}{i}P}{!}{\emptyset}=\proj{\filter{o}{i}P'}{!}{\emptyset}=\mconfig{o}\func{d}{}!$$

and we have that $\proj{\filter{o}{i}P}{!}{\emptyset}\equiv
\proj{{\filter{o}{i}P'}}{!}{\emptyset}$ as required.

\end{description}

\item[Case 4] Rule \rulename{Id}: We need to prove that if $\sys{\Delta}{\idop{\id}P}\red
\sys{\Delta'}{\idop{\id}P'}
$ then $\proj{{\idop{\id}P}}{r}{\emptyset}\equiv\proj{{\idop{\id}P'}}{r}{\emptyset}$. We have different cases:

\begin{description}
\item[Case $r=?$:] By definition of the projection function, we have that 
$$\proj{\idop{id}P}{?}{\emptyset}=\proj{P}{?}{\emptyset}\quad \text{and}\quad\proj{\idop{id}P')}{?}{\emptyset}=\proj{P'}{?}{\emptyset}$$
By the induction hypothesis, we have that $\proj{P'}{?}{\emptyset}=\proj{P}{?}{\emptyset}$ and we have that $\proj{{\idop{\id}P}}{?}{\emptyset}\equiv\proj{{\idop{\id}P'}}{?}{\emptyset}$ as required.

\item[Case $r=!$:] By definition of the projection function, we have that 
$$\proj{\idop{id}P}{!}{\emptyset}=\proj{P}{!}{\emptyset}\quad \text{and}\quad\proj{\idop{id}P')}{!}{\emptyset}=\proj{P'}{!}{\emptyset}$$
By the induction hypothesis, we have that $\proj{P'}{!}{\emptyset}=\proj{P}{!}{\emptyset}$ and we have that $\proj{{\idop{\id}P}}{!}{\emptyset}\equiv\proj{{\idop{\id}P'}}{!}{\emptyset}$ as required.

\end{description}

\item[Case 5] Rule \rulename{Sum}: We need to prove that if $\sys{\Delta}{P_1\plusop P_2}\red
\sys{\Delta'}{P'_1\plusop P_2}
$ then $\proj{{P_1\plusop P_2}}{r}{\emptyset}\equiv\proj{{P'_1\plusop P_2}}{r}{\emptyset}$. We have different cases:

\begin{description}
\item[Case $r=?$:] By definition of the projection function, we have that 
$$\proj{P_1\plusop P_2}{?}{\emptyset}=\proj{P_1}{?}{\emptyset}\parop\proj{P_2}{?}{\emptyset}$$
$$\proj{P'_1\plusop P_2}{?}{\emptyset}=\proj{P'_1}{?}{\emptyset}\parop\proj{P_2}{?}{\emptyset}$$
By the induction hypothesis, we have that $\proj{P'_1}{?}{\emptyset}=\proj{P_1}{?}{\emptyset}$ and we have that $\proj{{P_1\plusop P_2}}{?}{\emptyset}\equiv\proj{{P'_1\plusop P_2}}{?}{\emptyset}$ as required.
\item[Case $r=!$:] By definition of the projection function, we have that 
$$\proj{P_1\plusop P_2}{!}{\emptyset}=\proj{P_1}{!}{\emptyset}\plusop\proj{P_2}{!}{\emptyset}$$
$$\proj{P'_1\plusop P_2}{!}{\emptyset}=\proj{P'_1}{!}{\emptyset}\plusop\proj{P_2}{!}{\emptyset}$$
By the induction hypothesis, we have that $\proj{P'_1}{!}{\emptyset}=\proj{P_1}{!}{\emptyset}$ and we have that $\proj{{P_1\plusop P_2}}{!}{\emptyset}\equiv\proj{{P'_1\plusop P_2}}{!}{\emptyset}$ as required.

\end{description}

\item[Case 6] Rule \rulename{Par}: We need to prove that if $\sys{\Delta}{P_1\parop P_2}\red
\sys{\Delta'}{P'_1\parop P_2}
$ then $\proj{{P_1\parop P_2}}{r}{\emptyset}\equiv\proj{{P'_1\parop P_2}}{r}{\emptyset}$. This case can be proved in a similar manner to the previous case but by using rule \rulename{Par} instead of rule \rulename{Sum}.

\item[Case 7] Rule \rulename{Struct}: We need to prove that if $\sys{\Delta}{P}\red
\sys{\Delta'}{Q}
$ then $\proj{{P}}{r}{\emptyset}\equiv\proj{{Q}}{r}{\emptyset}$.

From rule \rulename{Struct}, we have that $\sys{\Delta}{P}\red
\sys{\Delta'}{Q}
$ if $\sys{\Delta}{P'}\red
\sys{\Delta'}{Q'}
$ where $P'\equiv P$ and $Q'\equiv Q$. By the induction hypothesis we have that  $\proj{{P'}}{r}{\emptyset}\equiv\proj{{Q'}}{r}{\emptyset}$. But $P'\equiv P$ and $Q'\equiv Q$, so we apply Lemma \ref{lem:structP} and conclude the proof.
\end{description}

\end{proof}

The following Lemma ensures that the parallel composition is merely interleaving and does not influence the behaviour of any of its sub-definitions, namely if a definition is able to take a step when isolated then this step will be possible also when put in parallel with any other definition. 
\begin{lemma}[Closure of LTS Under Definition Context]\label{lem:cong}
If $\node{s}{D_1}\rTo{\hat{\lambda}}{}\node{s'}{D_2}$ then $\node{s}{D_1\parop D_3}\rTo{\hat{\lambda}}{}\node{s'}{D_2\parop D_3}$ for any $D_3$.
\end{lemma}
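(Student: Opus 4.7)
The plan is to proceed by case analysis on the last rule used to derive $\node{s}{D_1}\rTo{\hat{\lambda}}{}\node{s'}{D_2}$ in the network LTS of Table~\ref{tab:netsem}. In each of the rules \rulename{Loc}, \rulename{oBinU}, \rulename{oBinR}, \rulename{iBin}, \rulename{oBrd}, and \rulename{iBrd} the derivation has, as its definition-level premise, a transition of the form $D_1\rTo{\alpha} D_1'$ for some action $\alpha$ (with $D_2=D_1'$), together with a condition check $s\models c$ and a (possibly trivial) side-effect computation yielding $s'$. None of these side conditions mention anything outside of $D_1$, $s$, and the local parameters of the rule.

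From the \rulename{Int} rule of the definition LTS in Table~\ref{tab:behsem}, the premise $D_1\rTo{\alpha} D_1'$ immediately lifts to $D_1\parop D_3\rTo{\alpha} D_1'\parop D_3$ for any $D_3$, uniformly across input, output, and local computation actions. Re-applying the same node rule to this lifted premise, with the same condition check on $s$ and the same update yielding $s'$, gives exactly $\node{s}{D_1\parop D_3}\rTo{\hat{\lambda}}{}\node{s'}{D_1'\parop D_3}=\node{s'}{D_2\parop D_3}$, which is what we need. Each case therefore reduces to invoking \rulename{Int} once and re-using the original node rule verbatim.

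The one rule that does not match this pattern is \rulename{dBrd}, whose hypothesis is the predicate $\mathit{discard}(\node{s}{D_1},\linktr{\id?\toall}{\fun})$ rather than a definition transition; and indeed this hypothesis need not transfer to $D_1\parop D_3$, since $D_3$ might contribute an input for $\fun$ enabled under $s$. However, recalling from the statement of Theorem~\ref{the:opcorr} that $\hat{\lambda}$ ranges only over $\linktr{\id!\toall}{\fun}$ and $\tau$, the \rulename{dBrd} rule (which produces a broadcast-input label $\linktr{\id?\toall}{\fun}$) cannot have been used to derive the hypothesis, so this case simply does not arise. The main ``obstacle'' is therefore just the bookkeeping of checking that the six remaining rules all share the common structure described above and that the side-effect functions $\fun^{d}!$ and $\fun^{d}?$ act only on $s$ and the recorded identity, independently of the surrounding definition context.
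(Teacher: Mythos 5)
Your proof is correct and takes essentially the same route as the paper, whose own argument is a one-line ``by induction on the length of the derivation'': your case analysis on the last node-level rule, the lifting of the definition premise via \rulename{Int}, and the re-application of the node rule are exactly the details that one-liner leaves implicit. Your explicit observation that \rulename{dBrd} is the only rule not fitting the pattern, and that it is ruled out because $\hat{\lambda}$ ranges only over $\linktr{\id!\toall}{\fun}$ and $\tau$, is a welcome precision the paper omits.
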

\begin{proof}
By induction on the length of the derivation  $\node{s}{D_1\parop D_3}\rTo{\hat{\lambda}}{}\node{s'}{D_2\parop D_3}$.
\end{proof}

Lemma \ref{lem:struct} ensures that structurally equivalent protocols have structurally equivalent projections under any possible network configuration $\Delta$. 

\begin{lemma}[Preservation of Structural Congruence Under Network Projection]\label{lem:struct}
If $P \equiv Q$ then $\tproj{\sys{\Delta}{P}} \equiv \tproj{\sys{\Delta}{Q}}$ for any $\Delta$.
\end{lemma}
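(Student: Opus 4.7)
The plan is to prove this as a direct corollary of Lemma~\ref{lem:structP}, pushing the protocol-level structural congruence through the projection function to obtain node-level and then network-level structural congruence.

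First, I unfold the definition of $\tproj{\cdot}$ from the \rulename{Proj} case: for any $\Delta$,
\[
\tproj{\sys{\Delta}{P}} = \Pi_{\id \in \mathit{dom}(\Delta)} \node{\Delta(\id)}{\proj{P}{?}{\emptyset} \parop \proj{P}{\id}{\emptyset}},
\]
and similarly for $Q$. Since $\Delta$ is the same in both expressions, it suffices to show that for every $\id \in \mathit{dom}(\Delta)$ the corresponding nodes are structurally congruent, and then lift to the full parallel composition using the congruence properties of $\nparop$ from Table~\ref{tab:structnets}.

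Next, I fix an arbitrary $\id \in \mathit{dom}(\Delta)$. Assuming the hypothesis $P \equiv Q$, Lemma~\ref{lem:structP} applied with $r = \;?$ gives $\proj{P}{?}{\emptyset} \equiv \proj{Q}{?}{\emptyset}$, and applied with $r = \id$ gives $\proj{P}{\id}{\emptyset} \equiv \proj{Q}{\id}{\emptyset}$. Using the congruence of $\parop$ (via the associativity/commutativity rules for definitions in Table~\ref{tab:structnets}), I obtain $\proj{P}{?}{\emptyset} \parop \proj{P}{\id}{\emptyset} \equiv \proj{Q}{?}{\emptyset} \parop \proj{Q}{\id}{\emptyset}$. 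The rule $D_1 \equiv D_2 \Rightarrow \node{\state}{D_1} \equiv \node{\state}{D_2}$ then yields node-level congruence with $\state = \Delta(\id)$.

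Finally, I compose the $|\mathit{dom}(\Delta)|$-many pointwise node-level congruences using the associativity and commutativity of $\nparop$ to conclude that the two projected networks are structurally congruent. No induction on the derivation of $P \equiv Q$ is needed here because all the real work was already done inside Lemma~\ref{lem:structP}; the only subtlety is that the projection mixes a reactive ($?$) component, which is the same shape for every node, with an active ($\id$) component that depends on the node identifier, but both are handled uniformly by the two invocations of Lemma~\ref{lem:structP}. I do not foresee any obstacle, as the argument is essentially mechanical lifting.
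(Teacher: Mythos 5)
Your proposal is correct and follows essentially the same route as the paper: both unfold the definition of $\tproj{\cdot}$, invoke Lemma~\ref{lem:structP} pointwise for the $?$- and $\id$-projections, and lift the resulting definition-level congruences to the network level. You merely make explicit the lifting steps (congruence of $\parop$, the node rule $D_1 \equiv D_2 \Rightarrow \node{\state}{D_1} \equiv \node{\state}{D_2}$, and composition under $\nparop$) that the paper leaves implicit.
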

\begin{proof}
The proof proceeds by relying on the definition of the projection function and Lemma \ref{lem:structP}. By definition we have that: 

$$\tproj{\sys{\Delta}{P}}  \deff  \Pi_{\forall \id \in \mathit{dom}(\Delta)} 
(\node{\Delta(\id)}{
\proj{P}{?}{\emptyset} \parop \proj{P}{\id}{\emptyset}})$$ 

$$\tproj{\sys{\Delta}{Q}}  \deff  \Pi_{\forall \id \in \mathit{dom}(\Delta)} 
(\node{\Delta(\id)}{
\proj{Q}{?}{\emptyset} \parop \proj{Q}{\id}{\emptyset}})$$

Since $P\equiv Q$, we have that, by Lemma \ref{lem:structP} and regardless of $\Delta$, $\proj{P}{?}{\emptyset}\equiv\proj{Q}{?}{\emptyset}$ and $\proj{P}{id}{\emptyset}\equiv\proj{Q}{id}{\emptyset}$. Thus $\tproj{\sys{\Delta}{P}} \equiv \tproj{\sys{\Delta}{Q}}$ as required. 

\end{proof}

The soundness of the operational correspondence (i.e., the first statement of Theorem 1) is proved in the following lemma.
\begin{lemma}[Soundness]\label{lem:sound}
If $\sys{\Delta}{P} \rTo{}{} \sys{\Delta'}{Q}$ then $\tproj{\sys{\Delta}{P}} \rTo{\hat{\lambda}}{\!\!_\equiv\;} \tproj{\sys{\Delta'}{Q}}$.
\end{lemma}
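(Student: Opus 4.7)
The plan is to proceed by induction on the derivation of $\sys{\Delta}{P} \rTo{}{} \sys{\Delta'}{Q}$, with the three axiom rules $\rulename{Bin}$, $\rulename{Brd}$, $\rulename{Loc}$ as the base cases and $\rulename{Synch}$, $\rulename{Id}$, $\rulename{Sum}$, $\rulename{Par}$, $\rulename{Struct}$ as the inductive cases. For each case the goal is to exhibit an explicit network transition on $\tproj{\sys{\Delta}{P}}$ and then establish, using Lemma~\ref{lem:outprojred} together with Lemma~\ref{lem:structP} and the structural congruence axioms for definitions (in particular the absorption of duplicate persistent inputs), that the target network is $\equiv$ to $\tproj{\sys{\Delta'}{Q}}$.

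For the $\rulename{Bin}$ case, the starting protocol is $\idop{\id}(\filter{o}{i}P\plusop S)$, so by unfolding the projection the node $\id$ contributes a choice summand $\mconfig{o}\fun^{d}!$ (from $\proj{\cdot}{!}{\emptyset}$ of the $\filter{o}{i}P$ branch), while the node $\id'=d(\Delta,\id)$ contributes the persistent input $\mconfig{i}\fun^{d}?.\proj{P}{!}{\emptyset}$ from its reactive projection. Firing $\rulename{oBinU}$/$\rulename{oBinR}$ on $\id$ and $\rulename{iBin}$ on $\id'$, and composing via $\rulename{Com}$ with all other nodes interleaved through $\rulename{Par}$, yields a $\tau$ transition with the same state update $\Delta'$. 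On the residual side the enabling summand is consumed and $\id'$ now exposes $\proj{P}{!}{\emptyset} \parop \proj{P}{?}{\emptyset}$ as the new enabling behaviour alongside the persistent input; by Lemma~\ref{lem:outprojred} the $?$ and $!$ projections of $\filter{o}{i}P\plusop S$ coincide with those of $\filter{o}{i}(\idop{\id'}P)\plusop S$ at every node, and the new active summand for $\id'$ is exactly what $\rulename{pNode}$ introduces in the continuation, so the residual is $\equiv$ to $\tproj{\sys{\Delta'}{Q}}$. The $\rulename{Brd}$ case follows the same pattern but iterates $\rulename{Com}$ over all direct children: those in $\ids$ fire $\rulename{iBrd}$ (adding new active projections via $\idop{\tilde{\ids}}$), while the children outside $\ids$ fire $\rulename{dBrd}$ (since either the condition fails or they have no matching input), producing a broadcast label $\linktr{\id!\toall}{\fun}$. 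The $\rulename{Loc}$ case is the simplest: output and input summands with direction $\toself$ collocated on the same node synchronise via $\rulename{Self}$, giving a $\mconfig{o\wedge i}\fun$ transition that is lifted to $\tau$ through $\rulename{Loc}$ at the network level.

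The inductive cases $\rulename{Synch}$, $\rulename{Id}$, $\rulename{Sum}$, $\rulename{Par}$ are resolved by the induction hypothesis together with the key observation (Lemma~\ref{lem:outprojred}) that reactive and enabling projections are invariant under reduction, so the only change in the target projection occurs at the active ($\id$) components. Each of these closure rules corresponds, at the projected level, to either a trivial reassociation of parallel components (handled via $\equiv$) or a lifting through $\rulename{Par}$. For $\rulename{Struct}$ one appeals to Lemma~\ref{lem:struct} to transport projections across $\equiv$, and then to Lemma~\ref{lem:congstruct} to carry the network transition back across $\equiv$ as well.

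The main obstacle I anticipate is the bookkeeping around the $\rulename{Brd}$ case and the interplay between the persistent inputs and the freshly introduced active projections $\idop{\tilde{\ids}}$ on the residual side. Concretely, one must argue that after the broadcast the parallel composition $\parop_{\id'\in\ids}\proj{P}{!}{\emptyset}$ appearing on each reacting node matches the active projection $\proj{(\idop{\tilde{\ids}}P) \plusop S}{\id'}{\emptyset}$ of the continuation protocol, while the children outside $\ids$ remain syntactically unchanged on both sides. This requires using $\rulename{idNode}$/$\rulename{pNode}$ carefully and appealing to replicability (Lemma~\ref{lem:E}) to absorb the duplicated reactive copies produced by unfolding the input on each firing child. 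A secondary concern is the $\rulename{Struct}$ case when the congruence rewriting exposes an unfolded recursion underneath the active node; here Lemma~\ref{lem:recdel} ensures the $r$-projections agree before and after unfolding, so the inductive step composes cleanly.
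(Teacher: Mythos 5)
Your proposal is correct and follows essentially the same route as the paper's proof: induction on the derivation with \rulename{Bin}/\rulename{Brd}/\rulename{Loc} as base cases resolved by explicit \rulename{Com}/\rulename{Par} transitions, the closure rules resolved by the induction hypothesis together with Lemma~\ref{lem:outprojred} and Lemma~\ref{lem:cong}, and \rulename{Struct} handled via Lemma~\ref{lem:struct} (your explicit appeal to Lemma~\ref{lem:congstruct} there is if anything slightly more careful than the paper). The only nit is that the enabling behaviour activated at the reacting node is introduced by \rulename{idNode} rather than \rulename{pNode}, but this does not affect the argument.
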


\begin{proof}   
The proof proceeds by induction on the length of the derivation $\sys{\Delta}{P} \rTo{}{} \sys{\Delta}{Q}$.
\begin{description}
\item[Case 1] Rule \rulename{Bin} is applied: We have that $\sys{\Delta}{\idop{\id}(\filter{o}{i}P\plusop S)}\red\\
\sys{\Delta'}{\filter{o}{i}(\idop{{\id'}}P)\plusop S}$ and we need to prove that $\tproj{\sys{\Delta}{\idop{\id}(\filter{o}{i}P\plusop S)}}\rTo{\hat{\lambda}}{\!\!_\equiv\;}\tproj{\sys{\Delta'}{\filter{o}{i}(\idop{{\id'}}P)\plusop S}}$.

Since a binary interaction happened in the global model, we know that there must be a sender $\id$ with $\Delta(\id)=\state_1$ and a receiver $\id'$ with $\Delta(\id')=\state_2$ such that for some $d\in\mset{\toup,\toright}$ we have that $d(\Delta,\id)=id'$, $\state_1\models o$ and $\state_2\models i$. As a result of synchronisation on $f$, we have also that $\state'_1= \func{d}{}!(\state_1, \id')$ and $\state'_2= \func{d}{}?(\state_2,\id)$. This can be concluded from the definition of  $\upd(\id,\id',\func{d}{},\Delta)$ and thus  
$\Delta'=\Delta[ \id \mapsto \func{d}{}!(\state_1, \id'), \id' \mapsto \func{d}{}?(\state_2,\id)]$.

From the definition of the main projection rule in Table~\ref{tab:proj}, we have that:
$$\tproj{\sys{\Delta}{Q}}=\net\ \|\ \node{s_1}{\proj{Q}{?}{\emptyset} \parop \proj{Q}{id}{\emptyset}}\ \|\ \node{s_2}{\proj{Q}{?}{\emptyset} \parop \proj{Q}{id}{\emptyset}}$$

Where $Q=\idop{\id}(\filter{o}{i}P\plusop S)$ and $\net$ is the rest of the nodes in the network. We do not expand $\net$ because it does not contribute to the transition. By Table~\ref{tab:proj}, we have that 

$\node{s_1}{\proj{Q}{?}{\emptyset} \parop \proj{Q}{id}{\emptyset}} \deff  \node{\state_1}{
\mconfig{i}\func{d}{}?.\proj{P}{!}{\emptyset}\parop\proj{P}{?}{\emptyset}\parop \proj{S}{?}{\emptyset}\parop \parop {\proj{{P}}{\id}{\emptyset}\parop \proj{{S}}{\id}{\emptyset} \parop( \mconfig{o}\func{d}{}!} \plusop \proj{{S}}{!}{\emptyset})}$

$\node{s_2}{\proj{Q}{?}{\emptyset} \parop \proj{Q}{id}{\emptyset}}  \deff  \node{\state_2}{
\mconfig{i}\func{d}{}?.\proj{P}{!}{\emptyset}\parop\proj{P}{?}{\emptyset}\parop \proj{S}{?}{\emptyset}\parop {\proj{{P}}{\id'}{\emptyset}\parop \proj{{S}}{\id'}{\emptyset}}}$

The overall network evolves by rule \rulename{Com} where $\node{s_1}{D_1}$ applies either rule \rulename{oBinU} or rule \rulename{oBinR} and $\node{s_2}{D_2}$ applies rule \rulename{iBin}, we have that:

$$\net\ \|\ \node{s_1}{\proj{Q}{?}{\emptyset} \parop \proj{Q}{id}{\emptyset}}\ \|\ \node{s_2}{\proj{Q}{?}{\emptyset} \parop \proj{Q}{id'}{\emptyset}}\rTo{\tau}{}\net\ \|\ \node{\state_1}{D'_1}\ \|\ \node{\state_2}{D'_2}$$

\[
D_1'= {
\mconfig{i}\func{d}{}?.\proj{P}{!}{\emptyset}\parop\proj{P}{?}{\emptyset}\parop \proj{S}{?}{\emptyset}\parop \parop {\proj{{P}}{\id}{\emptyset}\parop \proj{{S}}{\id}{\emptyset} }}
\]

\[
D_2'= {
\mconfig{i}\func{d}{}?.\proj{P}{!}{\emptyset}\parop\proj{P}{!}{\emptyset}\parop\proj{P}{?}{\emptyset}\parop \proj{S}{?}{\emptyset}\parop {\proj{{P}}{\id'}{\emptyset}\parop \proj{{S}}{\id'}{\emptyset}}}
\]

Now, we need to show that $\tproj{\sys{\Delta'}{\filter{o}{i}(\idop{{\id'}}P)\plusop S}}\equiv\net\ \|\ \node{\state_1}{D'_1}\ \|\ \node{\state_2}{D'_2}$. By Table~\ref{tab:proj}, we have that 

$$\tproj{\sys{\Delta'}{\overbrace{\filter{o}{i}(\idop{{\id'}}P)\plusop S}^{Q'}}}=\net\ \|\ \node{s'_1}{\proj{Q'}{?}{\emptyset} \parop \proj{Q'}{id}{\emptyset}}\ \|\ \node{s'_2}{\proj{Q'}{?}{\emptyset} \parop \proj{Q'}{id}{\emptyset}}$$

By applying the projection function, we have that $D'_1\equiv \proj{Q'}{?}{\emptyset} \parop \proj{Q'}{id}{\emptyset}$ and $D'_2\equiv \proj{Q'}{?}{\emptyset} \parop \proj{Q'}{id'}{\emptyset}$ as required.

\item[Case 2] Rules \rulename{Brd} and \rulename{Loc} can be proved in a similar manner.

\item[Case 3] Rule \rulename{Synch}: We need to prove that if $\sys{\Delta}{\filter{o}{i}P}\red
\sys{\Delta'}{\filter{o}{i}P'}$ then $\tproj{\sys{\Delta}{\filter{o}{i}P}}\rTo{\hat{\lambda}}{\!\!_\equiv\;}
\tproj{\sys{\Delta'}{\filter{o}{i}P'}}$. From rule \rulename{Synch}, we know that  $\sys{\Delta}{\filter{o}{i}P}\red
\sys{\Delta'}{\filter{o}{i}P'}$ if $\sys{\Delta}{P}\red
\sys{\Delta'}{P'}$. By the induction hypothesis, we have that $\tproj{\sys{\Delta}{P}}\rTo{\hat{\lambda}}{\!\!_\equiv\;}
\tproj{\sys{\Delta'}{P'}}$. By relying on the definition of the projection function, we have that:

$$\overbrace{\Pi_{\forall \id \in \mathit{dom}(\Delta)} 
(\node{\Delta(\id)}{
\proj{P}{?}{\emptyset} \parop \proj{P}{\id}{\emptyset}})}^{\tproj{\sys{\Delta}{P}}}\rTo{\hat{\lambda}}{\!\!_\equiv\;}\overbrace{\Pi_{\forall \id \in \mathit{dom}(\Delta')} 
(\node{\Delta'(\id)}{
\proj{P'}{?}{\emptyset} \parop \proj{P'}{\id}{\emptyset}})}^{\tproj{\sys{\Delta'}{P'}}}$$ We also know by definition that $$\tproj{\sys{\Delta}{\filter{o}{i}P}}=\Pi_{\forall \id \in \mathit{dom}(\Delta)} 
(\node{\Delta(\id)}{
\mconfig{i}\func{d}{}?.\proj{P}{!}{\emptyset}\parop\proj{P}{?}{\emptyset} \parop \proj{P}{\id}{\emptyset}})$$ and $$\tproj{\sys{\Delta'}{\filter{o}{i}P'}}=\Pi_{\forall \id \in \mathit{dom}(\Delta')} 
(\node{\Delta'(\id)}{
\mconfig{i}\func{d}{}?.\proj{P'}{!}{\emptyset}\parop\proj{P'}{?}{\emptyset}  \parop \proj{P'}{\id}{\emptyset}})$$

We can prove that $\tproj{\sys{\Delta}{\filter{o}{i}P}}\rTo{\hat{\lambda}}{\!\!_\equiv\;}
\tproj{\sys{\Delta'}{\filter{o}{i}P'}}$ by the induction hypothesis, by Lemma~\ref{lem:outprojred}, we have that $\proj{P}{!}{\emptyset}\equiv \proj{P'}{!}{\emptyset}$ and by Lemma~\ref{lem:cong} and finally by applying rule \rulename{Com} or rule \rulename{Par} depending on $\hat{\lambda}$ we conclude the proof.

%
%

\item[Case 4] Rule \rulename{Id}: We need to prove that if $\sys{\Delta}{\idop{\id}P}\red
\sys{\Delta'}{\idop{\id}P'}
$ then $\tproj{\sys{\Delta}{\idop{\id}P}}\rTo{\hat{\lambda}}{\!\!_\equiv\;}\tproj{\sys{\Delta'}{\idop{\id}P'}}$. From rule \rulename{Id}, we know that  $\sys{\Delta}{\idop{id}P}\red
\sys{\Delta'}{\idop{id}P'}$ if $\sys{\Delta}{P}\red
\sys{\Delta'}{P'}$. By the induction hypothesis, we have that $\tproj{\sys{\Delta}{P}}\rTo{\hat{\lambda}}{\!\!_\equiv\;}
\tproj{\sys{\Delta'}{P'}}$. 

By the definition of the projection function, we can rewrite the projection with respect to a single node $\id'$ with $\Delta(\id')=\state_1$ as follows: $\tproj{\sys{\Delta}{\idop{id}P}}=\net\ \|\ \node{\Delta(\id')}{
\proj{\idop{id}P}{?}{\emptyset} \parop \proj{\idop{id}P}{\id'}{\emptyset}}$ where $\net$ is the rest of the nodes. By the induction hypothesis we have that:

$$
\net\ \|\ \node{\Delta(\id')}{
 \proj{P}{?}{\emptyset} \parop \proj{P}{\id'}{\emptyset}}\rTo{\hat{\lambda}}{} \net'\ \|\
\node{\Delta'(\id')}{
\proj{P'}{?}{\emptyset} \parop \proj{P'}{\id'}{\emptyset}}$$ 

We have two cases for the projection of $\tproj{\sys{\Delta}{\idop{id}P}}$: The case when $\id'\neq\id$ and the other one when $\id'=\id$. The projection according to the former case proceeds as follows: 

$$\tproj{\sys{\Delta}{\idop{id}P}}=\net\ \|\ 
\node{\Delta(\id')}{\proj{P}{?}{\emptyset} \parop \proj{P}{\id'}{\emptyset}}$$ 
and 
$$\tproj{\sys{\Delta'}{\idop{id}P'}}=\net'\ \|\
\node{\Delta'(\id')}{\proj{P'}{?}{\emptyset}  \parop \proj{P'}{\id'}{\emptyset}}$$

Clearly, this case follows directly by the induction hypothesis. For the latter case, $id'= id$, the projection proceeds as follows:

$$\tproj{\sys{\Delta}{\idop{id}P}}=\net\ \|\ 
\node{\Delta(\id')}{\proj{P}{?}{\emptyset} \parop \proj{P}{!}{\emptyset} \parop \proj{P}{\id'}{\emptyset}}$$ 
and 
$$\tproj{\sys{\Delta'}{\idop{id}P'}}=\net'\ \|\
\node{\Delta'(\id')}{\proj{P'}{?}{\emptyset} \parop \proj{P'}{!}{\emptyset} \parop \proj{P'}{\id'}{\emptyset}}$$

We have that $\tproj{\sys{\Delta}{\idop{id}P}}\rTo{\hat{\lambda}}{\!\!_\equiv\;}
\tproj{\sys{\Delta'}{\idop{id}P'}}$ can be proved by the induction hypothesis, by Lemma~\ref{lem:outprojred}, we have that $\proj{P}{!}{\emptyset}\equiv \proj{P'}{!}{\emptyset}$, by applying Lemma~\ref{lem:cong} and finally by applying rule \rulename{Com} or rule \rulename{Par} depending on $\hat{\lambda}$ we conclude the proof.

\item[Case 5] Rule \rulename{Sum}: We need to prove that if $\sys{\Delta}{P_1\plusop P_2}\red
\sys{\Delta'}{P'_1\plusop P_2}
$ then $\tproj{\sys{\Delta}{P_1\plusop P_2}}\rTo{\hat{\lambda}}{\!\!_\equiv\;}\tproj{\sys{\Delta'}{P'_1\plusop P_2}}$. From rule \rulename{Sum}, we know that  $\sys{\Delta}{P_1\plusop P_2}\red
\sys{\Delta'}{P'_1\plusop P_2}$ if $\sys{\Delta}{P_1}\red
\sys{\Delta'}{P'_1}$. By the induction hypothesis, we have that $\tproj{\sys{\Delta}{P_1}}\rTo{\hat{\lambda}}{\!\!_\equiv\;}
\tproj{\sys{\Delta'}{P'_1}}$. 

By the definition of the projection function, we can rewrite the projection with respect to a single node $\id'$ with $\Delta(\id')=\state_1$ as follows: $\tproj{\sys{\Delta}{P_1\plusop P_2}}=\net\ \|\ \node{\Delta(\id')}{
\proj{P_1\plusop P_2}{?}{\emptyset} \parop \proj{P_1\plusop P_2}{\id'}{\emptyset}}$ where $\net$ is the rest of the nodes. By the induction hypothesis we have that:

$$
\net\ \|\ \node{\Delta(\id')}{
 \proj{P_1}{?}{\emptyset} \parop \proj{P_1}{\id'}{\emptyset}}\rTo{\hat{\lambda}}{} \net'\ \|\
\node{\Delta'(\id')}{
\proj{P'_1}{?}{\emptyset} \parop \proj{P'_1}{\id'}{\emptyset}}$$ 

The projection of $\tproj{\sys{\Delta}{P_1\plusop P_2}}$ proceeds as follows: 

$$\tproj{\sys{\Delta}{P_1\plusop P_2}}=\net\ \|\ 
\node{\Delta(\id')}{\proj{P_1}{?}{\emptyset} \parop \proj{P_1}{\id'}{\emptyset}\parop \proj{P_2}{?}{\emptyset} \parop \proj{P_2}{\id'}{\emptyset}}$$ 
and 
$$\tproj{\sys{\Delta'}{P'_1\plusop P_2}}=\net'\ \|\
\node{\Delta'(\id')}{\proj{P'_1}{?}{\emptyset}  \parop \proj{P'_1}{\id'}{\emptyset}\parop \proj{P_2}{?}{\emptyset} \parop \proj{P_2}{\id'}{\emptyset}}$$

We have that $\tproj{\sys{\Delta}{P_1\plusop P_2}}\rTo{\hat{\lambda}}{\!\!_\equiv\;}
\tproj{\sys{\Delta'}{P'_1\plusop P_2}}$ can be proved by the induction hypothesis, by applying Lemma~\ref{lem:cong} and finally by applying rule \rulename{Com} or rule \rulename{Par} depending on $\hat{\lambda}$ we conclude the proof.

\item[Case 6] Rule \rulename{Par}: We need to prove that if $\sys{\Delta}{P_1\parop P_2}\red
\sys{\Delta'}{P'_1\parop P_2}
$ then $\tproj{\sys{\Delta}{P_1\parop P_2}}\rTo{\hat{\lambda}}{\!\!_\equiv\;}\tproj{\sys{\Delta'}{P'_1\parop P_2}}$. This case can be proved in a similar manner to the previous case but by using rule \rulename{Par} instead of rule \rulename{Sum}.

\item[Case 7] Rule \rulename{Struct} We need to prove that if $\sys{\Delta}{P}\red
\sys{\Delta'}{Q}
$ then $\tproj{\sys{\Delta}{P}}\rTo{\hat{\lambda}}{\!\!_\equiv\;}\tproj{\sys{\Delta'}{Q}}$.

From rule \rulename{Struct}, we have that $\sys{\Delta}{P}\red
\sys{\Delta'}{Q}
$ if $\sys{\Delta}{P'}\red
\sys{\Delta'}{Q'}
$ where $P'\equiv P$ and $Q'\equiv Q$. By relying on Lemma \ref{lem:struct}, we have that $\tproj{\sys{\Delta}{P}}\equiv\tproj{\sys{\Delta}{P'}}$ and $\tproj{\sys{\Delta}{Q}}\equiv\tproj{\sys{\Delta}{Q'}}$ and by the induction hypothesis we have that $\tproj{\sys{\Delta}{P}}\rTo{\hat{\lambda}}{\!\!_\equiv\;}\tproj{\sys{\Delta'}{Q}}$ as required.
\end{description}
\end{proof}

\subsection{Operational Correspondence - Completeness}
\label{sec:completeness}

The proofs presented in the remainder make use of the principle captured by Lemma~\ref{lem:structP},
which attests that the projection of structurally equivalent protocols yields structurally equivalent definitions.

We first characterise definitions in general (Proposition~\ref{pro:normalform}) and then the ones resulting from the different 
types of projection (Lemma~\ref{lem:projnormalform}), used in
the completeness proof (Lemma~\ref{lem:completeness}) to help characterise the definitions in the interacting nodes. 
We then state a crucial auxiliary result for the completeness proof that characterises the projection of the active nodes
(Lemma~\ref{lem:projactive}),
since reduction in configurations is given in terms of changes in active nodes. We also ensure that synchronisation
actions have unique counterparts in the respective projection (Lemma~\ref{lem:filterproj} and Lemma~\ref{lem:inpuni})
and that output (choices) occurring in the result of a projection have a synchronisation action (summation) counterpart
in the source protocol (Lemma~\ref{lem:projinv}). Such relations allow us to precisely identify the synchronisation action in the protocol related
to the interacting input/output definitions in the network. We then present the results that allow to characterise networks
and definitions based on observable behaviour (Lemma~\ref{lem:invnettau} - Lemma~\ref{lem:invbroout}), used in the
proof to characterise the interacting nodes. Finally,
we state that reduction is closed in (active) language contexts (Lemma~\ref{lem:redclosurecntxt}).

We start by informing on the structure of definitions in general, namely that any choice
can be described as a summation of outputs, that reactions are a parallel composition of choices,
and that definitions are a parallel composition of input prefixed reactions in parallel with a reaction.
In the following we use $\Pi_{i \in I} D_i$ to abbreviate $D_1 \parop \ldots \parop D_k$, 
and $\Sigma_{i \in I} D_i$ to abbreviate $D_1 \plusop \ldots \plusop D_k$,
when $I = 1,\ldots,k$ (if $I = \emptyset$ then $\Pi_{i \in I} D_i$ denotes $\zero$). 

\begin{proposition}[Normal Form]
\label{pro:normalform}
For any choice $C$, reaction $R$, and definition $D$ we have that 
$$C \equiv \Sigma_{l \in L} \mconfig{c_l}\fun_{l}^{d_l}!$$
$$R \equiv \Pi_{j \in J} C_j$$
$$D \equiv (\Pi_{i \in I} \mconfig{c_i}\fun_{i}^{d_i}?.{R_i}) \parop R'$$
\end{proposition}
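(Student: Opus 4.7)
The plan is to prove the three clauses by separate structural inductions, in the order stated, since each result feeds naturally into the next.

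For the first clause, I would induct on the syntactic structure of $C$ given by the grammar $C ::= \mconfig{c}\fun^{d}! \parop C \plusop C$. The base case $C = \mconfig{c}\fun^{d}!$ is immediate with $L$ a singleton. In the inductive step $C = C_1 \plusop C_2$, apply the induction hypothesis to each branch, obtaining two indexed sums over disjoint (after renaming) sets $L_1,L_2$; combining them via associativity and commutativity of $\plusop$ from Table~\ref{tab:structnets} yields $\Sigma_{l \in L_1 \cup L_2}\mconfig{c_l}\fun_l^{d_l}!$.

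For the second clause, I would induct on $R$ given by $R ::= C \parop \zero \parop R \parop R$. Take $R = \zero$ with $J = \emptyset$ (adopting the convention that $\Pi_{j \in \emptyset}C_j$ denotes $\zero$, justified by $D \parop \zero \equiv D$); take $R = C$ with $J$ a singleton; and for $R = R_1 \parop R_2$, invoke the induction hypothesis on each side and merge the parallel compositions using associativity/commutativity of $\parop$. The third clause follows the same pattern by induction on $D ::= \mconfig{c}\fun^{d}?.R \parop R \parop D \parop D$: when $D$ is an input, take a singleton $I$ and $R' = \zero$; when $D$ is a reaction, take $I = \emptyset$ and $R' = R$; for $D = D_1 \parop D_2$, apply the induction hypothesis to get $D_k \equiv (\Pi_{i \in I_k}\mconfig{c_i}\fun_i^{d_i}?.R_i) \parop R_k'$ for $k=1,2$, and reassemble via associativity/commutativity to gather all inputs into a single indexed product over $I_1 \cup I_2$ in parallel with $R_1' \parop R_2'$, which by the second clause is itself equivalent to a parallel composition of choices.

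There is no real obstacle here; the statement is a normal-form result that mirrors the grammar, and structural congruence provides exactly the associativity, commutativity, and unit ($\zero$) axioms needed to absorb the inductive reshuffling. The only care required is bookkeeping: fixing the convention for empty indexed products/sums and, for $D = D_1 \parop D_2$, keeping the input-prefixed summands and the plain reaction residues separate in the induction hypotheses so that they recombine cleanly into the claimed shape.
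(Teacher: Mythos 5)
Your proof is correct and follows essentially the same route as the paper's: a structural induction on each of $C$, $R$, and $D$ in turn, with the base cases immediate and the composite cases resolved by gathering the sums/products from the induction hypotheses via the associativity, commutativity, and unit axioms of structural congruence. The paper's own proof is just a terser sketch of exactly this argument (including the use of $D \parop \zero \equiv D$ for the degenerate cases), so no further comparison is needed.
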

\begin{proof}
By induction of the structure of $C$, $R$, and $D$, following expected lines. 
In case $C$ is $\mconfig{c}\fun^{d}!$ the result is direct, and in case $C$ is $C_1 \plusop C_2$
the result follows by gathering the summations obtained by induction hypothesis. In case
$R$ is $C$ or $\zero$ the result is direct, and in case $R$ is $R_1\parop R_2$ the result follows
 by gathering the products obtained by induction hypothesis. In case $D$ is
$\mconfig{c}\fun^{d}?.{R}$ or $R$ the result is direct (since $D \parop \zero \equiv D$),
and in case $D$ is $D_1 \parop D_2$ the result follows by gathering the products and
reactions obtained by induction hypothesis.
\end{proof}

The following result characterises the result of the different types of projection, namely
that $!$-projection and $\id$-projections yield reactions and that $?$-projection yields 
a parallel composition of input prefixed reactions. We also identify an auxiliary property
that ensures the $!$-projection of recursion guarded processes yields a reaction regardless
of the mapping considered in the projection.

\begin{lemma}[Projection Normal Form]
\label{lem:projnormalform}
Let $P$ and $\sigma$ be such that for all $\varX \in \mathit{fv}(P)$ and any $\sigma''$
it is the case that $\proj{\sigma(\varX)}{!}{\sigma''} \equiv R'''$. We have that
\begin{enumerate}
\item If recursion is guarded in $P$ then $\proj{P}{!}{\sigma'} \equiv R$ for any $\sigma'$.
\item $\proj{P}{!}{\sigma} \equiv R'$.
\item $\proj{P}{?}{\sigma} \equiv \Pi_{i \in I} \mconfig{c_i}\fun_{i}^{d_i}?.{R_i}$.
\item $\proj{P}{\id}{\sigma} \equiv R''$.
\end{enumerate}
\end{lemma}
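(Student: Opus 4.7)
The plan is to establish all four claims simultaneously by structural induction on $P$, unfolding each projection rule from Table~\ref{tab:proj} and recomposing the results using Proposition~\ref{pro:normalform}. The base cases are immediate: $P = \zero$ yields $\zero$ in every projection, which fits all four shapes (taking $I = \emptyset$ for claim 3); for $P = \varX$, rule $\rulename{pVar}$ gives $\zero$ for the $?$- and $\id$-projections, while rule $\rulename{oVar}$ sends the $!$-projection to $\proj{\sigma(\varX)}{!}{\sigma}$, which is a reaction by the lemma's hypothesis on $\mathit{fv}(P)$.

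For the inductive cases, the projection rules decompose the work along the top-level connective of $P$. The key case is $P = \filter{o}{i}Q$: rule $\rulename{oSynch}$ makes the $!$-projection the singleton output $\mconfig{o}\fun^{d}!$, a choice and hence a reaction by Proposition~\ref{pro:normalform}; rule $\rulename{idSynch}$ reduces the $\id$-projection to $\proj{Q}{\id}{\sigma}$, handled by IH; and rule $\rulename{iSynch}$ produces $\mconfig{i}\fun^{d}?.\proj{Q}{!}{\sigma} \parop \proj{Q}{?}{\sigma}$, where IH on $Q$ via claim 2 justifies that the input's continuation is a reaction (so the prefixed term is a valid input prefixed reaction) and IH via claim 3 supplies the parallel-of-input-prefixed-reactions form for the right conjunct. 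The cases $P = P_1 \parop P_2$, $P = S_1 \plusop S_2$, and $P = \idop{\id'}Q$ proceed analogously: the projection distributes over the connective, and Proposition~\ref{pro:normalform}'s closure of reactions under $\parop$ and of choices under $\plusop$ assembles the conclusion. For claim 2 applied to a summation, I additionally observe by a sub-induction on the summation structure that the $!$-projection of a summation is in fact a choice, which is required to validly form the $\plusop$ at the top in rule $\rulename{oSum}$.

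The delicate case is $P = \rec{Q}$, where rule $\rulename{pRec}$ rewrites each projection to the corresponding projection of $Q$ under the extended mapping $\sigma[\varX \mapsto Q]$. For claims 2, 3, and 4, to invoke IH on $Q$ I need the lemma's hypothesis to transfer, that is, to verify $\proj{T}{!}{\sigma''} \equiv R'''$ for any $T$ in the image of the new mapping. For variables inherited from $\sigma$ this is immediate from the original hypothesis (since $\mathit{fv}(Q) \setminus \{\varX\} \subseteq \mathit{fv}(P)$), and for the new entry $\varX \mapsto Q$ this is exactly claim 1 applied to $Q$, which holds because guardedness in $\rec{Q}$ entails guardedness in $Q$.

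The main obstacle is securing claim 1 itself in the recursion case, since a naive induction on $P$ keeps accumulating entries in $\sigma'$ at every crossing of a $\rec{}$. I plan to circumvent this by noting that, under guarded recursion, the $!$-projection never actually consults a mapping entry bound by an enclosing $\rec{}$: every occurrence of the recursion variable in the body sits under at least one synchronisation prefix, at which rule $\rulename{oSynch}$ returns the output without recursing into the continuation. This can be formalised via a secondary well-founded measure on $P$ that counts only non-prefix constructors (so subterms under $\filter{o}{i}\cdot$ contribute zero), which is finite precisely due to guardedness and strictly decreases when stepping from $\rec{Q}$ to $Q$. Alternatively, the mapping-independence needed for claim 1 is already asserted by Lemma~\ref{lem:AC}(1) for guarded protocols, so the argument for claim 1 can be threaded through that earlier result, leaving the remaining inductive steps routine.
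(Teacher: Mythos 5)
Your proof is correct and follows essentially the same route as the paper's: a structural induction on $P$ establishing the four items together, with item~1 invoked in the $\rec{Q}$ case to discharge the lemma's hypothesis for the extended mapping $\sigma[\varX \mapsto Q]$ before applying the induction hypothesis for items~2--4. The only superfluous element is the secondary well-founded measure you propose for item~1: since that item is universally quantified over the mapping, plain structural induction already goes through --- the $!$-projection of $\filter{o}{i}Q$ returns $\mconfig{o}\fun^{d}!$ without recursing, so guarded occurrences of $\varX$ are never reached (exactly the observation you make), and the paper's proof simply applies the induction hypothesis for item~1 on $Q$ at the instance $\sigma'[\varX \mapsto Q]$.
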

\begin{proof}
By induction on the structure of P, where each item relies on previous ones (except for \emph{3.} and \emph{4.} that are independent).

\noindent{\bf (Case $P$ is $\filter{o}{i}Q$)}

{\emph{1.}}
Direct since $\proj{\filter{o}{i}Q}{!}{\sigma'}$ is defined as $\mconfig{o}\fun^{d}!$.

{\emph{2.}}
Likewise.

{\emph{3.}}
We have that $\proj{\filter{o}{i}Q}{?}{\sigma}$ by definition 
is $\mconfig{i}\fun^{d}?.{\proj{Q}{!}{\sigma}} \parop \proj{Q}{?}{\sigma} $. By induction
hypothesis we have that $\proj{Q}{?}{\sigma} \equiv \Pi_{i \in I} \mconfig{c_i}\fun_{i}^{d_i}?.{R_i}$
and by \emph{2.} we have that ${\proj{Q}{!}{\sigma}}\equiv R'$, hence we conclude
$\proj{\filter{o}{i}Q}{?}{\sigma} \equiv \mconfig{i}\fun^{d}?.R' \parop \Pi_{i \in I} \mconfig{c_i}\fun_{i}^{d_i}?.{R_i}$.

\emph{4.}
We have that $\proj{\filter{o}{i}Q}{\id}{\sigma}$ by definition is $\proj{Q}{\id}{\sigma}$
hence the result follows directly from the induction hypothesis.

\noindent{\bf(Case $P$ is $\idop{\id'}Q$)}

{\emph{1.}}
We have that $\proj{\idop{\id'}Q}{!}{\sigma'}$ by definition is $\proj{Q}{!}{\sigma'}$ hence the result follows directly from the induction hypothesis.

{\emph{2-3.}}
Likewise.

{\emph{4.}}
We have that $\proj{\idop{\id'}Q}{\id}{\sigma}$ by definition is $\proj{Q}{\id}{\sigma}$ if $\id \neq \id'$ in which case the result follows directly from the induction hypothesis. In case $\id = \id'$ we have that $\proj{\idop{\id'}Q}{\id}{\sigma}$ is defined as $\proj{Q}{\id}{\sigma} \parop 
\proj{Q}{!}{\sigma}$. By \emph{2.} we have that $\proj{Q}{!}{\sigma} \equiv R'$ and by induction hypothesis we have that
$\proj{Q}{\id}{\sigma} \equiv R''$ hence $\proj{\idop{\id'}Q}{\id}{\sigma} \equiv R'' \parop R'$.

\noindent{\bf(Case $P$ is $\zero$)}

{\emph{1.}}
Direct since $\proj{\zero}{!}{\sigma'}$ is defined as $\zero$.

{\emph{2-4.}}
Likewise.

\noindent{\bf(Case $P$ is $\varX$)}

{\emph{1.}}
Does not apply.

{\emph{2.}}
We have that $\proj{\varX}{!}{\sigma}$ by definition is $\proj{\sigma(\varX)}{!}{\sigma}$ and by hypothesis we have that 
$\proj{\sigma(\varX)}{!}{\sigma''} \equiv R'''$ for any $\sigma''$.

{\emph{3.}}
Direct since $\proj{\varX}{?}{\sigma}$ by definition is $\zero$.

{\emph{4.}}
Likewise.

\noindent{\bf(Case $P$ is $\rec{Q}$)}

{\emph{1.}}
We have that $\proj{\rec{Q}}{!}{\sigma'}$ is defined as $\proj{Q}{!}{\sigma'[\varX \mapsto Q]}$. By induction hypothesis
we have that $\proj{Q}{!}{\sigma''} \equiv R$ for any $\sigma''$, hence $\proj{\rec{Q}}{!}{\sigma'} \equiv R$ for any $\rho'$.

{\emph{2.}}
We have that $\proj{\rec{Q}}{!}{\sigma}$ is defined as $\proj{Q}{!}{\sigma[\varX \mapsto Q]}$. By \emph{1.} we conclude
that $\proj{Q}{!}{\sigma''} \equiv R$ for any $\sigma''$. Then, by induction hypothesis 
we have that $\proj{Q}{!}{\sigma[\varX \mapsto Q]} \equiv R'$ and hence $\proj{\rec{Q}}{!}{\sigma} \equiv R'$. 

{\emph{3-4.}}
Likewise.

\noindent{\bf(Case $P$ is $Q_1 \parop Q_2$)}

{\emph{1.}}
We have that $\proj{Q_1 \parop Q_2}{!}{\sigma'}$ by definition is $\proj{Q_1}{!}{\sigma'} \parop \proj{Q_2}{!}{\sigma'}$
hence the result follows from the induction hypothesis.

{\emph{2-4.}}
Likewise.

\noindent{\bf(Case $P$ is $S_1 \plusop S_2$)}

{\emph{1.}}
We have that $\proj{S_1 \plusop S_2}{!}{\sigma'}$ by definition is $\proj{S_1}{!}{\sigma'} \plusop \proj{S_2}{!}{\sigma'}$
hence the result follows from the induction hypothesis.

{\emph{2.}}
Likewise.

{\emph{3.}}
We have that $\proj{S_1 \plusop S_2}{?}{\sigma}$ by definition is $\proj{S_1}{?}{\sigma} \parop \proj{S_2}{?}{\sigma}$
hence the result follows from the induction hypothesis.

{\emph{4.}}
Likewise.
\end{proof}

\newcommand{\cntxt}{{\mathcal{C}}}

In order to identify the protocol language contexts where reduction may originate we introduce active contexts,
which include all language constructs except from recursion.

\begin{definition}[Active Contexts]
We denote by $\cntxt[\cdot]$ a global language term with one hole defined as follows
$$
\cntxt[\cdot] \;::= \; P \parop \cntxt[\cdot]  \quad |\quad \idop{\id}\cntxt[\cdot] \quad |\quad \filter{o}{i}\cntxt[\cdot] \quad | \quad S \plusop \cntxt[\cdot]
\quad | \quad \cdot
$$
\end{definition}


The following result relates the occurrence of an active node construct in an active context with the respective $\id$-projection, 
and states that active node constructs are ignored by other projections. Regarding the former, 
intuitively, we have that for each active node construct in the protocol the $\id$-projection yields the 
reaction obtained by the respective $!$-projection in parallel with the projection of the rest of the protocol.

\begin{lemma}[Active Node Projection]
\label{lem:projactive}
If $P \equiv \cntxt[\idop{\id}Q]$ then
 $\proj{P}{\id}{\emptyset} \equiv \proj{\cntxt[Q]}{\id}{\emptyset} \parop \proj{Q}{!}{\emptyset}$
and $\proj{P}{\pid}{\emptyset} \equiv \proj{\cntxt[Q]}{\pid}{\emptyset}$ when $\pid \neq \id$.
\end{lemma}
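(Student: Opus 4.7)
My plan is to apply Lemma~\ref{lem:structP} first, replacing $P$ with $\cntxt[\idop{\id}Q]$ in the goals (projections are invariant under structural congruence). I then proceed by \emph{mutual} induction on the structure of $\cntxt[\cdot]$, establishing both statements simultaneously. The mutual setup is essential because, as will become evident in the $\idop{\id}$ case, proving the first statement requires invoking the second with $\pid = \,!$, which is a valid instance since $! \neq \id$.

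The base case $\cntxt[\cdot] = \cdot$ is an immediate unfolding of the $\rulename{idNode}$ and $\rulename{pNode}$ clauses of Table~\ref{tab:proj}. The uniform inductive cases $P' \parop \cntxt'[\cdot]$, $\filter{o}{i}\cntxt'[\cdot]$, and $S \plusop \cntxt'[\cdot]$ follow the same recipe: the appropriate projection clause distributes the projection through the outer connective, the induction hypothesis is applied at the hole position, and the two sides are reassembled modulo the associativity and commutativity axioms of Table~\ref{tab:structnets}. The $?$-projection subcase of $\filter{o}{i}\cntxt'[\cdot]$ additionally calls on the second IH with $\pid = \,!$ to rewrite the body of the generated input prefix, closed by congruence of $\equiv$; the $!$-projection subcase of that same context is trivial since $\proj{\filter{o}{i}\cntxt'[\cdot]}{!}{\emptyset}$ is just $\mconfig{o}\fun^{d}!$ and the hole is discarded on both sides.

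The delicate case is $\cntxt[\cdot] = \idop{\id'}\cntxt'[\cdot]$. When $\id' \neq \id$ both projections simply strip the prefix (clause $\rulename{pNode}$) and IH closes each subcase; the subcase $\pid = \id'$ of the second statement produces an additional $!$-projection summand by $\rulename{idNode}$, rewritten with the second IH at $\pid = \,!$. When $\id' = \id$, the first statement unfolds via $\rulename{idNode}$ to
\[
\proj{\cntxt'[\idop{\id}Q]}{\id}{\emptyset} \parop \proj{\cntxt'[\idop{\id}Q]}{!}{\emptyset},
\]
at which point the first IH rewrites the left summand to $\proj{\cntxt'[Q]}{\id}{\emptyset} \parop \proj{Q}{!}{\emptyset}$ and the second IH (with $\pid = \,!$) rewrites the right summand to $\proj{\cntxt'[Q]}{!}{\emptyset}$; the desired right-hand side $\proj{\idop{\id}\cntxt'[Q]}{\id}{\emptyset} \parop \proj{Q}{!}{\emptyset}$ unfolds to the same expression modulo commutativity of $\parop$. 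The matching subcase of the second statement has $\pid \neq \id = \id'$, so $\rulename{pNode}$ applies on both sides and IH closes it directly. The one real obstacle is bookkeeping: ensuring that the extra $!$-projection summand introduced by $\rulename{idNode}$ does not spuriously duplicate $\proj{Q}{!}{\emptyset}$. The mutual induction is precisely what prevents this, because the second statement guarantees that $!$-projections of the context carry no $\proj{Q}{!}{\emptyset}$ contribution on their right-hand sides.
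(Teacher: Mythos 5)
Your proposal is correct and follows essentially the same route as the paper's proof: a single induction on the structure of $\cntxt[\cdot]$ establishing both statements simultaneously, with the crucial use of the second statement at $\pid = \,!$ to handle the extra summand introduced by $\rulename{idNode}$ in the $\idop{\id'}\cntxt'[\cdot]$ case (and to rewrite under the input prefix in the $?$-projection of $\filter{o}{i}\cntxt'[\cdot]$). The only cosmetic difference is that you invoke Lemma~\ref{lem:structP} explicitly to reduce to the syntactic case, which the paper leaves implicit.
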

\begin{proof}
By induction on the structure of $\cntxt[\cdot]$.

(Case $\cntxt[\cdot]$ is $\cdot$) 

We have that $P \equiv \idop{\id}Q$ and by definition $\proj{\idop{\id}Q}{\id}{\emptyset}$
is $\proj{Q}{\id}{\emptyset} \parop \proj{Q}{!}{\emptyset}$ and $\proj{\idop{\id}Q}{\pid}{\emptyset}$ by definition is
$\proj{Q}{\pid}{\emptyset}$ when $\pid \neq \id$.

(Case $\cntxt[\cdot]$ is $P' \parop \cntxt'[\cdot]$) 

We have that $P \equiv P' \parop \cntxt'[\idop{\id}Q]$ and by definition
(\emph{i}) $\proj{P' \parop \cntxt'[\idop{\id}Q]}{\pid}{\emptyset}$ is 
$ \proj{P'}{\pid}{\emptyset} \parop \proj{\cntxt'[\idop{\id}Q]}{\pid}{\emptyset}$,
both when $\pid = \id$ and otherwise.
By induction hypothesis we have that (\emph{ii})
$\proj{\cntxt'[\idop{\id}Q]}{\id}{\emptyset} \equiv \proj{\cntxt'[Q]}{\id}{\emptyset} \parop \proj{Q}{!}{\emptyset}$
and that (\emph{iii}) $\proj{\cntxt'[\idop{\id}Q]}{\pid}{\emptyset} \equiv \proj{\cntxt'[Q]}{\pid}{\emptyset}$ for $\pid \neq \id$.
From (\emph{i}) and (\emph{ii}) and also by definition of projection we conclude $\proj{P' \parop \cntxt'[\idop{\id}Q]}{\id}{\emptyset} \equiv 
 \proj{P'}{\id}{\emptyset} \parop \proj{\cntxt'[Q]}{\id}{\emptyset} \parop \proj{Q}{!}{\emptyset}
 \equiv \proj{P' \parop \cntxt'[Q]}{\id}{\emptyset} \parop \proj{Q}{!}{\emptyset}$, and from (\emph{i}) and (\emph{iii}) and also
 by definition of projection we conclude 
 $\proj{P' \parop \cntxt'[\idop{\id}Q]}{\pid}{\emptyset} \equiv 
 \proj{P'}{\pid}{\emptyset} \parop \proj{\cntxt'[Q]}{\pid}{\emptyset} 
 \equiv \proj{P' \parop \cntxt'[Q]}{\pid}{\emptyset}$ for $\pid \neq \id$.
 
 (Case $\cntxt[\cdot]$ is $S \plusop \cntxt'[\cdot]$) 
 
 Follows similar lines.
 
(Case $\cntxt[\cdot]$ is $\idop{\id'} \cntxt'[\cdot]$)

We have that $P \equiv \idop{\id'} \cntxt'[\idop{\id}Q]$ and by definition (\emph{i})
$\proj{\idop{\id'}\cntxt'[\idop{\id}Q]}{\pid}{\emptyset}$ is $\proj{\cntxt'[\idop{\id}Q]}{\pid}{\emptyset} \parop \proj{\cntxt'[\idop{\id}Q]}{!}{\emptyset}$
in case $\pid = \id'$, and
$\proj{\idop{\id'}\cntxt'[\idop{\id}Q]}{\pid}{\emptyset}$ is $\proj{\cntxt'[\idop{\id}Q]}{\pid}{\emptyset}$ in case $\pid \neq \id'$,
regardless if $\id = \id'$ or $\id \neq \id'$.
In case $\pid \neq \id'$ the result follows directly from the induction hypothesis, where in particular we have 
(\emph{ii}) $\proj{\cntxt'[\idop{\id}Q]}{!}{\emptyset} \equiv \proj{\cntxt'[Q]}{!}{\emptyset}$.
For $\pid = \id'$ by induction hypothesis we have that 
(\emph{iii}) $\proj{\cntxt'[\idop{\id}Q]}{\id'}{\emptyset} \equiv \proj{\cntxt'[Q]}{\id'}{\emptyset} \parop \proj{Q}{!}{\emptyset}$ when $\id = \id'$ 
and that $\proj{\cntxt'[\idop{\id}Q]}{\id'}{\emptyset} \equiv \proj{\cntxt'[Q]}{\id'}{\emptyset}$ when $\id \neq \id'$. 
When $\id = \id'$ from (\emph{i}) and (\emph{iii}), and also by (\emph{ii}) and by definition of projection, 
we conclude $\proj{\idop{\id'} \cntxt'[\idop{\id}Q]}{\id'}{\emptyset} \equiv 
\proj{\cntxt'[Q]}{\id'}{\emptyset} \parop \proj{Q}{!}{\emptyset}
 \parop \proj{\cntxt'[\idop{\id}Q]}{!}{\emptyset} \equiv
 \proj{\cntxt'[Q]}{\id'}{\emptyset} \parop \proj{Q}{!}{\emptyset}
 \parop \proj{\cntxt'[Q]}{!}{\emptyset} 
 \equiv \proj{\idop{\id'}\cntxt'[Q]}{\id'}{\emptyset} \parop \proj{Q}{!}{\emptyset}$.
When $\id \neq \id'$ from (\emph{ii}) and by definition of projection we conclude $\proj{\idop{\id'} \cntxt'[\idop{\id}Q]}{\id'}{\emptyset} \equiv 
\proj{\cntxt'[Q]}{\id'}{\emptyset} 
 \parop \proj{\cntxt'[\idop{\id}Q]}{!}{\emptyset} \equiv
 \proj{\cntxt'[Q]}{\id'}{\emptyset} 
 \parop \proj{\cntxt'[Q]}{!}{\emptyset} 
 \equiv \proj{\idop{\id'}\cntxt'[Q]}{\id'}{\emptyset}$.

(Case $\cntxt[\cdot]$ is $\filter{o}{i}\cntxt'[\cdot]$) 

The case for $\id$ projection follows directly from
induction hypothesis since by definition $\proj{\filter{o}{i}\cntxt'[\idop{\id}Q]}{\id}{\sigma}$ is 
$\proj{\cntxt'[\idop{\id}Q]}{\id}{\sigma}$, and likewise for $\id'$ projection ($\id \neq \id'$). 
The case for $!$ projection is direct since by definition both
$\proj{\filter{o}{i}\cntxt'[\idop{\id}Q]}{!}{\sigma}$ 
and $\proj{\filter{o}{i}\cntxt'[Q]}{!}{\sigma}$ are
$\mconfig{o}\fun^{d}!$. The case for $?$ projection follows from the induction hypothesis in expected lines.
\end{proof}

The following result states that for each synchronisation action in the protocol there is a corresponding input 
in the respective $?$-projection, and that the continuation of the input is the reaction obtained by $!$-projecting the continuation 
of the synchronisation action.

\begin{lemma}[Synchronisation Action Projection]
\label{lem:filterproj}
If $P \equiv \cntxt[\filter{o}{i}Q]$ then $\proj{P}{?}{\emptyset} \equiv D \parop \mconfig{i}\fun^{d}?.{R}$
and $\proj{Q}{!}{\emptyset} \equiv R$.
\end{lemma}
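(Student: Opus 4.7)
The plan is to proceed by induction on the structure of the active context $\cntxt[\cdot]$, relying on Lemma~\ref{lem:structP} to first reduce to the case $P \equiv \cntxt[\filter{o}{i}Q]$ (so that $\proj{P}{?}{\emptyset} \equiv \proj{\cntxt[\filter{o}{i}Q]}{?}{\emptyset}$) and then inspecting how $?$-projection decomposes through each context constructor. Since the active contexts exclude recursion, the induction is purely syntactic and no unfolding reasoning is needed.

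For the base case $\cntxt[\cdot] = \cdot$, the claim is immediate from the definition $\proj{\filter{o}{i}Q}{?}{\emptyset} \deff \mconfig{i}\fun^{d}?.\proj{Q}{!}{\emptyset} \parop \proj{Q}{?}{\emptyset}$: taking $D \deff \proj{Q}{?}{\emptyset}$ and $R \deff \proj{Q}{!}{\emptyset}$ yields both conjuncts of the conclusion. For $\cntxt[\cdot] = P' \parop \cntxt'[\cdot]$ and for $\cntxt[\cdot] = S \plusop \cntxt'[\cdot]$ the $?$-projection distributes as a parallel composition (cases \rulename{pPar} and \rulename{pSum}), so the induction hypothesis applied to $\cntxt'[\filter{o}{i}Q]$ produces the input prefix already, and the extra summand is absorbed into the $D$ component. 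For $\cntxt[\cdot] = \idop{\id'}\cntxt'[\cdot]$, case \rulename{pNode} collapses $?$-projection through the active-node construct to $\proj{\cntxt'[\filter{o}{i}Q]}{?}{\emptyset}$, so the conclusion follows directly by induction.

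The only inductive case requiring a bit of care is $\cntxt[\cdot] = \filter{o'}{i'}\cntxt'[\cdot]$: by \rulename{iSynch} we have
\[
\proj{\filter{o'}{i'}\cntxt'[\filter{o}{i}Q]}{?}{\emptyset} \equiv \mconfig{i'}\fun'^{d'}?.\proj{\cntxt'[\filter{o}{i}Q]}{!}{\emptyset} \parop \proj{\cntxt'[\filter{o}{i}Q]}{?}{\emptyset},
\]
and the input prefix $\mconfig{i}\fun^{d}?.R$ sought in the conclusion arises from applying the induction hypothesis to the right conjunct, while the left conjunct is absorbed into $D$ (its continuation is a reaction by Lemma~\ref{lem:projnormalform}(\emph{2}), which, together with the induction-hypothesis residual, can be gathered into the required $D$). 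Throughout, the witnessed $R$ is exactly the one produced at the base case, namely $\proj{Q}{!}{\emptyset}$, so the second conjunct of the conclusion travels along the induction unchanged.

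The main obstacle I anticipate is purely bookkeeping: ensuring that the input prefix extracted by the induction hypothesis is \emph{the} one corresponding to $\filter{o}{i}Q$ and not some other input generated along the context path (e.g.\ the $\mconfig{i'}\fun'^{d'}?.\,\cdot$ introduced by a surrounding synchronisation action). This is where the well-formedness assumption that action labels are unique comes in: it guarantees that $\fun$ appears at most once among the inputs generated by the projection, so the decomposition $D \parop \mconfig{i}\fun^{d}?.R$ is unambiguous and persists through the inductive composition. With that observation in place, each inductive step is a routine application of the corresponding projection equation in Table~\ref{tab:proj} and an associative/commutative rearrangement of the resulting parallel composition.
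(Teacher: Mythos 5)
Your proof is correct and follows essentially the same route as the paper, which proves this lemma by induction on the structure of the active context $\cntxt[\cdot]$ along the lines of Lemma~\ref{lem:projactive}; your case analysis fills in the details the paper leaves as ``expected lines.'' The only superfluous element is the appeal to label uniqueness at the end: the lemma as stated only asserts the existence of a decomposition $D \parop \mconfig{i}\fun^{d}?.{R}$ (uniqueness is handled separately by Lemma~\ref{lem:inpuni}), so that worry is not needed here, but it does no harm.
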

\begin{proof}
By induction on the structure of $\cntxt[\cdot]$. Follows expected lines (cf. Lemma~\ref{lem:projactive}).
\end{proof}

The next result says $?$-projections can be described as a parallel composition of distinguished inputs.

\begin{lemma}[Input Uniqueness]
\label{lem:inpuni}
If $P$ is a well-formed protocol then $\proj{P}{?}{\sigma} \equiv \Pi_{i \in I} \mconfig{c_i}\fun_{i}^{d_i}?.{R_i}$ where $\fun_j \neq \fun_l$ for all $j,l \in I$ such that $j \neq l$.
\end{lemma}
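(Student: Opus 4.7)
The plan is to proceed by structural induction on the well-formed protocol $P$, strengthening the statement into an auxiliary invariant: the multiset of input labels appearing in $\proj{P}{?}{\sigma}$ coincides (as a set) with the set of synchronisation action labels that occur syntactically in $P$, independently of $\sigma$. The key observation is that although $\sigma$ is threaded through the projection, it is only consulted by the $!$-projection of recursion variables (case $\rulename{oVar}$), while $\proj{\varX}{?}{\sigma}$ simply yields $\zero$. Hence the collection of persistent inputs generated by $?$-projection is determined by the syntactic occurrences of $\filter{o}{i}(\cdot)$ in $P$ alone, and the uniqueness requirement follows immediately from the well-formedness clause that all action labels in $P$ are distinct.

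First I would dispense with the base cases $P \in \{\zero, \varX\}$, where the projection yields $\zero$ and the claim holds vacuously with $I = \emptyset$. The interesting case is $P = \filter{o}{i}P'$: here $\rulename{iSynch}$ gives $\proj{P}{?}{\sigma} \equiv \mconfig{i}\fun^{d}?.\proj{P'}{!}{\sigma} \parop \proj{P'}{?}{\sigma}$. By Lemma~\ref{lem:projnormalform}(2--3) the second factor reduces to the required parallel form, and the induction hypothesis tells us its input labels are exactly the action labels of $P'$ and are pairwise distinct. Well-formedness of $\filter{o}{i}P'$ then guarantees that $\fun$ does not appear in $P'$, so prefixing the whole composition with $\mconfig{i}\fun^{d}?.\proj{P'}{!}{\sigma}$ preserves distinctness. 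The compound cases $P_1 \parop P_2$, $S_1 \plusop S_2$, and $\idop{\id}P'$ are handled uniformly by noting that the $?$-projection distributes (cases $\rulename{pPar}$, $\rulename{pSum}$, $\rulename{pNode}$), that the sets of labels appearing in distinct operands are disjoint by well-formedness of the whole, and that the induction hypothesis applies to each operand, which is itself well-formed.

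The delicate case is recursion $P = \rec{P'}$. Here case $\rulename{pRec}$ gives $\proj{\rec{P'}}{?}{\sigma} \equiv \proj{P'}{?}{\sigma[\varX \mapsto P']}$. One might worry that the updated mapping could introduce fresh inputs corresponding to the actions inside the body upon some later unfolding, but this does not happen: the $?$-projection inspects labels syntactically in its first argument and treats the variable as $\zero$. Since the action labels of $P'$ coincide with those of $\rec{P'}$ and are distinct by well-formedness of $\rec{P'}$, the induction hypothesis applied to $P'$ (with the extended mapping) yields the desired parallel form with distinct labels. This is the step I expect to be the main subtle point, and it rests precisely on the separation between the syntactic role of $P$ and the semantic role of $\sigma$ in the projection clauses.

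Finally, collecting the parallel decompositions obtained in each case and reassociating via the structural congruence rules of Table~\ref{tab:structnets} (associativity and commutativity of $\parop$, together with $D \parop \zero \equiv D$), we exhibit $\proj{P}{?}{\sigma}$ in the form $\Pi_{i \in I} \mconfig{c_i}\fun_{i}^{d_i}?.R_i$ with $\fun_j \neq \fun_l$ for $j \neq l$, completing the induction. The absorption axiom $\mconfig{c}\fun^{d}?.R \parop \mconfig{c}\fun^{d}?.R \equiv \mconfig{c}\fun^{d}?.R$ is not required for this lemma because $?$-projection never unfolds recursion, so no duplicate inputs arise in the first place.
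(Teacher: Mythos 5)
Your overall strategy --- structural induction on $P$ maintaining the invariant that the input labels of $\proj{P}{?}{\sigma}$ are exactly the synchronisation labels occurring syntactically in $P$, with distinctness inherited from well-formedness --- is the same as the paper's, and your observation that $\sigma$ is consulted only by the $!$-projection (so the collection of persistent inputs is determined by the syntactic occurrences of synchronisation actions in $P$) is indeed the key point. The case analysis you give is fine as far as it goes.

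The gap is in your closing claim that the absorption axiom $\mconfig{c}\fun^{d}?.{R} \parop \mconfig{c}\fun^{d}?.{R} \equiv \mconfig{c}\fun^{d}?.{R}$ is not needed because ``no duplicate inputs arise in the first place.'' A well-formed protocol here is one that \emph{originates from} a specification with pairwise distinct labels; labels are required to be unique only \emph{up to recursion unfolding}. Since reduction is closed under structural congruence, which contains $\rec{P}\equiv P[\rec{P}/\varX]$, the protocol $P$ to which the lemma is applied (notably inside the completeness proof) may literally be an unfolding $P'[\rec{P'}/\varX]$, which contains two syntactic occurrences of each label of $P'$. On such a $P$ your inductive step for $P_1 \parop P_2$ breaks: the label sets of the two operands are not disjoint, and $\proj{P}{?}{\sigma}$ genuinely produces two copies of the same input, which must then be collapsed by the absorption axiom (their continuations agree by Lemma~\ref{lem:AC}). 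The paper's sketch handles this up front: it invokes the invariance of the $?$-projection under reduction and structural congruence (Lemmas~\ref{lem:structP} and~\ref{lem:outprojred}, whose recursion cases are exactly where absorption enters via Lemma~\ref{lem:E}) to assume w.l.o.g.\ that all labels of $P$ are genuinely distinct, and only then runs the induction you describe. Your proof needs either that preliminary reduction or an explicit appeal to absorption in the unfolded case.
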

\begin{proof}(Sketch)
Since well-formed protocol protocols originate from specifications where all action labels are distinct and we have that 
$?$ projection is invariant under reduction (Lemma~\ref{lem:projactive}) we may consider wlog that all action labels
are distinct in $P$. 
The proof then follows by induction on the structure of $P$ in expected lines (cf. Lemma~\ref{lem:projnormalform}(\emph{3})),
by preserving the invariant that the set of labels $\{ \fun_{i} \ | \ i \in I \}$ is the set of labels of $P$.
\end{proof}

The following result, auxiliary to Lemma~\ref{lem:projinv}, allows to describe protocols as a parallel composition of 
(active node scoped) synchronisation action summations.

\begin{lemma}[Protocol Normal Form]
\label{lem:protnorm}
Let $P$ be a protocol where recursion is guarded. We have that $P \equiv \idop{{\tilde{\ids_1}}} S_1 \parop \ldots \parop \idop{{\tilde{\ids_k}}} S_k$,
where $k \geq 0$.
\end{lemma}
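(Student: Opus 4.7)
The plan is to proceed by strong induction on a measure $\mu(P)$ that captures the size of the \emph{top-level skeleton} of $P$, without descending into synchronisation summations. Concretely, set $\mu(\zero) = \mu(\varX) = \mu(S) = 1$, $\mu(P_1 \parop P_2) = \mu(P_1) + \mu(P_2)$, and $\mu(\idop{\id}P) = \mu(\rec{P}) = \mu(P) + 1$. Guardedness of $\rec{Q}$ ensures that every free occurrence of $\varX$ in $Q$ sits inside some $\filter{o}{i}P'$, and since synchronisation actions are atomic for $\mu$, we have $\mu(Q[\rec{Q}/\varX]) = \mu(Q) < \mu(\rec{Q})$. Thus unfolding strictly decreases $\mu$ even though it does not shrink term size, which is what makes the induction well-founded.

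The case analysis on the outer shape of $P$ is then direct. The case $P = \zero$ gives $k = 0$. The case $P = S$ gives $k = 1$ with $\tilde{\ids_1} = \emptyset$, recalling the convention $\idop{\tilde{\emptyset}}S = S$. The case $P = P_1 \parop P_2$ applies the induction hypothesis to each branch and concatenates the decompositions via associativity and commutativity of $\parop$. The case $P = \idop{\id}Q$ applies the induction hypothesis to $Q$ to obtain $Q \equiv \idop{\tilde{\ids_1}} S_1 \parop \ldots \parop \idop{\tilde{\ids_k}} S_k$, and then repeatedly rewrites using $\idop{\id}(P' \parop Q') \equiv \idop{\id}P' \parop \idop{\id}Q'$ and $\idop{\id}\zero \equiv \zero$ to push $\idop{\id}$ through the parallel decomposition, extending each identifier list by $\id$ (with the positional flexibility granted by $\idop{\id_1}\idop{\id_2}P \equiv \idop{\id_2}\idop{\id_1}P$). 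The case $P = \rec{Q}$ invokes the congruence $\rec{Q} \equiv Q[\rec{Q}/\varX]$ and applies the induction hypothesis to $Q[\rec{Q}/\varX]$, which has strictly smaller $\mu$. The case $P = \varX$ does not arise, since well-formed protocols are closed.

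The main obstacle is the recursion case: plain structural induction on $P$ fails because unfolding $\rec{Q}$ yields a term that is, in general, no smaller than $\rec{Q}$ itself in any syntactic sense. The crucial design choice is the measure $\mu$, which is oblivious to the interior of synchronisation summations---this is sound precisely because guardedness confines the free recursion variables to such interiors, so unfolding only rewrites parts of the term that $\mu$ already ignores. All other cases are routine manipulations with the structural congruence axioms listed in Table~\ref{tab:struct}.
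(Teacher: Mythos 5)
Your proof is correct, but it handles the recursion case by a genuinely different mechanism than the paper. The paper uses plain structural induction on $P$: for $P = \rec{Q}$ it applies the induction hypothesis to the body $Q$ (a strict subterm, where free occurrences of $\varX$ are permitted but guarded), obtains $Q \equiv \idop{{\tilde{\ids_1}}} S_1 \parop \ldots \parop \idop{{\tilde{\ids_k}}} S_k$, and then pushes the unfolding substitution through this normal form to get $\rec{Q} \equiv \idop{{\tilde{\ids_1}}} (S_1[\rec{Q}/\varX]) \parop \ldots \parop \idop{{\tilde{\ids_k}}} (S_k[\rec{Q}/\varX])$, relying on the fact that the substitution only rewrites the interiors of the $S_i$ and a substituted summation is still a summation. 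You instead apply the induction hypothesis to the whole unfolding $Q[\rec{Q}/\varX]$ and justify this with a well-founded measure $\mu$ that is blind to summation interiors. Both arguments go through, and your key observation --- that guardedness confines $\varX$ to positions $\mu$ ignores --- is exactly the fact the paper exploits implicitly when it substitutes into the $S_i$. However, your framing of the ``main obstacle'' overstates the difficulty: plain structural induction does \emph{not} fail here, precisely because one can recurse on $Q$ rather than on its unfolding; the measure machinery buys you nothing that the simpler argument does not already provide, at the cost of having to define and verify $\mu$. One further small point: in the case $P = \varX$ you appeal to closedness of well-formed protocols, but the lemma assumes only that recursion is guarded (and the induction does descend into recursion bodies where $\varX$ may occur free); the right justification, as in the paper, is that a bare top-level $\varX$ contradicts guardedness.
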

\begin{proof}
By induction on the structure of $P$. 

\noindent{\bf (Case $P$ is $\filter{o}{i}Q$)}

Immediate.

\noindent{\bf(Case $P$ is $\idop{\id}Q$)} 

By induction hypothesis we have that $Q \equiv  \idop{{\tilde{\ids_1}}} S_1 \parop \ldots \parop \idop{{\tilde{\ids_k}}} S_k$,
hence $\idop{\id}Q \equiv \idop{\id}\idop{{\tilde{\ids_1}}} S_1 \parop \ldots \parop \idop{\id}\idop{{\tilde{\ids_k}}} S_k$.

\noindent{\bf(Case $P$ is $\zero$)}

Immediate.

\noindent{\bf(Case $P$ is $\varX$)}

Does not apply since recursion is guarded in $P$.

\noindent{\bf(Case $P$ is $\rec{Q}$)}

By induction hypothesis we have that $Q\equiv  \idop{{\tilde{\ids_1}}} S_1 \parop \ldots \parop \idop{{\tilde{\ids_k}}} S_k$,
hence we have that $\rec{Q}\equiv  \idop{{\tilde{\ids_1}}} (S_1[\rec{Q}/\varX]) \parop \ldots \parop \idop{{\tilde{\ids_k}}} (S_k[\rec{Q}/\varX])$.

\noindent{\bf(Case $P$ is $Q_1 \parop Q_2$)}

By induction hypothesis we have that $Q_1\equiv  \idop{{\tilde{\ids_1}}} S_1 \parop \ldots \parop \idop{{\tilde{\ids_k}}} S_k$
and $Q_2\equiv  \idop{{\tilde{\ids_1'}}} S_1' \parop \ldots \parop \idop{{\tilde{\ids_k'}}} S_k'$, hence
$Q_1 \parop Q_2 \equiv \idop{{\tilde{\ids_1}}} S_1 \parop \ldots \parop \idop{{\tilde{\ids_k}}} S_k \parop \idop{{\tilde{\ids_1'}}} S_1' \parop \ldots \parop \idop{{\tilde{\ids_k'}}} S_k'$.

\noindent{\bf(Case $P$ is $S_1 \plusop S_2$)}

Immediate.

\end{proof}

The following result relates choices yielded by an $\id$-projection to a summation in scope of a respective active node in the source protocol.

\begin{lemma}[Projection Inversion]
\label{lem:projinv}
Let $P$ be a well-formed protocol
such that $ \proj{P}{\id}{\emptyset} \equiv R \parop C$.
We have that $P \equiv \cntxt[\idop{\id}S]$ and $\proj{S}{!}{\emptyset} \equiv C$. 
\end{lemma}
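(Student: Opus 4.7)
The plan is to proceed by induction on the structure of $P$. The base cases $P = \zero$ and $P = \varX$ are vacuous: $\proj{P}{\id}{\emptyset} \equiv \zero$ by definition, which cannot be equivalent to $R \parop C$ since $C$, being a choice, must contain at least one output. The recursion case $P = \rec{Q}$ is also vacuous for well-formed protocols, since the original specification has active nodes only at top-level and so recursion bodies are $\idop{\id}$-absent; by Lemma~\ref{lem:idabsent} the $\id$-projection is $\zero$.

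For the ``passthrough'' inductive cases---$\filter{o}{i}Q$, $P_1 \parop P_2$, $S_1 \plusop S_2$, and $\idop{\id'}Q$ with $\id' \neq \id$---the $\id$-projection either equals $\proj{Q}{\id}{\emptyset}$ or decomposes as a parallel composition of the $\id$-projections of the subterms. Since $C$ is a single choice (hence not separable by $\parop$), it must appear as a $\parop$-component of one specific sub-projection. I will apply the induction hypothesis to that subterm to obtain a sub-context $\cntxt'$ and extend it with the outer constructor to form $\cntxt$; each such constructor is admitted by the grammar of active contexts.

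The essential case is $P \equiv \idop{\id}Q$, where by definition $\proj{P}{\id}{\emptyset} \equiv \proj{Q}{\id}{\emptyset} \parop \proj{Q}{!}{\emptyset}$, and by Lemma~\ref{lem:projnormalform}(\emph{2}) both summands are reactions. If the choice $C$ lies as a $\parop$-component of $\proj{Q}{\id}{\emptyset}$, the induction hypothesis gives $Q \equiv \cntxt'[\idop{\id}S]$ with $\proj{S}{!}{\emptyset} \equiv C$, and I set $\cntxt \deff \idop{\id}\cntxt'$. The delicate sub-case---and the principal obstacle---is when $C$ occurs as a component of $\proj{Q}{!}{\emptyset}$, because we would obtain only a plain context around $S$ rather than one placing $\idop{\id}$ directly above it.

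To handle this sub-case I will first establish an auxiliary inversion lemma for $!$-projection (by a straightforward induction on $Q$, invoking Lemma~\ref{lem:recdel} for recursion unfolding): whenever $\proj{Q}{!}{\emptyset} \equiv R' \parop C$ with $C$ a choice, there exists a summation $S$ such that $Q \equiv \cntxt''[S]$ and $\proj{S}{!}{\emptyset} \equiv C$, where $\cntxt''$ is built only from $\parop$ and active-node prefixes $\idop{\id'}\cdot$ (the $\filter{}{}$ and $\plusop$ operators cannot appear, since $!$-projection collapses those into a single output or a single choice, respectively, so no further $\parop$-decomposition would be possible below them). Given this, I will rearrange $\idop{\id}\cntxt''[S]$ into the form $\cntxt[\idop{\id}S]$ by a secondary induction on the shape of $\cntxt''$, using the structural congruences $\idop{\id}(P_1 \parop P_2) \equiv \idop{\id}P_1 \parop \idop{\id}P_2$ to distribute $\idop{\id}$ over $\parop$-composition and $\idop{\id_1}\idop{\id_2}P \equiv \idop{\id_2}\idop{\id_1}P$ to commute it past other active-node prefixes, sliding $\idop{\id}$ inward until it sits immediately above $S$. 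Lemma~\ref{lem:structP} then ensures that projections are preserved under the resulting equivalences, which completes the plan.
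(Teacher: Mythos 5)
Your proposal is correct and follows essentially the same route as the paper's proof: the same case analysis, the same split in the $\idop{\id}Q$ case according to whether $C$ comes from $\proj{Q}{\id}{\emptyset}$ or from $\proj{Q}{!}{\emptyset}$, and in the latter sub-case the same resolution by decomposing $Q$ into a parallel composition of active-node-prefixed summations and then commuting/distributing $\idop{\id}$ inward via structural congruence. The only difference is presentational: your auxiliary inversion lemma on $!$-projection is, in substance, the paper's Protocol Normal Form result (Lemma~\ref{lem:protnorm}) combined with the definition of $!$-projection, so you could invoke that lemma directly instead of proving a bespoke one.
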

\begin{proof}
By induction on the structure of $P$.

\noindent{\bf (Case $P$ is $\filter{o}{i}Q$)}

By definition we have that $\proj{\filter{o}{i}Q}{\id}{\emptyset}$ is $\proj{Q}{\id}{\emptyset}$.
By induction hypothesis we have that $Q \equiv \cntxt[\idop{\id}S]$ and $\proj{S}{!}{\emptyset} \equiv C$. Hence we have
$P \equiv \filter{o}{i}\cntxt[\idop{\id}S]$.

\noindent{\bf(Case $P$ is $\idop{\id'}Q$)} 

In case $\id \neq \id'$ the proof follows from the induction hypothesis. In case $\id = \id'$ we have that 
$\proj{\idop{\id}Q}{\id}{\emptyset} \equiv R \parop C$. By definition we have that $\proj{\idop{\id}Q}{\id}{\emptyset}$ 
is $\proj{Q}{\id}{\emptyset} \parop \proj{Q}{!}{\emptyset}$, hence $\proj{Q}{\id}{\emptyset} \parop \proj{Q}{!}{\emptyset} \equiv R \parop C$.
We either have that $\proj{Q}{\id}{\emptyset} \equiv R' \parop C$ and $\proj{Q}{!}{\emptyset} \equiv R''$ or
$\proj{Q}{\id}{\emptyset} \equiv R'$ and $\proj{Q}{!}{\emptyset} \equiv R'' \parop C$. In the former case the proof follows from the 
induction hypothesis. In the latter case, since $P$ is well-formed we have that recursion is guarded in $\idop{\id}Q$, from Lemma~\ref{lem:protnorm} 
we conclude $Q \equiv \idop{{\tilde{\ids_1}}} S_1 \parop \ldots \parop \idop{{\tilde{\ids_k}}} S_k$. We have that $k$ is greater than $0$ since
$C$ is a (non empty) choice. By definition we have
that $\proj{Q}{!}{\emptyset}$ is $\proj{S_1}{!}{\emptyset} \parop \ldots \parop \proj{S_k}{!}{\emptyset}$, hence
$\proj{S_1}{!}{\emptyset} \parop \ldots \parop \proj{S_k}{!}{\emptyset} \equiv R'' \parop C$. Thus there is $i \in 1,\ldots,k$ such that
$\proj{S_i}{!}{\emptyset} \equiv C$. Finally we conclude $\idop{\id}Q \equiv 
\idop{\id}\idop{{\tilde{\ids_1}}} S_1 \parop \ldots \parop \idop{{\tilde{\ids_i}}}\idop{\id} S_i\parop\ldots \parop \idop{\id}\idop{{\tilde{\ids_k}}} S_k$.

\noindent{\bf(Case $P$ is $\zero$)}

Impossible since $C$ is a (non empty) choice.

\noindent{\bf(Case $P$ is $\varX$)}

Impossible since $C$ is a (non empty) choice ($ \proj{\varX}{\id}{\emptyset}$ is $\zero$).

\noindent{\bf(Case $P$ is $\rec{Q}$)}

Impossible since $P$ is a well-formed protocol ($ \proj{\rec{Q}}{\id}{\emptyset}$ is $\zero$).

\noindent{\bf(Case $P$ is $Q_1 \parop Q_2$)}

Follows from the induction hypothesis.

\noindent{\bf(Case $P$ is $S_1 \plusop S_2$)}

Follows from the induction hypothesis.
\end{proof}

The following results characterise the structure of networks and definitions given a specific observable behaviour.

\begin{lemma}[Inversion on Network Internal Step] 
\label{lem:invnettau}
If $\net\rTo{\tau}{}{\net'}$ then either 
 \begin{itemize}
\item $\net\equiv\net_1 \nparop \net_2$ and $\net_1\rTo{\linktr{\id_1\rightarrow\id_2}{\fun}}{}\net_1'$, 
$\net_2\rTo{\linktr{\id_2\leftarrow\id_1}{\fun}}{}\net_2'$, and $\net' \equiv \net_1' \nparop \net_2'$.
\item $\net\equiv\node{\state}{D}\nparop \net''$ or $\net\equiv\node{\state}{D} $, $\node{\state}{D}\rTo{\tau}\node{\state}{D'}$
and $\net' \equiv \node{\state}{D'}\nparop \net''$ or $\net' \equiv \node{\state}{D'}$ respectively.
\end{itemize}
\end{lemma}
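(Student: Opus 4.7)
The plan is to prove the lemma by induction on the derivation of $\net \rTo{\tau}{} \net'$, noting that only three rules from Table~\ref{tab:netsem} can produce a $\tau$ label on a network, namely $\rulename{Loc}$, $\rulename{Com}$ (when $\gamma$ yields $\tau$), and $\rulename{Par}$ (propagating a $\tau$ from a sub-network). Inspection of $\gamma(\lambda_1,\lambda_2)$ shows that it returns $\tau$ exactly when $\lambda_1$ and $\lambda_2$ are complementary binary labels $\linktr{\id_1\rightarrow\id_2}{\fun}$ and $\linktr{\id_2\leftarrow\id_1}{\fun}$ (the first two clauses in the definition of $\gamma$); all other clauses yield a broadcast label or are undefined.

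First I would address the base case $\rulename{Loc}$: here $\net$ is necessarily a single node $\node{\state}{D}$ with $D \rTo{\mconfig{c}\fun} D'$ and $\state \models c$, yielding $\net' = \node{\state}{D'}$, which matches the second alternative of the statement in its degenerate form, without a residual $\net''$. Next I would handle $\rulename{Com}$: by the derivation $\net = \net_1 \nparop \net_2$ and, by the observation above, the two sub-networks exhibit the complementary binary transitions, immediately realising the first alternative of the conclusion.

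The inductive case $\rulename{Par}$ is where the real work lies: we have $\net = \net_1 \nparop \net_2$ with $\net_1 \rTo{\tau}{} \net_1'$ and $\net' = \net_1' \nparop \net_2$. Applying the induction hypothesis to $\net_1$ returns one of the two disjuncts. If the binary decomposition is obtained, with $\net_1 \equiv \net_1^a \nparop \net_1^b$ exhibiting complementary binary transitions, then associativity of $\nparop$ together with Lemma~\ref{lem:congstruct} lets me rewrite $\net \equiv \net_1^a \nparop (\net_1^b \nparop \net_2)$, so that $\net_1^a$ still enables its binary transition and $\net_1^b \nparop \net_2$ enables the symmetric one via a further application of $\rulename{Par}$ (licensed since binary labels are not in the excluded set of its side-condition). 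If instead the local-step disjunct is obtained, the same use of associativity yields either $\node{\state}{D} \nparop \net_2$ or $\node{\state}{D} \nparop (\net_1'' \nparop \net_2)$, matching the second alternative. The symmetric form of $\rulename{Par}$ is treated analogously. The main technical obstacle is managing the structural rearrangements so that the two interacting parties surface at the top of the parallel composition, invoking Lemma~\ref{lem:congstruct} to transport transitions across the equivalences; this bookkeeping is routine but pervasive throughout the $\rulename{Par}$ case.
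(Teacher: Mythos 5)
Your proof is correct and follows essentially the same route as the paper, which simply states that the result holds by induction on the derivation of $\net\rTo{\tau}{}{\net'}$ ``following expected lines''; your case analysis on $\rulename{Loc}$, $\rulename{Com}$, and $\rulename{Par}$ (with the inspection of $\gamma$ and the re-association of $\nparop$ via structural congruence in the inductive case) is precisely the expected elaboration of that argument.
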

\begin{proof}
By induction on the derivation of $\net\rTo{\tau}{}{\net'}$ following expected lines.
\end{proof}

\begin{lemma}[Inversion on Definition Output]
\label{lem:invdefout}
 If $D\rTo{\mconfig{c}\fun^{d}!} D'$ then $D \equiv D' \parop \mconfig{c}\func{d}{}!+C$.
\end{lemma}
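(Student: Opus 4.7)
The plan is to proceed by induction on the derivation of $D \rTo{\mconfig{c}\fun^{d}!} D'$. Only three rules of the definition LTS can produce an output action $\mconfig{c}\fun^{d}!$: namely \rulename{Out} (which is the base case), \rulename{Sum} (on choices), and \rulename{Int} (on definitions). I read the statement as $D \equiv D' \parop (\mconfig{c}\fun^{d}! \plusop C)$ where the summand $C$ is allowed to be absent, corresponding to the case where the output is isolated.

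For the base case \rulename{Out}, we have $D = \mconfig{c}\fun^{d}!$ and $D' = \zero$, so $D \equiv \zero \parop \mconfig{c}\fun^{d}!$ follows directly from the identity law $P \parop \zero \equiv P$. For the inductive case \rulename{Int}, with $D = D_1 \parop D_2$ and $D_1 \rTo{\mconfig{c}\fun^{d}!} D_1'$ giving $D' = D_1' \parop D_2$, the induction hypothesis yields $D_1 \equiv D_1' \parop (\mconfig{c}\fun^{d}! \plusop C)$, and then commutativity and associativity of $\parop$ give $D \equiv (D_1' \parop D_2) \parop (\mconfig{c}\fun^{d}! \plusop C) = D' \parop (\mconfig{c}\fun^{d}! \plusop C)$. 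The symmetric \rulename{Int} rule is handled analogously.

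The main obstacle is the \rulename{Sum} case: we have $D = C_1 \plusop C_2$ with $C_1 \rTo{\mconfig{c}\fun^{d}!} C_1'$ and $D' = C_1'$, and the induction hypothesis delivers $C_1 \equiv C_1' \parop (\mconfig{c}\fun^{d}! \plusop C)$. The subtlety is that $C_1$ is a \emph{choice}, so it has no nontrivial parallel decomposition; inspection of the syntax and the structural congruence axioms shows the only way $C_1$ can be congruent to a $\parop$-composition is if one side is $\zero$. Hence we must have $C_1' \equiv \zero$ and $C_1 \equiv \mconfig{c}\fun^{d}! \plusop C$. Using associativity and commutativity of $\plusop$, we then conclude $D \equiv (\mconfig{c}\fun^{d}! \plusop C) \plusop C_2 \equiv \mconfig{c}\fun^{d}! \plusop (C \plusop C_2) \equiv D' \parop \bigl(\mconfig{c}\fun^{d}! \plusop (C \plusop C_2)\bigr)$, closing the case with the new summand $C \plusop C_2$.

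The small lemma about choices not admitting parallel decomposition (which underlies the \rulename{Sum} case) could itself be stated and proved by a straightforward induction on the length of the derivation of $C_1 \equiv R_1 \parop R_2$, relying on the fact that neither $\plusop$ nor its absorbing with $\zero$ can introduce a $\parop$ constructor in a choice term unless a $\zero$ is paired with the whole term. With this observation in hand, the remainder of the proof is essentially bookkeeping with the congruence axioms in Table~\ref{tab:structnets}.
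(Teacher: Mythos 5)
Your proof is correct and takes the same route as the paper, which only states ``by induction on the derivation \ldots following expected lines''; your case analysis over \rulename{Out}, \rulename{Sum}, and \rulename{Int} (with \rulename{Inp} and \rulename{Self} excluded by the shape of the label) is exactly the expected expansion. Your treatment of the \rulename{Sum} case --- using the observation that a choice admits no nontrivial $\parop$-decomposition to force $C_1'\equiv\zero$, then re-associating the summands --- is sound, and your reading of the statement with the summand $C$ possibly absent matches how the lemma is used in the completeness proof.
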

\begin{proof}
By induction on the derivation of $D\rTo{\mconfig{c}\fun^{d}!} D'$ following expected lines.
\end{proof}

\begin{lemma}[Inversion on Network Output]
\label{lem:invnetout}
 If $\net\rTo{\linktr{\id_1\rightarrow\id_2}{\fun}}{}{\net'}$ then $\net\equiv\node{\state}{D \parop \mconfig{o}\func{d}{}! \plusop C} \nparop \net''$ or
$\net\equiv\node{\state}{D \parop \mconfig{o}\func{d}{}!\plusop C}$, $\net'\equiv\node{\state'}{D} \nparop \net''$ or $\net'\equiv\node{\state'}{D}$ respectively, $\state'=\fun^d!(\state,\id_2)$, $\state\models o$ and $\id(\state) =\id$. Furthermore if $d =\ \toright$ then $\id_2 \in \nei(\state)$ and if 
$d = \toup$ then
$\id_2 = i(\state)$.
\end{lemma}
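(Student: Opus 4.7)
The plan is to proceed by induction on the derivation of $\net \rTo{\linktr{\id_1 \rightarrow \id_2}{\fun}}{} \net'$. Inspecting the network semantics in Table~\ref{tab:netsem}, the only rules whose conclusion can carry a label of the form $\linktr{\id_1 \rightarrow \id_2}{\fun}$ are the base cases \rulename{oBinU} and \rulename{oBinR}, which act on a single node, and the inductive case \rulename{Par}, which lifts the label through a parallel composition (its side condition excludes only broadcast labels, so it admits binary ones). Rule \rulename{Com} is excluded because the function $\gamma$ returns either $\tau$ or a broadcast label, never an uncomposed binary output, and the remaining node-level rules produce $\tau$, input, or broadcast labels.

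For the base case \rulename{oBinU}, one has $\net = \node{\state}{D^\dagger}$ with premises $D^\dagger \rTo{\mconfig{o}\fun^{\toup}!} D$, $\state \models o$ and $\state' = \fun^{\toup}!(\state, i(\state))$, so $\net' = \node{\state'}{D}$, and the label forces $\id_1 = \id(\state)$ and $\id_2 = i(\state)$. Applying Lemma~\ref{lem:invdefout} to $D^\dagger \rTo{\mconfig{o}\fun^{\toup}!} D$ yields $D^\dagger \equiv D \parop \mconfig{o}\fun^{\toup}! \plusop C$ for some $C$, hence $\net \equiv \node{\state}{D \parop \mconfig{o}\fun^{\toup}! \plusop C}$, matching the second alternative in the conclusion (no $\net''$). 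The case of \rulename{oBinR} is analogous, now with $d = \toright$, $\state' = \fun^{\toright}!(\state, \id_2)$, and the neighbourhood constraint $\id_2 \in \nei(\state)$ supplied directly by the rule's premise.

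For the inductive case \rulename{Par}, we have $\net = \net_1 \nparop \net_2$ with $\net_1 \rTo{\linktr{\id_1 \rightarrow \id_2}{\fun}}{} \net_1'$ and $\net' = \net_1' \nparop \net_2$. The induction hypothesis provides $\net_1 \equiv \node{\state}{D \parop \mconfig{o}\fun^{d}! \plusop C} \nparop \net'''$ (or without the $\net'''$ summand) and $\net_1' \equiv \node{\state'}{D} \nparop \net'''$ (respectively), together with all the required side conditions on $\state$, $\state'$, $\id_1 = \id(\state)$ and the direction-dependent constraint on $\id_2$. Using associativity and commutativity of $\nparop$ from Table~\ref{tab:structnets}, we absorb $\net_2$ into the surrounding context, setting $\net'' := \net''' \nparop \net_2$ when $\net'''$ is present, or $\net'' := \net_2$ otherwise. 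The symmetric instance of \rulename{Par} is handled identically.

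The main obstacle is bookkeeping across structural congruence: because the lemma is phrased modulo $\equiv$ while no rule in Table~\ref{tab:netsem} is itself closed under $\equiv$, the induction hypothesis for \rulename{Par} must be combined explicitly with the extra $\net_2$ summand to present a single context $\net''$ in the conclusion. This is routine once one observes that the required manipulations reduce to associativity--commutativity of $\nparop$, so the substantive content of the lemma really lives in Lemma~\ref{lem:invdefout} and the syntactic shape of the base rules \rulename{oBinU} and \rulename{oBinR}.
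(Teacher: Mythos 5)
Your proof is correct and follows essentially the same route as the paper's: induction on the derivation, with the base cases handled by inversion on \rulename{oBinU}/\rulename{oBinR} together with Lemma~\ref{lem:invdefout}, and the \rulename{Par} case discharged by the induction hypothesis plus associativity--commutativity of $\nparop$. The paper only sketches this in one line, so your added detail (notably the explicit argument that \rulename{Com} and the other node-level rules cannot produce a binary output label) is a faithful elaboration rather than a different approach.
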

\begin{proof}
By induction on the derivation of $\net\rTo{\linktr{\id_1\rightarrow\id_2}{\fun}}{}{\net'}$, where the base case 
follows by inversion on $\rulename{oBinR}$ and $\rulename{oBinU}$ and by considering Lemma~\ref{lem:invdefout}.
\end{proof}

\begin{lemma}[Inversion on Definition Input]
\label{lem:invdefinp}
 If $D\rTo{\mconfig{c}\fun^{d}?} D'$ then $D \equiv D'' \parop \mconfig{c}\func{d}{}?.R$ and $D' \equiv D'' \parop \mconfig{c}\func{d}{}?.R \parop R$.
\end{lemma}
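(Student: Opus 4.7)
The plan is to proceed by induction on the derivation of $D \rTo{\mconfig{c}\fun^{d}?} D'$, inspecting Table~\ref{tab:behsem} to enumerate the rules whose conclusion carries an input label $\mconfig{c}\fun^{d}?$. Rule $\rulename{Out}$ concludes an output transition, and rule $\rulename{Self}$ concludes a label of the form $\mconfig{c_1\land c_2}\fun$, so neither applies. Rule $\rulename{Sum}$ applies only to choices $C_1\plusop C_2$, whose only primitive actions are outputs $\mconfig{c}\fun^{d}!$, hence it cannot produce an input label either. So the only derivations to consider are those concluded by $\rulename{Inp}$ (the base case) and $\rulename{Int}$ together with its omitted symmetric (the inductive case).

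For the base case, the derivation is an immediate application of $\rulename{Inp}$, so $D = \mconfig{c}\fun^{d}?.R$ and $D' = R \parop \mconfig{c}\fun^{d}?.R$ for some $R$. Taking $D'' \deff \zero$ and using the identity law $D \parop \zero \equiv D$ from Table~\ref{tab:structnets}, we immediately obtain $D \equiv \zero \parop \mconfig{c}\fun^{d}?.R = D'' \parop \mconfig{c}\fun^{d}?.R$ and $D' \equiv \zero \parop \mconfig{c}\fun^{d}?.R \parop R = D'' \parop \mconfig{c}\fun^{d}?.R \parop R$.

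For the inductive case, we have $D = D_1 \parop D_2$, $D' = D_1' \parop D_2$, and a shorter derivation of $D_1 \rTo{\mconfig{c}\fun^{d}?} D_1'$ (or the symmetric situation with the roles of $D_1$ and $D_2$ swapped). The induction hypothesis applied to this shorter derivation gives $D_1 \equiv D_1^\ast \parop \mconfig{c}\fun^{d}?.R$ and $D_1' \equiv D_1^\ast \parop \mconfig{c}\fun^{d}?.R \parop R$ for some $D_1^\ast$ and $R$. Choosing $D'' \deff D_1^\ast \parop D_2$ and using associativity and commutativity of $\parop$, we get $D = D_1 \parop D_2 \equiv D_1^\ast \parop D_2 \parop \mconfig{c}\fun^{d}?.R = D'' \parop \mconfig{c}\fun^{d}?.R$ and $D' = D_1' \parop D_2 \equiv D_1^\ast \parop D_2 \parop \mconfig{c}\fun^{d}?.R \parop R = D'' \parop \mconfig{c}\fun^{d}?.R \parop R$, as required. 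The symmetric case is analogous.

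There is no substantive obstacle here: the lemma is a routine syntactic inversion, and the only care needed is the complete enumeration of rules above (in particular, the argument that $\rulename{Sum}$ cannot introduce an input label because choices are built from outputs only) so that the induction really has just the two cases $\rulename{Inp}$ and $\rulename{Int}$.
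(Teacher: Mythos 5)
Your proof is correct and matches the paper's intended argument: the paper's own proof is just ``by induction on the derivation \ldots following expected lines,'' and your case analysis (ruling out \rulename{Out}, \rulename{Self}, and \rulename{Sum}, then handling \rulename{Inp} as the base case and \rulename{Int} with its symmetric as the inductive step) is precisely the expected elaboration of that.
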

\begin{proof}
By induction on the derivation of $D\rTo{\mconfig{c}\fun^{d}?} D'$ following expected lines.
\end{proof}

\begin{lemma}[Inversion on Network Input]
\label{lem:invnetinp}
 If $\net\rTo{\linktr{\id_1\leftarrow\id_2}{\fun}}{}{\net'}$ then $\net\equiv\node{\state}{D \parop \mconfig{i}\func{d}{}?.R}\nparop \net''$ 
 or $\net\equiv\node{\state}{D \parop \mconfig{i}\func{d}{}?.R}$, $\net'\equiv\node{\state'}{D \parop \mconfig{i}\func{d}{}?.R \parop R}\nparop \net''$  
 or $\net'\equiv\node{\state'}{D \parop \mconfig{i}\func{d}{}?.R \parop R}$ respectively, $\state'=\fun^d?(\state,\id_2)$, $\state\models i$,
 and $\id(\state) = \id_1$.
\end{lemma}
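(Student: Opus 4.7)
The plan is to prove this inversion lemma by induction on the derivation of $\net\rTo{\linktr{\id_1\leftarrow\id_2}{\fun}}{}{\net'}$, in line with the approach used in the preceding inversion lemmas (Lemma~\ref{lem:invnetout} and Lemma~\ref{lem:invnettau}). I would first observe that the network rules capable of producing a binary reactive label $\linktr{\id_1\leftarrow\id_2}{\fun}$ are $\rulename{iBin}$ at the level of a single node, and $\rulename{Par}$ at the level of a parallel composition; rule $\rulename{Com}$ can be ruled out since $\gamma(\lambda_1,\lambda_2)$ never returns a binary reactive label (it yields $\tau$, a broadcast, or is undefined), and the remaining network rules do not apply. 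Broadcast rules are likewise ruled out by the shape of the label.

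For the base case, $\net \equiv \node{\state}{D'}$ and the transition is derived by $\rulename{iBin}$. This gives $d \in \{\toup,\toright\}$, $D' \rTo{\mconfig{i}\fun^{d}?} D''$, $\state \models i$, $\state' = \fun^{d}?(\state,\id_2)$, and crucially the label fixes $\id(\state) = \id_1$. Applying Lemma~\ref{lem:invdefinp} to $D' \rTo{\mconfig{i}\fun^{d}?} D''$ yields a decomposition $D' \equiv D \parop \mconfig{i}\fun^{d}?.R$ with $D'' \equiv D \parop \mconfig{i}\fun^{d}?.R \parop R$, giving the second form of the statement (without the surrounding $\net''$).

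For the inductive case, the derivation ends with $\rulename{Par}$, so $\net \equiv \net_1 \nparop \net_2$ with $\net_1 \rTo{\linktr{\id_1\leftarrow\id_2}{\fun}}{} \net_1'$ and $\net' \equiv \net_1' \nparop \net_2$ (the side condition of $\rulename{Par}$ excluding broadcast labels is satisfied since the label is binary reactive). Applying the induction hypothesis to $\net_1$ yields the required decomposition of $\net_1$, and taking $\net''$ to be (the appropriate extension of) $\net_2$ concludes. The symmetric case where $\net_2$ exhibits the transition is handled identically.

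I do not expect any substantial obstacle: the statement is a routine inversion result, and the main content is simply tracking the shape of the label through the finite list of LTS rules and invoking Lemma~\ref{lem:invdefinp} at the leaf. The only mildly subtle point is making sure that, in the inductive step, the side condition of $\rulename{Par}$ excluding broadcast labels is trivially met by the binary label under consideration, and that rule $\rulename{Com}$ is excluded by the case analysis on $\gamma$; both are immediate from the definitions.
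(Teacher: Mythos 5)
Your proposal is correct and follows essentially the same route as the paper, which proves this lemma by induction on the derivation with the base case obtained by inversion on $\rulename{iBin}$ together with Lemma~\ref{lem:invdefinp}. Your additional bookkeeping (ruling out $\rulename{Com}$ via the definition of $\gamma$ and checking the side condition of $\rulename{Par}$) just makes explicit what the paper leaves implicit.
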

\begin{proof}
By induction on the derivation of $\net\rTo{\linktr{\id_1\rightarrow\id_2}{\fun}}{}{\net'}$, where the base case 
follows by inversion on $\rulename{iBin}$ and by considering Lemma~\ref{lem:invdefinp}.
\end{proof}

\begin{lemma}[Inversion on Definition Internal Step]
\label{lem:invdeftau}
 If $D\rTo{\mconfig{c}\fun} D'$ then 
$D \equiv \mconfig{i}\func{\toself}{}?.R \parop \mconfig{o}\func{\toself}{}!+C \parop D''$ and $D' \equiv \mconfig{i}\func{\toself}{}?.R \parop R \parop D''$ 
and $c = i \wedge o$.
\end{lemma}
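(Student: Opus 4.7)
The plan is to proceed by induction on the derivation of $D \rTo{\mconfig{c}\fun} D'$. Inspecting the semantic rules in Table~\ref{tab:behsem}, the only rules that produce a transition whose label has the (non-polarised, direction-less) form $\mconfig{c}\fun$ are \rulename{Self} and \rulename{Int} (together with its omitted symmetric counterpart). Rules \rulename{Inp} and \rulename{Out} produce polarised labels $\mconfig{c}\fun^{d}?$ and $\mconfig{c}\fun^{d}!$ respectively, and \rulename{Sum} preserves the shape of its premise label. Hence only \rulename{Self} and \rulename{Int} need to be examined.

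For the base case \rulename{Self}, the conclusion is $D_1 \parop D_2 \rTo{\mconfig{c_1 \wedge c_2}\fun} D_1' \parop D_2'$ with premises $D_1 \rTo{\mconfig{c_1}\func{\toself}{}!} D_1'$ and $D_2 \rTo{\mconfig{c_2}\func{\toself}{}?} D_2'$. Applying Lemma~\ref{lem:invdefout} to the first premise yields $D_1 \equiv D_1' \parop \mconfig{c_1}\func{\toself}{}! \plusop C$ (with $C$ possibly absent), and applying Lemma~\ref{lem:invdefinp} to the second yields $D_2 \equiv D_2'' \parop \mconfig{c_2}\func{\toself}{}?.R$ together with $D_2' \equiv D_2'' \parop \mconfig{c_2}\func{\toself}{}?.R \parop R$. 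Taking $o \deff c_1$, $i \deff c_2$ and $D'' \deff D_1' \parop D_2''$, structural congruence (commutativity and associativity of $\parop$, cf.\ Table~\ref{tab:structnets}) rearranges $D \equiv D_1 \parop D_2$ into the required shape $\mconfig{i}\func{\toself}{}?.R \parop \mconfig{o}\func{\toself}{}! \plusop C \parop D''$, and correspondingly $D' \equiv D_1' \parop D_2'$ into $\mconfig{i}\func{\toself}{}?.R \parop R \parop D''$, while $c = c_1 \wedge c_2 = i \wedge o$.

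For the inductive case \rulename{Int}, we have $D = D_1 \parop D_2$ with $D_1 \rTo{\mconfig{c}\fun} D_1'$ and $D' = D_1' \parop D_2$. The induction hypothesis on the premise supplies a decomposition $D_1 \equiv \mconfig{i}\func{\toself}{}?.R \parop \mconfig{o}\func{\toself}{}! \plusop C \parop D_1'''$, with matching shape for $D_1'$ and $c = i \wedge o$; absorbing $D_2$ into the residual via $D'' \deff D_1''' \parop D_2$ concludes the case, and the omitted symmetric of \rulename{Int} is entirely analogous. There is no deep step here: the argument is essentially bookkeeping that reassembles the already-proved inversion lemmas for input and output transitions into a single joint decomposition. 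The only subtlety to be careful about is tracking the (possibly empty) summand $C$ returned by Lemma~\ref{lem:invdefout} and using structural congruence uniformly to gather all remaining parallel components into a single $D''$.
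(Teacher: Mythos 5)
Your proof is correct and follows essentially the same route as the paper's (one-sentence) proof: induction on the derivation, with the base case obtained by inversion on \rulename{Self} combined with Lemma~\ref{lem:invdefout} and Lemma~\ref{lem:invdefinp}, and \rulename{Int} handled by the induction hypothesis. You merely spell out the bookkeeping that the paper leaves implicit.
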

\begin{proof}
By induction on the derivation of $D\rTo{\mconfig{c}\fun} D'$ where the base case follows by inversion on $\rulename{Self}$ and
by considering Lemma~\ref{lem:invdefout} and Lemma~\ref{lem:invdefinp}.
\end{proof}

\begin{lemma}[Inversion on Network Broadcast Input]
\label{lem:invbroinp}
 If $\net\rTo{\linktr{\id?\star}{\fun}}{}{\net'}$ then $\net\equiv\Pi_{ j\in J}(\node{\state_j}{D_j})$ and 
$\net'\equiv\Pi_{ j\in J}(\node{\state_j}{D_j'})$ and $\forall j\in J$ it is the case
 that either ({i}) $D_j \equiv D_j'' \parop \mconfig{i}\func{\star}{}?.R_j$ and $s_j \models i$ and $i(\state_j) = \id$ and 
 $D_j' \equiv D_j'' \parop \mconfig{i}\func{\star}{}?.R_j \parop R_j$ 
 or ({ii}) $\mathit{discard}(\node{\state_j}{D_j},\linktr{\id?\toall}{\fun})$ and $D_j' \equiv D_j$.
\end{lemma}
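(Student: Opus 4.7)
The plan is to proceed by induction on the derivation of $\net\rTo{\linktr{\id?\toall}{\fun}}{}\net'$, exploiting the fact that only three rules in Table~\ref{tab:netsem} can produce a broadcast input label: the single-node rules $\rulename{iBrd}$ and $\rulename{dBrd}$ at the leaves, and the synchronisation rule $\rulename{Com}$ at internal nodes. Crucially, rule $\rulename{Par}$ is explicitly forbidden for this label by its side-condition $\lambda\not\in\{\linktr{\id!\toall}{\fun},\ \linktr{\id?\toall}{\fun}\}$; this exclusion is what guarantees that every node in $\net$ participates in the broadcast (either reactively or by discard), so that the whole network decomposes as a product $\Pi_{j\in J}(\node{\state_j}{D_j})$ indexed over all its nodes.

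In the two base cases, if $\rulename{iBrd}$ derives the transition, then $\net\equiv\node{\state}{D}$, $\net'\equiv\node{\state}{D'}$, the premise $D\rTo{\mconfig{c}\fun^{\toall}?} D'$ holds, $\state\models c$, and the shape of the rule's conclusion forces $\id=i(\state)$. Applying Lemma~\ref{lem:invdefinp} to the definition transition yields $D\equiv D''\parop \mconfig{c}\fun^{\toall}?.R$ and $D'\equiv D''\parop \mconfig{c}\fun^{\toall}?.R\parop R$, which is exactly case (i) with $J=\{1\}$. If instead $\rulename{dBrd}$ is used, its premise is literally condition (ii) and $\net'\equiv\net$, so case (ii) holds with $J=\{1\}$.

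For the inductive step via $\rulename{Com}$, I have $\net\equiv\net_1\nparop\net_2$, $\net'\equiv\net_1'\nparop\net_2'$, $\net_k\rTo{\lambda_k}\net_k'$ for $k\in\{1,2\}$, and $\gamma(\lambda_1,\lambda_2)=\linktr{\id?\toall}{\fun}$. A case analysis on $\gamma$ shows that the only clause yielding a broadcast input label is the fifth one: the binary clauses produce $\tau$, and the remaining broadcast clauses either produce an enable label or require one side to be an enable. Hence $\lambda_1=\lambda_2=\linktr{\id?\toall}{\fun}$, and both subderivations fall within the induction hypothesis, providing index sets $J_1,J_2$ together with the required component-wise structure. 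Taking $J=J_1\uplus J_2$ and gluing the two products yields the conclusion for $\net$.

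The main obstacle is purely bookkeeping, namely verifying via $\gamma$ that $\rulename{Com}$ cannot combine a $\linktr{\id?\toall}{\fun}$ label with anything other than another $\linktr{\id?\toall}{\fun}$ to produce the same broadcast input, and checking that the exclusion on $\rulename{Par}$ really forces every component of a parallel composition to participate (so no node gets silently skipped, which would break the universal quantification over $j\in J$). Once these two observations are in place, the inversion follows mechanically from Lemma~\ref{lem:invdefinp} at the leaves and from combining the two inductive hypotheses in the step case.
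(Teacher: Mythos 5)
Your proof is correct and follows essentially the same route as the paper, which also argues by induction on the derivation with base cases $\rulename{iBrd}$ (via Lemma~\ref{lem:invdefinp}) and $\rulename{dBrd}$. Your additional analysis of the $\rulename{Com}$ step through the fifth clause of $\gamma$, and of the exclusion of $\rulename{Par}$ for broadcast labels, correctly fills in details the paper leaves implicit.
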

\begin{proof}
By induction on the derivation of $\net\rTo{\linktr{\id?\star}{\fun}}{}{\net'}$, where the base case follows by inversion on $\rulename{iBrd}$
and by considering Lemma~\ref{lem:invdefinp} or by inversion on $\rulename{dBrd}$.
\end{proof}

\begin{lemma}[Inversion on Network Broadcast Output]
\label{lem:invbroout}
If $\net\rTo{\linktr{\id!\star}{\fun}}{}{\net'}$ then $\net\equiv\node{\state}{D\ \parop\ \mconfig{o}\func{\star}{}!+C}\ \|\ \net_1$ 
or $\net\equiv\node{\state}{D\ \parop\ \mconfig{o}\func{\star}{}!+C}$, $\net' \equiv \node{\state}{D}\ \|\ \net_2$ and 
$\net_1\rTo{\linktr{\id?\star}{\fun}}{}\net_2$ or $\net' \equiv \node{\state}{D}$, respectively, $\state\models o$ and $\id(\state) =\id$.
\end{lemma}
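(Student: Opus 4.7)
The plan is to proceed by induction on the derivation of $\net\rTo{\linktr{\id!\star}{\fun}}{}{\net'}$. Since the label is a broadcast output, the side condition of rule $\rulename{Par}$ (namely $\lambda\not\in\{\linktr{\id!\toall}{\fun},\linktr{\id?\toall}{\fun}\}$) rules out $\rulename{Par}$ as the last rule applied, so only rule $\rulename{oBrd}$ (the base case for a single node) and rule $\rulename{Com}$ remain to be analysed.

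For the base case using $\rulename{oBrd}$, we have $\net = \node{\state}{D'}$, $\net' = \node{\state}{D''}$ with $D' \rTo{\mconfig{o}\fun^{\toall}!} D''$, $\state\models o$, and $\id(\state) = \id$ by the rule's premises. Lemma~\ref{lem:invdefout} applied to $D' \rTo{\mconfig{o}\fun^{\toall}!} D''$ yields $D' \equiv D'' \parop \mconfig{o}\fun^{\toall}!\plusop C$, which matches the second disjunct in the conclusion (no parallel $\net_1$ present), with $D := D''$.

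For the inductive case using $\rulename{Com}$, we have $\net = \net_A \nparop \net_B$, $\net' = \net_A' \nparop \net_B'$, with $\net_A \rTo{\lambda_A} \net_A'$, $\net_B \rTo{\lambda_B} \net_B'$, and $\gamma(\lambda_A,\lambda_B) = \linktr{\id!\toall}{\fun}$. Inspecting the definition of $\gamma$, the only cases yielding a broadcast output label are when one of $\lambda_A,\lambda_B$ is $\linktr{\id!\toall}{\fun}$ and the other is $\linktr{\id?\toall}{\fun}$. Without loss of generality (by commutativity of $\nparop$ up to $\equiv$) take $\lambda_A = \linktr{\id!\toall}{\fun}$ and $\lambda_B = \linktr{\id?\toall}{\fun}$. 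The induction hypothesis on $\net_A \rTo{\linktr{\id!\toall}{\fun}} \net_A'$ gives either $\net_A \equiv \node{\state}{D\parop\mconfig{o}\fun^{\toall}!\plusop C} \nparop \net_1'$ with $\net_A' \equiv \node{\state}{D}\nparop \net_2'$ and $\net_1'\rTo{\linktr{\id?\toall}{\fun}}\net_2'$, or $\net_A \equiv \node{\state}{D\parop\mconfig{o}\fun^{\toall}!\plusop C}$ with $\net_A'\equiv\node{\state}{D}$; in both subcases we obtain $\state\models o$ and $\id(\state) = \id$. Recomposing with $\net_B$, we set $\net_1 := \net_1' \nparop \net_B$ (respectively $\net_1 := \net_B$) and $\net_2 := \net_2' \nparop \net_B'$ (respectively $\net_2 := \net_B'$); the required reduction $\net_1 \rTo{\linktr{\id?\toall}{\fun}} \net_2$ follows either from another application of $\rulename{Com}$ (using the reactive step of $\net_1'$ and $\lambda_B = \linktr{\id?\toall}{\fun}$, noting $\gamma(\linktr{\id?\toall}{\fun},\linktr{\id?\toall}{\fun}) = \linktr{\id?\toall}{\fun}$) or directly from the reactive step of $\net_B$ alone.

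The only delicate point is the bookkeeping around $\rulename{Com}$: when the inductive hypothesis produces a ``reacting companion'' $\net_1'$ on the output side, we must merge it with the reacting $\net_B$ into a single $\net_1$ and verify that the resulting composite can still exhibit the reactive broadcast label. This is where the fifth case of $\gamma$ (the one allowing two $\linktr{\id?\toall}{\fun}$ labels to combine into $\linktr{\id?\toall}{\fun}$) is essential, and it is really the only nontrivial aspect of the proof; the remainder is symbol pushing using the inversion lemmas already established.
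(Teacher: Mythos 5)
Your proposal is correct and follows essentially the same route as the paper's own (sketched) proof: induction on the derivation, with the base case handled by inversion on $\rulename{oBrd}$ together with Lemma~\ref{lem:invdefout}, and the $\rulename{Com}$ case resolved by inspecting $\gamma$ and recombining the reacting components (the paper leaves this inductive step implicit, while you spell it out, correctly noting that the fifth case of $\gamma$ is what lets the two $\linktr{\id?\toall}{\fun}$ labels merge).
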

\begin{proof}
By induction on the derivation of $\net\rTo{\linktr{\id!\star}{\fun}}{}{\net'}$, where the base case follows by inversion on $\rulename{oBrd}$
and by considering Lemma~\ref{lem:invdefout}.
\end{proof}

The following result attests reduction is closed under active contexts.

\begin{lemma}[Reduction Closed Under Active Contexts]
\label{lem:redclosurecntxt}
If $\sys{\Delta}{Q}  \rTo{}{} \sys{\Delta'}{Q'}$ then $\sys{\Delta}{\cntxt[Q]} \rTo{}{} \sys{\Delta'}{\cntxt[Q']}$. 
\end{lemma}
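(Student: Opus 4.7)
The proof proceeds by a straightforward induction on the structure of the active context $\cntxt[\cdot]$, with each syntactic production of $\cntxt[\cdot]$ corresponding directly to one of the closure rules of the reduction relation. The base case $\cntxt[\cdot] = \cdot$ is immediate, since $\cntxt[Q] = Q$ and $\cntxt[Q'] = Q'$, so the conclusion is the hypothesis itself.

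For the inductive cases, each production of $\cntxt[\cdot]$ has a matching reduction rule. If $\cntxt[\cdot] = \idop{\id}\cntxt'[\cdot]$ then by the induction hypothesis $\sys{\Delta}{\cntxt'[Q]} \rTo{}{} \sys{\Delta'}{\cntxt'[Q']}$, and an application of $\rulename{Id}$ yields $\sys{\Delta}{\idop{\id}\cntxt'[Q]} \rTo{}{} \sys{\Delta'}{\idop{\id}\cntxt'[Q']}$. The case $\cntxt[\cdot] = \filter{o}{i}\cntxt'[\cdot]$ is handled analogously by $\rulename{Synch}$.

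The cases $\cntxt[\cdot] = P \parop \cntxt'[\cdot]$ and $\cntxt[\cdot] = S \plusop \cntxt'[\cdot]$ require a small additional step because the relevant closure rules $\rulename{Par}$ and $\rulename{Sum}$ are stated with the reducing subterm on the left, whereas in the context the hole sits on the right. Both are easily handled via structural congruence: from the induction hypothesis and the appropriate closure rule we obtain a reduction with $\cntxt'[Q]$ and $\cntxt'[Q']$ on the left, and then we apply rule $\rulename{Struct}$ using the commutativity axioms $P_1 \parop P_2 \equiv P_2 \parop P_1$ and $S_1 \plusop S_2 \equiv S_2 \plusop S_1$ to obtain the intended form. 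This is the only step that might be overlooked, but it presents no real obstacle.

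Overall, no hard case arises: the definition of active contexts has been tailored exactly to the set of closure rules available in Table~\ref{tab:red}, so the induction is essentially mechanical. The only subtlety is the positional mismatch between the hole in the context productions for $\parop$ and $\plusop$ and the reducing subterm in the corresponding reduction rules, which is dispatched by one invocation of $\rulename{Struct}$ with commutativity.
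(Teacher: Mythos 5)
Your proof is correct and follows essentially the same route as the paper's, which simply states that the induction on the structure of $\cntxt[\cdot]$ proceeds ``following expected lines as for each possible active context there is a corresponding reduction rule.'' Your additional observation that the $P \parop \cntxt[\cdot]$ and $S \plusop \cntxt[\cdot]$ cases need one application of $\rulename{Struct}$ with commutativity to reconcile the position of the hole with the left-biased statements of $\rulename{Par}$ and $\rulename{Sum}$ is a valid and welcome detail that the paper leaves implicit.
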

\begin{proof}
By induction on the structure of $\cntxt[\cdot]$ following expected lines as for each possible active context there is a corresponding reduction rule.
\end{proof}

We may now state the completeness of our projection relation, i.e., that behaviours exhibited by the yielded projection have a correspondent
step in the source configuration. The structure of the proof is as follows: first we use the inversion results that allow us to characterise the
structure of the definitions yielded by the projection, namely to identify the interacting input/output definitions. Then, we relate the output (choice)
to a synchronisation action in the protocol (Projection Inversion, Lemma~\ref{lem:projinv}). Since the identified synchronisation action in the 
protocol must have an unique input counterpart in the projection (Synchronisation Action Projection, Lemma~\ref{lem:filterproj}, and Input 
Uniqueness, Lemma~\ref{lem:inpuni}) we obtain the relation of the synchronisation action with the input involved in the interaction. Having 
identified the related synchronisation action we are able to derive the reduction, at which point we proceed to prove the relation between
the arrival configuration with the network resulting from the interaction step. Crucial to this part of the proof is the characterisation of the
Active Node Projection (Lemma~\ref{lem:projactive}) since reduction in configurations, in essence, is a change of active nodes
(cf. the reduction semantics): first the enabling node is active then the reacting nodes are activated. By means of Lemma~\ref{lem:projactive}
we are able to pinpoint that the projection of the enabling node differs only in the presence of the output (choice) before interaction, absent 
after interaction. We are also able to pinpoint that the projection of the reacting node differs only in the activation of the reaction in the
continuation of the input after interaction, not active before interaction. Lemma~\ref{lem:projactive} also allows to conclude that nodes
not involved in the interaction have identical projections before and after the interaction. The precise characterisation allows us to relate 
the network resulting from the interaction step to the projection of the arrival configuration.

\begin{lemma}[Completeness]
\label{lem:completeness}
If $\tproj{\sys{\Delta}{P}} \rTo{\hat{\lambda}}{} \net$ then
$\sys{\Delta}{P} \rTo{}{} \sys{\Delta'}{Q}$ and $\net \equiv \tproj{\sys{\Delta'}{Q}}$.
\end{lemma}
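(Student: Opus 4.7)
The plan is to proceed by case analysis on the observed label $\hat{\lambda}$, which is either $\tau$ or a broadcast $\linktr{\id!\toall}{\fun}$. The $\tau$ case splits further via Lemma~\ref{lem:invnettau} into (a) a binary synchronisation between two nodes and (b) a local step within a single node (rule \rulename{Loc} fired by a \rulename{Self} communication inside that node). The broadcast case is handled uniformly via Lemma~\ref{lem:invbroout} combined with Lemma~\ref{lem:invbroinp}. In every case the proof has two halves: first exhibit a protocol reduction $\sys{\Delta}{P} \rTo{}{} \sys{\Delta'}{Q}$; second check $\net \equiv \tproj{\sys{\Delta'}{Q}}$.

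For the binary subcase, inversion (Lemma~\ref{lem:invnetout} and Lemma~\ref{lem:invnetinp}) extracts an emitting node $\node{\state_1}{D_1 \parop \mconfig{o}\fun^d! \plusop C}$ and a reacting node $\node{\state_2}{D_2 \parop \mconfig{i}\fun^{d}?.R}$ with the directions and identities required by the link label. Since $\Delta$ assigns $\state_1$ to some $\id_1$, by definition of $\tproj{\cdot}$ the output choice $\mconfig{o}\fun^d!$ must come from $\proj{P}{\id_1}{\emptyset}$ (the reactive projection cannot produce outputs by Lemma~\ref{lem:projnormalform}). Applying Projection Inversion (Lemma~\ref{lem:projinv}) yields $P \equiv \cntxt[\idop{\id_1}S]$ with $\proj{S}{!}{\emptyset} \equiv \mconfig{o}\fun^d!\plusop C$, so $S \equiv \filter{o}{i'}P_1 \plusop S'$ for some $i'$, $P_1$, $S'$. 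On the reacting side, the input $\mconfig{i}\fun^d?.R$ occurring in $D_2$ must come from $\proj{P}{?}{\emptyset}$. By Input Uniqueness (Lemma~\ref{lem:inpuni}) the label $\fun$ picks out a single input in that projection, and by Synchronisation Action Projection (Lemma~\ref{lem:filterproj}) this input is exactly the one obtained from $\filter{o}{i'}P_1$; hence $i = i'$ and $R \equiv \proj{P_1}{!}{\emptyset}$. The conditions $\state_1 \models o$ and $\state_2 \models i$ and the direction computation given by inversion match the side conditions of rule \rulename{Bin}, and the state update $\Delta' = \upd(\id_1,\id_2,\fun^d,\Delta)$ is exactly the one the network semantics applies. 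Therefore by Lemma~\ref{lem:redclosurecntxt} we obtain
\[
\sys{\Delta}{\cntxt[\idop{\id_1}(\filter{o}{i}P_1 \plusop S')]} \red \sys{\Delta'}{\cntxt[\filter{o}{i}(\idop{\id_2}P_1) \plusop S']}
\]
and set $Q \equiv \cntxt[\filter{o}{i}(\idop{\id_2}P_1) \plusop S']$.

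For the structural equivalence $\net \equiv \tproj{\sys{\Delta'}{Q}}$, the key tool is Active Node Projection (Lemma~\ref{lem:projactive}). Reactive projection is unchanged by the reduction (Lemma~\ref{lem:outprojred}), so only $\id$-projections need comparing. For every node identifier distinct from $\id_1$ and $\id_2$, Lemma~\ref{lem:projactive} gives identical $\id$-projections before and after, matching the unchanged nodes in $\net$. For $\id_1$, Lemma~\ref{lem:projactive} shows $\proj{P}{\id_1}{\emptyset} \equiv \proj{\cntxt[S]}{\id_1}{\emptyset} \parop \proj{S}{!}{\emptyset}$, and $\proj{Q}{\id_1}{\emptyset}$ loses the $\proj{S}{!}{\emptyset}$ summand, matching precisely the removal of $\mconfig{o}\fun^d!\plusop C$ from the emitting node. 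For $\id_2$, Lemma~\ref{lem:projactive} shows the new summand $\proj{P_1}{!}{\emptyset} \equiv R$, which matches the reaction $R$ that the input unfolds into, up to the persistent input which is absorbed by the structural axiom $\mconfig{c}\fun^d?.R \parop \mconfig{c}\fun^d?.R \equiv \mconfig{c}\fun^d?.R$. The state updates agree by construction. The local ($\rulename{Loc}$) subcase is analogous but with a single node playing both roles via Lemma~\ref{lem:invdeftau}, triggering rule \rulename{Loc}. The broadcast case follows the same template, except that all reacting children are identified by Lemma~\ref{lem:invbroinp}: those that match the input condition get activated (contributing the $\proj{P_1}{!}{\emptyset}$ summand), while the discarded ones (predicate $\mathit{discard}$) correspond exactly to children failing the condition in the premise of rule \rulename{Brd}, whose continuation there is interpreted via $(\tilde{\emptyset})P \deff P$.

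The main obstacle I anticipate is the broadcast case: one must show that the set of children actually activated by rule \rulename{Brd} coincides with the set of children that in the network semantics satisfy the guard of the unique input prefix, and simultaneously that the discarded children introduce no spurious active-node scopes in $Q$. This requires carefully chaining the uniqueness of the input (Lemma~\ref{lem:inpuni}) with the clause of Lemma~\ref{lem:invbroinp} that partitions children into reactive and discarded ones, and then using Lemma~\ref{lem:projactive} iteratively (once per activated child) to rebuild $\tproj{\sys{\Delta}{Q}}$. A secondary subtlety is that the output used in the interaction may sit syntactically inside a nested summation; this is handled by Lemma~\ref{lem:invdefout} which already extracts the summation context $C$ discarded after the step, matching the $\plusop S$ in the reduction rules that preserves the branch structure.
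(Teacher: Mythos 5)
Your proposal is correct and follows essentially the same route as the paper's proof: the same inversion lemmas to expose the interacting output/input, Projection Inversion plus Input Uniqueness and Synchronisation Action Projection to recover the synchronisation action $\filter{o}{i}P_1$ in an active context, rule \rulename{Bin} with Lemma~\ref{lem:redclosurecntxt} to derive the reduction, and Active Node Projection applied node-by-node (enabler, reactor, bystanders) to reassemble $\tproj{\sys{\Delta'}{Q}}$. The only cosmetic deviation is citing Lemma~\ref{lem:outprojred} for invariance of the reactive projection where the paper re-derives it from Lemma~\ref{lem:projactive}; both are valid.
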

\begin{proof}(Sketch)
We detail the proof for the case $\hat{\lambda} = \tau$ (in particular for child to parent binary interaction), remaining cases follow similar lines
(in particular for broadcast interaction which proof builds on Lemma~\ref{lem:invbroinp} and Lemma~\ref{lem:invbroout}).

By Lemma~\ref{lem:invnettau} we have that (1)
$\tproj{\sys{\Delta}{P}}\equiv\net_1 \nparop \net_2$ and $\net_1\rTo{\linktr{\id_1\rightarrow\id_2}{\fun}}{}\net_1'$,
$\net_2\rTo{\linktr{\id_2\leftarrow\id_1}{\fun}}{}\net_2'$, and $\net \equiv \net_1' \nparop \net_2'$ or 
(2) $\tproj{\sys{\Delta}{P}}\equiv\node{\state}{D}\nparop \net''$ or $\net\equiv\node{\state}{D} $, 
$\node{\state}{D}\rTo{\tau}\node{\state}{D'}$ and $\net \equiv \node{\state}{D'}\nparop \net''$ or $\net \equiv \node{\state}{D'}$ respectively. 
We consider (1) and remark that (2) follows similar lines (inversion on $\rulename{Loc}$ 
and the use of Lemma~\ref{lem:invdeftau} are the main differences).

From $\net_1\rTo{\linktr{\id_1\rightarrow\id_2}{\fun}}{}\net_1'$ and considering Lemma~\ref{lem:invnetout} we have that
$\net_1\equiv\node{\state_1}{D_1 \parop \mconfig{o}\func{d}{}! \plusop C} \nparop \net_1''$ or
$\net_1\equiv\node{\state_1}{D_1 \parop \mconfig{o}\func{d}{}!\plusop C}$, $\net_1'\equiv\node{\state_1'}{D_1} \nparop \net_1''$ or 
$\net_1'\equiv\node{\state_1'}{D_1}$ correspondingly, $\state_1'=\fun^d!(\state_1,\id_2)$, $\state_1 \models o$ and $\id(\state_1) =\id_1$. 
Furthermore if $d =\ \toright$ then $\id_2 \in \nei(\state_1)$ and if $d = \toup$ then
$\id_2 = i(\state_1)$. We consider the case when $\net_1\equiv\node{\state_1}{D_1 \parop \mconfig{o}\func{d}{}! \plusop C} \nparop \net_1''$
and $\net_1'\equiv\node{\state_1'}{D_1} \nparop \net_1''$ and when $d = \toup$ hence $\id_2 = i(\state_1)$.

By definition of projection, we have that $\proj{P}{?}{\emptyset} \parop \proj{P}{\id_1}{\emptyset} \equiv D_1 \parop \mconfig{o}\func{\toup}{}! \plusop C$.
Considering Lemma~\ref{lem:projnormalform} we have that there are $D_1'$ and $R_1$ such that $D_1 \equiv D_1' \parop R_1$ and 
$\proj{P}{?}{\emptyset} \equiv D_1'$ and $\proj{P}{\id_1}{\emptyset} \equiv R_1 \parop \mconfig{o}\func{\toup}{}! \plusop C$.
From Lemma~\ref{lem:projinv} we conclude
(\emph{i}) $P \equiv \cntxt[\idop{\id_1}S]$ and $\proj{S}{!}{\emptyset} \equiv \mconfig{o}\func{\toup}{}! \plusop C$,
 hence $S \equiv \ufilter{o}{i}P' \plusop S'$.

From $\net_2\rTo{\linktr{\id_2\leftarrow\id_1}{\fun}}{}\net_2'$ and considering Lemma~\ref{lem:invnetinp} we have that
$\net_2\equiv\node{\state_2}{D_2 \parop \mconfig{i'}\func{d'}{}?.R}\nparop \net_2''$ 
 or $\net_2\equiv\node{\state_2}{D_2 \parop \mconfig{i'}\func{d'}{}?.R}$, and that 
 $\net_2'\equiv\node{\state_2'}{D_2 \parop \mconfig{i'}\func{d'}{}?.R \parop R}\nparop \net_2''$  
 or $\net_2'\equiv\node{\state_2'}{D_2 \parop \mconfig{i'}\func{d'}{}?.R \parop R}$ respectively, and that
 $\state_2'=\fun^{d'}?(\state_2,\id_1)$, $\state_2\models i'$,
 and $\id(\state_2) = \id_2$. We consider the case when 
 $\net_2\equiv\node{\state_2}{D_2 \parop \mconfig{i'}\func{d'}{}?.R}\nparop \net_2''$ 
 and $\net_2'\equiv\node{\state_2'}{D_2 \parop \mconfig{i'}\func{d'}{}?.R \parop R}\nparop \net_2''$.
 
 By definition of projection, we have that $\proj{P}{?}{\emptyset} \parop \proj{P}{\id_2}{\emptyset} \equiv D_2 \parop \mconfig{i'}\func{d'}{}?.R$.
 Considering Lemma~\ref{lem:projnormalform} we have that there are $D_2'$ and $R_2$ such that $D_2 \equiv D_2' \parop R_2$ and 
 $\proj{P}{?}{\emptyset} \equiv D_2' \parop \mconfig{i'}\func{d'}{}?.R$ and $\proj{P}{\id_2}{\emptyset} \equiv R_2$.
 From (\emph{i}) and considering Lemma~\ref{lem:filterproj} we conclude 
$\proj{P}{?}{\emptyset} \equiv D' \parop \mconfig{i}\fun^{\toup}?.{R'}$ and $\proj{P'}{!}{\emptyset} \equiv R'$.
From Lemma~\ref{lem:inpuni} we conclude $i = i'$, $d' = \toup$ and $R \equiv R'$, hence  $\proj{P'}{!}{\emptyset} \equiv R$.

We then have that $\state_1 \models o$, $\id_2 = i(\state_1)$, $\state_2\models i$ and we may obtain $\Delta'$ by considering
$\state_1'=\fun^\toup!(\state_1,\id_2)$ and  $\state_2'=\fun^{\toup}?(\state_2,\id_1)$. Considering Lemma~\ref{lem:redclosurecntxt}
and by $\rulename{Bin}$ we conclude
$\sys{\Delta}{\cntxt[\idop{\id_1}\ufilter{o}{i}P' \plusop S']}  \rTo{}{} \sys{\Delta'}{\cntxt[\ufilter{o}{i}(\idop{\id_2}P') \plusop S']}$.

From the fact that $\net \equiv \net_1' \nparop \net_2'$ we conclude (\emph{ii})
$$\net \equiv \node{\state_1'}{D_1} \nparop \net_1'' \nparop
\node{\state_2'}{D_2 \parop \mconfig{i}\func{\toup}{}?.R \parop R}\nparop \net_2''$$

We recall that $\proj{P}{?}{\emptyset} \parop \proj{P}{\id_1}{\emptyset} \equiv D_1 \parop \mconfig{o}\func{\toup}{}! \plusop C$
and $D_1 \equiv D_1' \parop R_1$ and $\proj{P}{?}{\emptyset} \equiv D_1'$ and $\proj{P}{\id_1}{\emptyset} \equiv R_1 \parop \mconfig{o}\func{\toup}{}! \plusop C$ and that $P \equiv \cntxt[\idop{\id_1}\ufilter{o}{i}P' \plusop S']$. By Lemma~\ref{lem:projactive} we conclude that
$\proj{P}{\id_1}{\emptyset} \equiv \proj{\cntxt[\ufilter{o}{i}P' \plusop S']}{\id_1}{\emptyset} \parop \proj{\ufilter{o}{i}P' \plusop S'}{!}{\emptyset}$
and also that $\proj{P}{?}{\emptyset} \equiv \proj{\cntxt[\ufilter{o}{i}P' \plusop S']}{?}{\emptyset}$. Again by Lemma~\ref{lem:projactive}
and considering $\id_2 \neq id_1$ (and $? \neq \id_2$) we conclude $\proj{\cntxt[\ufilter{o}{i}P' \plusop S']}{\id_1}{\emptyset} 
\equiv \proj{\cntxt[\ufilter{o}{i}(\idop{\id_2}P') \plusop S']}{\id_1}{\emptyset}$
and $\proj{\cntxt[\ufilter{o}{i}P' \plusop S']}{?}{\emptyset} \equiv \proj{\cntxt[\ufilter{o}{i}(\idop{\id_2}P') \plusop S']}{?}{\emptyset}$.
From the latter and considering $\proj{P}{?}{\emptyset} \equiv D_1'$ we conclude $\proj{\cntxt[\ufilter{o}{i}(\idop{\id_2}P') \plusop S']}{?}{\emptyset}
\equiv D_1'$. From $\proj{P}{\id_1}{\emptyset} \equiv R_1 \parop \mconfig{o}\func{\toup}{}! \plusop C$ and
$\proj{P}{\id_1}{\emptyset} \equiv \proj{\cntxt[\ufilter{o}{i}P' \plusop S']}{\id_1}{\emptyset} \parop \proj{\ufilter{o}{i}P' \plusop S'}{!}{\emptyset}$
and $\proj{ \ufilter{o}{i}P' \plusop S'}{!}{\emptyset} \equiv \mconfig{o}\func{\toup}{}! \plusop C$
we conclude $\proj{\cntxt[\ufilter{o}{i}P' \plusop S']}{\id_1}{\emptyset} \equiv R_1$.
Then, by Lemma~\ref{lem:projactive} we conclude  $\proj{\cntxt[\ufilter{o}{i}(\idop{\id_2}P') \plusop S']}{\id_1}{\emptyset} \equiv R_1$
since $\id_2 \neq \id_1$.
We then have that $\proj{\cntxt[\ufilter{o}{i}(\idop{\id_2}P') \plusop S']}{?}{\emptyset} \equiv D_1'$
and $\proj{\cntxt[\ufilter{o}{i}(\idop{\id_2}P') \plusop S']}{\id_1}{\emptyset} \equiv R_1$ and hence, considering $D_1 \equiv D_1' \parop R_1$,
we have that (\emph{iii})
$$\node{\state_1'}{D_1} \equiv \node{\state_1'}{\proj{\cntxt[\ufilter{o}{i}(\idop{\id_2}P') \plusop S']}{?}{\emptyset} \parop \proj{\cntxt[\ufilter{o}{i}(\idop{\id_2}P') \plusop S']}{\id_1}{\emptyset}}$$

We recall that $\proj{P}{?}{\emptyset} \parop \proj{P}{\id_2}{\emptyset} \equiv D_2 \parop \mconfig{i}\func{\toup}{}?.R$ and
$D_2 \equiv D_2' \parop R_2$ and $\proj{P}{?}{\emptyset} \equiv D_2' \parop \mconfig{i}\fun^{\toup}?.{R}$ and 
 $\proj{P}{\id_2}{\emptyset} \equiv R_2$ and that $P \equiv \cntxt[\idop{\id_1}\ufilter{o}{i}P' \plusop S']$.
By Lemma~\ref{lem:projactive}, since $? \neq \id_1$ and $? \neq \id_2$, we conclude 
 $\proj{\cntxt[\idop{\id_1}\ufilter{o}{i}P' \plusop S']}{?}{\emptyset} \equiv \proj{\cntxt[\ufilter{o}{i}P' \plusop S']}{?}{\emptyset}
 \equiv \proj{\cntxt[\ufilter{o}{i}(\idop{\id_2}P') \plusop S']}{?}{\emptyset}$ and hence we have that
 $\proj{\cntxt[\ufilter{o}{i}(\idop{\id_2}P') \plusop S']}{?}{\emptyset} \equiv D_2' \parop \mconfig{i}\fun^{\toup}?.{R}$.
From Lemma~\ref{lem:projactive} we conclude 
$\proj{\cntxt[\ufilter{o}{i}(\idop{\id_2}P') \plusop S']}{\id_2}{\emptyset} \equiv \proj{\cntxt[\ufilter{o}{i}P' \plusop S']}{\id_2}{\emptyset}
\parop \proj{P'}{!}{\emptyset}$. Since $\proj{P'}{!}{\emptyset} \equiv R$ we conclude 
$\proj{\cntxt[\ufilter{o}{i}(\idop{\id_2}P') \plusop S']}{\id_2}{\emptyset} \equiv \proj{\cntxt[\ufilter{o}{i}P' \plusop S']}{\id_2}{\emptyset}
\parop R$ and since $\id_1 \neq \id_2$, again considering Lemma~\ref{lem:projactive}, we conclude
$\proj{\cntxt[\ufilter{o}{i}(\idop{\id_2}P') \plusop S']}{\id_2}{\emptyset} \equiv \proj{\cntxt[\idop{\id_1}\ufilter{o}{i}P' \plusop S']}{\id_2}{\emptyset}
\parop R$. Since  $\proj{P}{\id_2}{\emptyset} \equiv R_2$ we conclude 
$\proj{\cntxt[\ufilter{o}{i}(\idop{\id_2}P') \plusop S']}{\id_2}{\emptyset} \equiv R_2 \parop R$.
We then have that  $\proj{\cntxt[\ufilter{o}{i}(\idop{\id_2}P') \plusop S']}{?}{\emptyset} \equiv D_2' \parop \mconfig{i}\fun^{\toup}?.{R}$
and $\proj{\cntxt[\ufilter{o}{i}(\idop{\id_2}P') \plusop S']}{\id_2}{\emptyset} \equiv R_2 \parop R$,
and hence, considering $D_2 \equiv D_2' \parop R_2$ we conclude (\emph{iv})
$$
\node{\state_2'}{D_2 \parop \mconfig{i}\func{\toup}{}?.R \parop R} \equiv
\node{\state_2'}{\proj{\cntxt[\ufilter{o}{i}(\idop{\id_2}P') \plusop S']}{?}{\emptyset} \parop 
\proj{\cntxt[\ufilter{o}{i}(\idop{\id_2}P') \plusop S']}{\id_2}{\emptyset}}
$$

Considering Lemma~\ref{lem:projactive}, for any $\id'$ such that $\id' \neq \id_1$ and $\id'\neq\id_2$, we conclude
$\proj{P}{?}{\emptyset} \parop \proj{P}{\id'}{\emptyset} \equiv
\proj{\cntxt[\ufilter{o}{i}(\idop{\id_2}P') \plusop S']}{?}{\emptyset} \parop 
\proj{\cntxt[\ufilter{o}{i}(\idop{\id_2}P') \plusop S']}{\id'}{\emptyset}$. Since also the changes in $\Delta'$ are localised to $\state_1'$ and $\state_2'$
we may show that $\net_1''$ and $\net_2''$ may be obtained correspondingly.
Considering (\emph{ii}), (\emph{iii}) and (\emph{iv}) we may then conclude
$\net \equiv \tproj{\sys{\Delta'}{\cntxt[\ufilter{o}{i}(\idop{\id_2}P') \plusop S']}}$.

\end{proof}

\bibliography{biblio}
\end{document}